\newcommand{\ph}{\kw{pair} \kw{height}}
\newcommand{\D}{\mathcal{D}}
\newcommand{\I}{\mathcal{I}}
\newcommand{\E}{\mathcal{E}}
\newcommand{\Vcal}{\views}
\newcommand{\Pcal}{{\cal P}}
\newcommand{\Scal}{{\cal S}}
\renewcommand{\vec}[1]{\mathbf{#1}}
\newcommand{\tocheck}{}
\newcommand{\N}{\ensuremath{\mathcal{N}}\xspace}
\newcommand{\safe}{\kw{safe}}
\newcommand{\dvars}{\kw{Vars}}
\newcommand{\stretchf}{\kw{Str}}
\newcommand{\globalinst}{{\mathcal D}}
\newcommand{\globinst}{\globalinst}
\newcommand{\<}{\langle}
\renewcommand{\>}{\rangle}
\newcommand{\critelement}{*}
\newcommand{\criticalelement}{\critelement}
\newcommand{\critbinding}{\critelement \ldots \critelement}
\newcommand{\sources}{\kw{Srcs}}
\newcommand{\globview}{{\mathcal V}}
\newcommand{\aschema}{{\mathcal S}}
\newcommand{\utility}{Q} 
\newcommand{\minq}{\kw{min}}
\newcommand{\michael}[1]{}
\newcommand{\determinacy}{\kw{Determinacy}}
\newcommand{\defeq}{:=}
\newcommand{\chase}{\kw{Chase}}
\newtheorem{example}{Example}
\newcommand{\unprime}{\kw{UnPrime}}
\newcommand{\true}{\textbf{true}}
\newcommand{\conp}{\textsc{CoNP}}
\newcommand{\sigmatwop}{\Sigma_2^p}
\newcommand{\V}{{\mathcal V}}
\newcommand{\Varb}{{\mathcal V}}
\newcommand{\varb}{\Varb}
\newcommand{\views}{{\mathcal V}}
\newcommand{\view}{\kw{V}}
\newcommand{\util}{Q}
\newcommand{\treatment}{\kw{Trtmnt}}
\newcommand{\patient}{\kw{Patient}}
\newcommand{\patientid}{\kw{pid}}
\newcommand{\treatmentinfo}{\kw{tinfo}}
\newcommand{\age}{\kw{age}}
\newcommand{\myaddress}{\kw{address}}
\newcommand{\treatdate}{\kw{tdate}}
\newcommand{\canondb}{\kw{canondb}}
\newcommand{\adom}{\kw{adom}}
\newcommand{\instance}{{\mathcal I}}
\newcommand{\inst}{\instance}
\newcommand{\forwview}{\kw{ForView}}
\newcommand{\backview}{\kw{BackView}}
\newcommand{\canonv}{\kw{CanView}}
\newcommand{\canonview}{\canonv}
\newcommand{\asecretquery}{p}
\newcommand{\secret}{\asecretquery}
\newcommand{\equivr}{\equiv}
\newcommand{\shuf}{\mu}
\newcommand{\canoncont}{\kw{CanCtxt}}
\newcommand{\sjvars}{\kw{SJVars}}
\newcommand{\svars}{\kw{SVars}}
\newcommand{\kw}[1]{{\mathsf{#1}}\xspace}
\newtheorem{theorem}{Theorem}
\newtheorem{claim}{Claim}
\newtheorem{proposition}{Proposition}
\newtheorem{corollary}{Corollary}
  \newtheorem{definition}{Definition}
\renewcommand{\phi}{\varphi}
\newtheorem{lemma}{Lemma}
\newcommand{\myparagraph}[1]{{\bf #1.}}
\newcommand{\myeat}[1]{}
\title{Balancing expressiveness and inexpressiveness in  view design}
\author{Michael Benedikt and Pierre Bourhis and Louis Jachiet and Efthymia Tsamoura}
\begin{document}

\maketitle
\begin{abstract}
We study the design of data publishing mechanisms
that allow a collection of autonomous distributed datasources to collaborate to support queries.
A common mechanism  for data publishing is via \emph{views}:  functions that expose
derived data to users, usually specified as declarative queries. Our autonomy assumption is
that the views must be on individual sources, but with the intention of supporting integrated
queries.
In deciding what data to expose to users, two considerations must be balanced. The views must be sufficiently expressive
to support queries that users want to ask -- the \emph{utility} of the publishing mechanism.
But there may also be some expressiveness restriction. Here we consider two restrictions,
a \emph{minimal information} requirement, saying that the views should reveal as little as possible while supporting the utility
query, and a \emph{non-disclosure} requirement, formalizing the need  to prevent external users from computing  information that data owners do not want revealed.
 We investigate the problem of designing views that satisfy both an expressiveness and an inexpressiveness requirement,
for views in a restricted declarative language
(conjunctive queries), and for arbitrary views.
\end{abstract}

\section{Introduction} \label{sec:intro}
The value of data is increased when data owners
make their data  available through publicly-accessible
interfaces. The value is magnified even further
when multiple  data owners publish information from related
datasets; this allows users to answer queries that require
linking information across datasources.

But the benefits  of data publishing come with a corresponding
risk of revealing too much.  For example, there may be information
that a data owner wishes to protect, and a user
may be able to infer this information either
from the published data in isolation, or from the
data published by all parties as a whole.
There is thus a need to provide data publishing mechanisms
that are simultaneously expressive enough to be useful -- they enable users
to answer appropriate queries -- while satisfying some expressiveness restriction.

In data publishing much of the  focus
has been on disclosure via the familiar  mechanism of
\emph{views} -- declarative queries whose
output is made available to users as a table. 
In this context the competing requirements on
a publishing mechanism have been primarily studied in isolation.
There is  extensive work on analysis
of the utility of a set of views: namely given a query, can it be answered
using the views, see, e.g. \cite{DBLP:journals/vldb/Halevy01,DBLP:journals/jcss/CalvaneseGLR12}.
There has also been research concerning analysis
of whether a given set of views 
obeys some expressiveness \emph{restriction}. The negation of answerability
is clearly too weak a restriction, since it just
guarantees that on some instance the query answer can not be computed from the view
images. 
A relevant query-based notion of expressiveness  restriction is \emph{data-independent privacy}:
given the views and a set of  ``secret'' queries, check that the secret query answers
can not be computed from the views on any source data. Variants of this problem have been studied in
\cite{dnprivacy,lics16,privacyjair,ijcai19}. Expressiveness restrictions with a similar
flavor have also been studied in the context of ontologies \cite{DBLP:conf/semweb/BonattiS13}.
But the question of whether there is an expressiveness restriction 
for views that does not require the specification of a particular set of secrets,
as well as the question of how one obtains views that satisfy both
expressiveness and inexpressiveness requirements,
has not been considered 
in the context of traditional queries and views, to the best of our knowledge.

\myeat{
In the multi-agent systems community there is 
work concerned with  protocols allowing a set of agents to learn a distribution
on source data
without revealing certain local information (e.g. \cite{sadi}). The notion of security is ``exact and
information-theoretic'', as in this work:  they deal
with getting the exact answer to a query using a deterministic algorithm, but without restricting the computational power used in answering. 
But the setting, unlike in this work,  is propositional.
}

A larger body of work comes from privacy research, considering the design
of mechanisms  achieving a mix of expressiveness (``utility'') and inexpressiveness
(``privacy'') goals. But the focus
is  on 
probabilistic transformations or, more generally, probabilistic
protocols (see e.g. \cite{infotheoreticprivccd,diffpriv,diffprivj}). The guarantees are probabilistic,
 sometimes alternatively or additionally
with computational restrictions on an external party.
\myeat{
Probabilistic protocols iterate rounds of messages,
either broadcast to all owners or point-to-point between data owners. The behavior of the protocol
includes randomization, which allows the mechanisms
to achieve very strong privacy guarantees. In particular
they can achieve guarantees that do not require the specification of a privacy goal in terms of a specific set of 
protected queries. 
For example, in  information-theoretic privacy (e.g. \cite{infotheoreticprivccd})  the goal
is a mechanism that can guarantee  that any data owner ``learns nothing'' (in probabilistic terms) from
a run of the protocol that they cannot learn from the answer to the utility
query in isolation. Research
 in \emph{differential privacy} \cite{diffpriv,diffprivj} has introduced mechanisms
that  guarantee that the probability of an output will not be impacted (up to
some $\epsilon$ that is a parameter to the analysis) by the presence or absence
of any tuple in the database.  Roughly speaking, the attacker will learn nothing
about any ground fact in the database.
}
Recent efforts \cite{pricingprivate} have considered a family of mechanisms that 
look at database queries, with the utility of a mechanism defined (as in our case)
using the notion of query determinacy. But randomness still plays a central
role in the mechanism and in the definition of privacy.

In contrast, we consider the question of designing views that use \emph{traditional database queries},
with no randomization,  so that
 conflicting requirements of expressiveness and inexpressiveness are satisfied. Both our expressiveness
and inexpressiveness requirements will be given in terms of \emph{exact information-theoretic criteria}:
they will be defined in terms of what queries can be answered exactly (as opposed
to probabilistically)  by a party with unlimited computation power. Due both to the difference in the mechanisms
we consider and the requirements we impose,
our contribution has a very different flavor  from  prior lines of work.
It also differs from work on secure querying over distributed data \cite{jenniesecurequery}.
There the goal is to support ad hoc querying of traditional database queries in the presence
of privacy restrictions; but they allow the use of encryption as a query language primitive.

\begin{example} \label{ex:run}
A health study hosted by a government agency holds information
about certain treatments, with an abstraction of the data being a database
with schema $\treatment(\patientid, \treatmentinfo, \treatdate)$, where $\patientid$
might be a national insurance number.

Demographic information about patients is stored by another agency,
in a table $\patient(\patientid, \age, \myaddress)$.
The agencies are completely autonomous, perhaps even in distinct administrative regions.
But  they want to co-operate to support certain queries over the data that legitimate
researchers might wish; for example about the relationships between treatment and age:
\begin{flalign*}
Q(\treatmentinfo, \age)  = \\
\exists \patientid, \myaddress, \treatdate.
\treatment(\patientid, \treatmentinfo, \treatdate) \\
\wedge \patient(\patientid, \age, \myaddress)
\end{flalign*}

Of course, the parties could agree
to an encryption schema on the patient identifiers, and then
expose encrypted versions of their local schemas.  But
this would require both strong co-operation of the parties,
and the use of views beyond traditional database queries.


But assuming that the parties are restricted to using traditional queries,
what is the most restrictive thing that they can do while supporting the ability to answer $Q$?
Intuitively, the most restrictive views would correspond to  one party revealing
 the projection of the $\treatment$ table on $\patientid$ and $\treatmentinfo$ and the other
revealing the projection of the $\patient$ table on $\patientid$
and $\age$.

If this intuition is correct, it would imply that nothing the parties can do with
traditional queries can avoid  revealing which patients had particular treatments. That is,
they have no choice but to allow an external party to learn the answer to query
\[
\secret = \exists \treatdate. \treatment(\patientid, \treatmentinfo, \treatdate)
\]

Our results will validate this obvious answer -- in this example, the projections described above are the
CQ views that reveal minimal information, and one can not support the disclosure of the join query while
protecting a query on these join attributes.
 Further we will show that even using encryption --
indeed, even using any deterministic function -- the parties can not not reveal less information  while
supporting exact answering of the query $Q$.
In contrast,  we will also show that in some cases the parties
can obtain combinations of expressiveness  and inexpressiveness requirements
by using counter-intuitive view combinations.
\end{example}

Our goal here is to look at the problem of designing 
independent views over multiple relational datasources that satisfy both expressiveness and
inexpressiveness requirements.
Our expressiveness requirement (usefulness) will be phrased in terms of the ability to answer  a 
relational query, where answering is the traditional deterministic notion used in database theory and knowledge representation.
For expressiveness limitations we require the views to be
useful but to \emph{minimize information} within a class of views.
We also consider an  expressiveness limitation specified by  \emph{non-disclosure} of 
a  set of \emph{secret queries}. Our contributions include formalizing these notions,
characterizing when minimal information views exist and what form they take, and determining
when views exist that satisfy utility and expressiveness limitations. We look at these problems
both for views given in the standard view language of conjunctive queries, and for arbitrary views.
We also consider the impact of background knowledge on these problems.

\myparagraph{Contributions}
We will not be able to present a full picture of the view design problem in this
work. We will deal only with the case of utility and secret queries given as conjunctive queries (CQs),
the analogs of  SQL basic SELECTs, and deal only with 
restricted source integrity constraints.  But we believe our results suffice to  show that our  formulation captures an important
trade-off  in schema design, and that query language
expressiveness issues come to the fore. In technical terms, we make the following contributions:

\begin{compactitem}
\item We formalize the idea of balancing expressiveness 
and inexpressiveness in distributed views, via the notions of useful distributed views 
as well as   ``minimal information'' requirements among these views.
\item We show that there are useful views with minimal information among the
set of  CQ views, and also among the set of all views, but that these may differ.
\item In contrast to the above, we show that to obtain useful views with minimal information
 we do not need to go that far beyond CQ views: it suffices
to use views defined either
using  unsafe disjunction of CQ views, or in relational algebra.
\item We show that the above results extend to the presence
of background knowledge that is local to each source.
\item We examine the impact of background knowledge that relates multiple sources,
looking at the simplest kind of relationship, the replication of a relation across
sources. We show that we may no longer have useful views with minimal information,
but we can get useful views that are minimal in terms of the secrets they disclose.
In the process we show that replication can be exploited to allow
a view designer to reveal certain queries while not disclosing others.
\end{compactitem}

A diverse set of techniques are introduced to study these problems.
For investigating the  use of CQ views to satisfy both expressiveness
and inexpressiveness restrictions,
 we employ an analysis of the chase proofs that witness  determinacy.
For studying the use of arbitrary views, both in the absence of background knowledge
and in the presence of only local background knowledge, we use
a Myhill-Nerode  style characterization of
when two local sources are interchangeable in terms of their impact on a utility CQ, and
then we relate
this characterization to certain partial symmetries of the utility queries (``shuffles'').
Our analysis of the impact of replication constraints relies on a product
construction, which is the key to allowing us to generate useful views
that  disclose the minimal number of secrets.

\myparagraph{Organization}
Section \ref{sec:prelims} gives database and logic preliminaries,
and then goes on to formalize our expressiveness and inexpressiveness requirements.
Section \ref{sec:cqviews} deals with the variant of the problem where the only
views considered are conjunctive query views, while Section \ref{sec:arbviews}
shows how the situation changes when arbitrary views can be utilized.
Section \ref{sec:localconstraints}   contains extensions  when 
background knowledge local to a source is present, while
Section \ref{sec:replication} consider background knowledge about connections
across sources.
We close with conclusions in Section \ref{sec:conc}.
Full proofs are deferred to the appendix.

\section{Preliminaries} \label{sec:prelims}

\subsection{Basic definitions}
The bulk of this subsection reviews standard definitions from databases and knowledge
representation.
But it includes two notions,  DCQs and distributed schemas, that
are less standard.

\myparagraph{Databases and queries}
A \emph{schema} consists of a finite set of relations, each with an associated
arity  (a non-negative number).
An \emph{instance} of a schema is an assignment of each
relation in the schema of arity $n$ to a collection (finite or infinite) of $n$-tuples of elements.
Given an instance $\inst$ and a relation $R$, we let $\inst(R)$ be the $n$-tuples
assigned to $R$ in $\inst$.
The \emph{active domain} of an instance $\inst$, denoted $\adom(\inst)$, is the set of elements occurring
in some tuple of some $\inst(R)$. 
A \emph{fact} consists of a relation $R$ of arity $n$
and an $n$-tuple $\vec t$.  We write such a fact as $R(\vec t)$.
An instance can equivalently be thought of as a set of facts.

An $n$-ary \emph{query} is a function from instances of a fixed schema $\aschema$ to some set.
We refer to $\aschema$ as the \emph{input
schema} of  the query.
A \emph{Conjunctive Query} (CQ) is a formula of the form
${\exists \vec y ~ \bigwedge_i A_i}$
where $A_i$ are atoms 
from the schema.
A Boolean CQ (BCQ) is a CQ with no free variables.  
Following  the notation used in several places in the literature (e.g. \cite{arenasucq}) we define
a \emph{union of CQs} (UCQ) to be a disjunction of CQs satisfying
the \emph{safety condition} where the free variables of each disjunct are the same.
Disjunctions of CQs
where the safety condition is dropped will play a key role in this paper.
The terminology for such queries is less standardized, but we refer to them as \emph{disjunctions of CQs} (DCQs).
UCQs can be extended to
\emph{relational algebra}, the standard algebraic presentation of first-order relational queries:
queries are build up from relation symbols by union, difference, selection, and product \cite{AHV}.

For a logical formula $\rho$ with free variables ${x_1, \dots, x_n}$ and an instance $\inst$,
a \textit{variable binding} $\sigma$ for $\rho$ in $\inst$ is a mapping taking each $x_i$ to an element of $\adom(\inst)$. 
We can apply
$\sigma$ to $\rho$ to get a new formula $\sigma(\rho)$ where each $x$ is replaced
by $\sigma(x)$. Assuming an ordering of the free variables as ${x_ 1, \dots, x_n}$ we may identify
a $k$-tuple $\vec t$,
writing $\rho(\vec t)$ to mean that $t_i$ is substituted for $x_i$ in $\rho$.

A \emph{homomorphism} between instances $\inst_1$ and $\inst_2$ is a function $f$
from $\adom(\inst_1)$ to $\adom(\inst_2)$ such that ${R(c_1, \dots, c_m) \in \inst_1}$ implies
${R(f(c_1), \dots, f(c_m)) \in \inst_2}$.
The notion of homomorphism from a CQ to an instance and
from a CQ to a CQ is defined similarly.
In the case of
CQ-to-CQ homomorphisms we additionally require that the mapping be the identity on any free variables or constants.
The \emph{output} of a CQ  $Q$ on an instance $\inst$, denoted $Q(\inst)$ consists of  the restrictions
to free variables of $Q$ of the homomorphisms of $Q$ to $\inst$. The output of a UCQ is defined
similarly.
We can choose an ordering of the free variables of  $Q$,
and can then say that the output of $Q$ on $\inst$ consists of $n$-tuples. We write
$\inst, \vec t  \models Q$ 
for an $n$-tuple $\vec t$, if ${\vec{t} \in Q(\inst)}$. 
We analogously define ${\inst, \sigma \models Q}$ for a variable binding $\sigma$. 
\myeat{A CQ with
$n$ free variables thus defines an $n$-ary query, and similarly for
a UCQ, and more generally any logical formula with $n$ free variables.}
We sometimes refer to a homomorphism of a BCQ into an instance
as a \emph{match}.
For a logical formula $\rho(\vec x)$ and a tuple of elements $\vec t$,  $\rho(\vec t)$
denotes the formula where each $x_i$ is substituted with $t_i$.

A CQ $Q_0$ is a \emph{subquery} of a CQ $Q$ if the atoms of $Q_0$ are a subset
of the atoms of $Q$ and a variable of $Q_0$ is free in $Q_0$ if and only if it is free in $Q$.
A \emph{strict subquery} of $Q$ is a subquery of $Q$ that is not $Q$ itself; $Q$ is \emph{minimal} if there is no homomorphism from $Q$ to a strict subquery of $Q$.

A \emph{view} over a schema $\aschema$ consists of an $n$-ary relation $\view$ and a corresponding $n$-ary
query $Q_\view$ over relations from $\aschema$. Given a collection of views $\views$ and instance
$\inst$,  the \emph{view-image} of $\inst$, denoted $\views(\inst)$ is the instance
that interprets each $\view \in \views$ by $Q_\view(\inst)$.
We thus talk about \emph{CQ views},
\emph{UCQ views}, etc: views defined by formulas within a class.

\myparagraph{Distributed data and views}
A  \emph{distributed schema} (d-schema) $\aschema$ consists of
a finite set of \emph{sources} $\sources$, with
each source $s$  associated with a \emph{local schema}
$\aschema_s$. We assume that the relations in distinct  local schemas are pairwise disjoint.
In Example \ref{ex:run} our distributed schema consisted of two sources, one containing
$\treatment$ and the other containing $\patient$.
A \emph{distributed instance} (d-instance) is an instance of a distributed schema.
For a source $s$, an $s$-instance is an instance of the local schema $\aschema_s$.
Given a d-instance $\globinst$, we denote by $\globinst_s$ the restriction of $\globinst$ to relations
in $s$. 
If d-schema $\aschema$ is the disjoint union of $\aschema_1$ and $\aschema_2$,
and we have  sources $\inst_1$ for $\aschema_1$ and $\inst_2$ for $\aschema_2$, then
we use $(\inst_1, \inst_2)$ to denote the union of $\inst_1$ and $\inst_2$, which is
an instance of $\aschema$.

For a given d-schema a \emph{distributed view} (d-view) $\globview$ is an assignment
to each source $s$ of a finite set $\globview_s$ of views over its local schema.
Note that here is our ``autonomy'' assumption on the instances: we are free to design
views on each local source, but views can not cross sources.
We can similarly talk about CQ-based d-views, relational algebra-based d-views, etc.

\myparagraph{Existential Rules} Many semantic relationships between
relations can be described using \emph{existential rules}, with this paper focusing
on the variant of existential rules that  are
also called \emph{Tuple Generating Dependencies}. These are
logical
sentences of the form $\forall \vec x. \lambda  \rightarrow
  \exists \vec y. \rho$, where $\lambda$ and
$\rho$ are conjunctions of relational atoms.
The notion of a formula $\rho$
\emph{holding} in $\inst$ (or $\inst$ \emph{satisfying} $\rho$, written ${\inst \models
\rho}$) is the standard one in first-order logic. 
A \emph{trigger} for $\tau$ in $\inst$ is a homomorphism
$h$ of $\lambda(\vec x)$ into $\inst$. Moreover, a trigger $h$ for $\tau$ is  \emph{active}
if no extension of $h$ to a homomorphism of $\rho(\vec x,\vec y)$ into $\inst$
exists.
Note that a dependency $\tau$ is satisfied in $\inst$ if there does not exist an active trigger
for $\tau$ in $\inst$.

\myeat{
Let $\Sigma$ be a set of rules, $\inst$ be a finite
instance, and $Q$ be a BCQ.  We write ${\inst \wedge \Sigma \models Q}$ to mean that
every instance containing $\inst$ and satisfying $\Sigma$ also satisfies $Q$.
For two CQs $Q$ and $Q'$, we similarly write ${Q \wedge \Sigma \models Q'}$ to mean
that every instance satisfying $Q$ and $\Sigma$ satisfies $Q'$.
}
\myparagraph{The chase}
The results in Section \ref{sec:cqviews} will make use
of the characterization of logical entailment between CQs in the presence
of rules in terms of the chase procedure \cite{chaseorig,fagindataex} which we 
review here.
The chase modifies an instance by a sequence of \emph{chase steps} until all
dependencies are satisfied. Let $\inst$ be an instance,  and consider a rule
 $\tau=\forall \vec x. \lambda  \rightarrow \exists \vec y. \rho$.
 Let  $h$ be a
trigger for $\tau$ in $\inst$. Performing a chase step for
$\tau$ and $h$ to $\inst$ extends $\inst$ with each facts of the conjunction ${h'(\rho(\vec
x,\vec y))}$, where $h'$ is a substitution such that ${h'(x_i) = h(x_i)}$ for
each variable ${x_i \in \vec x}$, and $h'(y_j)$, for each ${y_j \in \vec y}$,
is a fresh labeled null that does not occur in $\inst$. 

For $\Sigma$ a set of existential rules  and $\inst$ an instance a
\emph{chase sequence} for $\Sigma$ and $\inst$ is a (possibly infinite) sequence
${\inst_0, \inst_1, \ldots}$ such that ${\inst = \inst_0}$ and, for each ${i > 0}$, instance
$\inst_i$ if it exists
is obtained from $\inst_{i-1}$ by applying a successful chase
step to a dependency ${\tau \in \Sigma}$ and an active trigger $h$ for $\tau$
in $\inst_{i-1}$. The sequence must be \emph{fair}: for each ${\tau \in \Sigma}$,
each ${i \geq 0}$, and each active trigger $h$ for $\tau$ in $\inst_i$, some ${j >
i}$ must exist such that $h$ is not an active trigger for $\tau$ in $\inst_j$
The \emph{result} of a
chase sequence is the (possibly infinite) instance $\inst_\infty = \bigcup_{i \geq
0}  \inst_i$. 
We use $\chase_\Sigma(\inst)$
to denote the result of any chase sequence for $\Sigma$ on $\inst$.

A finite chase sequence is \emph{terminating}. A set of dependencies $\Sigma$
\emph{has terminating chase} if, for each finite, instance $\inst$, each
chase sequence for $\Sigma$ and $\inst$ is terminating. For such $\Sigma$, the
chase provides an effective approach to testing if  $\inst \wedge \Sigma \models Q$:
we compute (any) chase $\inst_\infty$ for $\Sigma$ and $\inst$ and check if $Q$ holds \cite{fagindataex}.
We can similarly test if $Q \wedge \Sigma \models Q'$ by chasing $\canondb(Q)$ with $\Sigma$
and then checking $Q'$.
Checking if a set of dependencies $\Sigma$ has terminating chase
is undecidable \cite{chaserevisited}. \emph{Weak acyclicity} \cite{fagindataex}
was the first sufficient polynomial-time condition for checking if $\Sigma$ has
terminating chase. Stronger sufficient (not necessarily polynomial-time)
conditions have been proposed subsequently \cite{acyclicity,onet}.

\subsection{Problem formalization}
We now give the key  definitions in the paper, capturing our expressiveness
requirements (``usefulness'')  and expressiveness limitations
(``minimally informative'' and ``non-disclosing'').
Our expressiveness requirement is via the notion of determinacy \cite{NSV}, formalizing
the idea that on any instance
there is sufficient information in the views to recapture the query.

\begin{definition}
Two d-instances $\globinst$ and $\globinst'$ are \emph{indistinguishable}
by a d-view $\views$ (or just $\views$-indistinguishable)
if ${V(\globinst)=V(\globinst')}$ holds, for each view ${V \in \views}$.
\end{definition}
Since each view $V \in \globview_s$ is defined over relations occurring only in $\globview_s$,
we can equivalently say that $\globinst$ and $\globinst'$ are $\views$-indistinguishable if  
${V(\globinst_s)=V(\globinst'_s) }$ holds, for each ${V \in \views_s}$. 

\begin{definition}
A d-view $\views$ \emph{determines} a query $Q$ 
\emph{at} a d-instance $\globinst$ if ${Q(\globinst')=Q(\globinst)}$ holds, 
for each  $\globinst'$ that is $\views$-indistinguishable from $\globinst$. 

The d-view $\globview$ is \emph{useful} for $Q$ if 
$\globview$ determines $Q$ on every d-instance (for short, just ``$\globview$ determines $Q$'').
\end{definition}
Usefulness for a given query $Q$ will be our expressiveness requirement on d-views. Our first inexpressiveness requirement captures the idea that we want to reveal as little as possible:

\begin{definition}[Minimally informative useful views]
Given a class of views $\mathcal C$ and a query $Q$, we say that
a d-view $\views$ is  \emph{minimally informative useful d-view for $Q$ within $\mathcal C$}
(``Min.Inf. d-view'')
if $\views$ is useful for $Q$ and for any other d-view $\views'$ useful
for $Q$ based on views in $\mathcal C$,
$\views'$ determines the view definition of each view in $\views$.
\end{definition}

We look at another inexpressiveness requirement that requires an external party to not learn
about another query.
\begin{definition}[Non-disclosure]
A \emph{non-disclosure function} specifies, for each query $\secret$, the  set of
d-views $\views$ that are said to disclose $\secret$. We require such a function $F$ to be \emph{determinacy-compatible}:
if $\views_2$ discloses $\secret$ according to $F$ and $\views_1$ determines each view in $\views_2$, then
$\views_1$ also discloses $\secret$ according to $F$.  If $\views$ does not disclose
$\secret$, we say that $\views$  is non-disclosing for $\secret$ (relative to the given
non-disclosure function).
\end{definition}
When we can find minimally informative useful d-views, this tells us something about non-disclosure,  since
it is easy to see that if we are looking to design views that
are useful and non-disclosing, it suffices
to consider  minimally informative views, assuming they exist:
\begin{proposition} \label{prop:minprivate}
Suppose $\views$ is a minimally informative useful d-view for $\util$ within $\mathcal C$, and
there is a d-view based on views in $\mathcal C$ that is useful for $\util$ and non-disclosing for $\secret$ according
to non-disclosure function $F$. Then $\views$ is useful for $\util$ and non-disclosing for $\secret$ according
to $F$.
\end{proposition}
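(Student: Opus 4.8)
The plan is to split the conclusion into its two parts. Usefulness of $\views$ for $\util$ is immediate: it is already part of the hypothesis that $\views$ is a minimally informative useful d-view for $\util$ within $\mathcal{C}$, and usefulness is one of the defining conditions of that notion. So the entire content of the claim lies in showing that $\views$ is non-disclosing for $\secret$ according to $F$.

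To establish this, I would first name the witness supplied by the second hypothesis: let $\views'$ be a d-view based on views in $\mathcal{C}$ that is useful for $\util$ and non-disclosing for $\secret$ according to $F$. Since $\views$ is minimally informative useful for $\util$ within $\mathcal{C}$, and $\views'$ is a useful d-view for $\util$ built from views in $\mathcal{C}$, the definition of minimal informativeness applies to $\views'$ and yields that $\views'$ determines the view definition of each view in $\views$.

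The final step is a single application of determinacy-compatibility, used in contrapositive form. Determinacy-compatibility states that whenever a d-view $\views_2$ discloses $\secret$ and a d-view $\views_1$ determines each view in $\views_2$, then $\views_1$ discloses $\secret$ as well; equivalently, if $\views_1$ is non-disclosing and $\views_1$ determines each view in $\views_2$, then $\views_2$ is non-disclosing. I would instantiate this with $\views_1 \defeq \views'$ and $\views_2 \defeq \views$. We have just shown that $\views'$ determines each view in $\views$, and $\views'$ is non-disclosing for $\secret$ by choice; hence $\views$ is non-disclosing for $\secret$, which completes the proof. Equivalently, one argues by contradiction: if $\views$ disclosed $\secret$, determinacy-compatibility would force the more informative $\views'$ to disclose $\secret$, contradicting its choice.

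The argument is a direct unwinding of the definitions, so there is no genuine technical obstacle; the proof is a routine chaining of three implications. The one point that requires care is the direction in which information content and disclosure interact. Minimal informativeness places $\views$ at the bottom of the information order -- every useful competitor in $\mathcal{C}$ determines it -- while determinacy-compatibility propagates disclosure \emph{upward} in that order. Consequently it is exactly the least-informative useful view that inherits non-disclosure from any non-disclosing competitor, and one must match $\views$ to $\views_2$ (the disclosing side of the compatibility condition) rather than to $\views_1$ when invoking it; reversing these roles would prove nothing.
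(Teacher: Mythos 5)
Your proof is correct and is exactly the argument the paper has in mind: the paper states this proposition as an immediate consequence of the definitions (it offers no separate proof in the appendix), and your chaining of minimal informativeness with the contrapositive of determinacy-compatibility, with $\views'$ in the role of $\views_1$ and $\views$ in the role of $\views_2$, is that intended one-liner spelled out. Your closing remark about matching $\views$ to the disclosing side of the compatibility condition is also the right point of care, and nothing is missing.
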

There are many disclosure functions that are determinacy-compatible.
But in our examples, our complexity results,
and in Section \ref{sec:replication}, we will focus  on a specific non-disclosure function, whose intuition
is that an external party ``never infers any answers''.

\begin{definition}
A d-view $\globview$ is
\emph{universal non-inference non-disclosing} (UN non-disclosing) for a CQ $\secret$  if for each instance $\inst$
and each tuple $\vec t$ with $\inst,  \vec t \models  \secret$,
 $\globview$ does not determine $\secret(\vec t)$ at $\inst$.
Otherwise $\globview$ is said to be \emph{UN disclosing} for $\secret$.
\end{definition}
This non-disclosure function is clearly determinacy-compatible, so Proposition \ref{prop:minprivate} will apply to it. Thus we will be able to utilize UN non-disclosure as a means of showing
that certain d-views are \emph{not} minimally informative.

\myparagraph{Variations:  background knowledge, and finite instances}
All of these notions can be additionally parameterized by background knowledge
$\Sigma$, consisting of integrity constraints in some logic.
Given d-instance $\globinst$ satisfying $\Sigma$ and
a d-view $\globview$, $\globview$ \emph{determines} a query $Q$ over the d-schema
at $\globinst$ relative to $\Sigma$ if:
for every  $\globinst'$ \emph{satisfying $\Sigma$} that is $\globview$-indistinguishable from $\globinst$,
$Q(\globinst')=Q(\globinst)$.
We say that $\globview$ is \emph{useful} for a query $Q$ relative to $\Sigma$ if
it determines $Q$ on every d-instance \emph{satisfying $\Sigma$}.
We say $\globview$  is  
UN non-disclosing for query $\secret$ with
respect to $\Sigma$ if
 $\globview$ does not determine $\secret$ \emph{on any d-instance satisfying $\Sigma$}.

By default, when we say ``every instance'', we mean all instances, finite or infinite.
There are variations of this problem requiring the quantification in both non-disclosure
and utility to be over finite instances.

\myparagraph{Main problem}
We  focus on the problem  of determining whether minimally informative useful d-views exist for a given query $\util$ and class $\mathcal C$, and characterizing
such views when they do exist.
When minimally informative useful d-views do not exist, we  consider the problem of obtaining
a d-view that is useful for $\util$ and which minimizes the set of secrets $\secret$ 
that are UN disclosed for
$\secret$. 
We refer to $\util$ as the
\emph{utility query}, and $\secret$ as the \emph{secret query}.

\myparagraph{Discussion}
Our notion of utility of views is \emph{information-theoretic and exact}: a view is useful if  a party
with access to the view can compute the exact output of the query (as opposed to the correct
output with high probability),  with no limit on how difficult the computation may be.
 The generality of this notion  will make our negative results stronger. And 
it turns out the generality will not limit
our positive results, since these will be realized by very simply views.

Our notion of minimally informative is likewise natural if one seeks an ordering  on sets of views
measuring the  ability to support exact information-theoretic query answering.
Our query-based inexpressiveness notion, non-disclosure, gives a way of seeing the impact of minimally informative useful views
on protecting information, and
 it is also based on
information-theoretic and  exact notions.
We exemplify our notion of non-disclosure function with UN non-disclosure, which has been studied in  prior work 
under several different names \cite{dnprivacy,lics16,privacyjair,ijcai19}.
 We choose the name ``non-disclosing'' rather than ``private'' for all our query-based expressiveness
restrictions,
since they are clearly very  different
from  more traditional probabilistic privacy guarantees \cite{diffprivj}.
On the one hand the UN non-disclosure guarantee is weak in that $\secret$ is considered safe for $\views$ (UN non-disclosed) if an attacker
can never infer that $\secret$ is true with absolutely certainty. Given a distribution on source instances,
the information in the views may still increase the likelihood that $\secret$ holds.
On the other hand, the notion is quite strong in that it must hold on \emph{every} source instance.
Thus, although we do not claim that this captures all intuitively desirable
properties of privacy, we do feel that it allows us to explore the ability to create views that
simultaneously support the strong ability to answer
certain queries in data integration and the strong inability to answer other queries.

\myparagraph{Restrictions and simplifications}
Although the utility and non-disclosure definitions make sense for any queries,
\emph{in this paper we will assume that  $\util$ and $\secret$ are BCQs without constants (abusing notation
by dropping ``without constants'')}.
While we restrict to the case of a single utility query and secret query here,
\emph{all of our results have easy analogs for a finite set of such queries}.
\tocheck

\subsection{Some tools}
Throughout the paper we rely on two basic tools.

\myparagraph{Canonical views}
Recall that we are looking for views that are useful for answering 
a CQ $\utility$  over a d-schema. The ``obvious''
set of views to try are those obtained by partitioning
the atoms of $\utility$ among sources, with the free variables of the views including
the free variables of $\utility$ and the variables occurring in atoms from different sources.

Given a CQ $Q$ over a d-schema, and a source $s$,
we denote by $\svars(s,Q)$ the variables of $Q$ that appear in an atom from source $s$. 
We also denote by $\sjvars(s,Q)$, the ``source-join variables of $s$'' in $Q$:
the variables that are in $\svars(s,Q)$ and which also occur in
an atom of another source.

\begin{definition}
The \emph{canonical view of $Q$} for source $s$, $\canonview^s(Q)$, has a view definition formed
by conjoining all $s$-source atoms in $Q$ and then
existentially quantifying all bound variables of $Q$ in $\svars(s,Q) \setminus \sjvars(s,Q)$. The \emph{canonical d-view of $Q$} is formed by taking the canonical view for each source.
\end{definition}

In Example \ref{ex:run}, the canonical d-view is what we referred to as ``the obvious
view design''. It would mean that one source
has a  view exposing $\exists \treatdate ~ \treatment(\patientid, \treatmentinfo, \treatdate)$ -- since
$\patientid$ is a source-join variable while $\treatmentinfo$ is a free variable of $Q$.
The other source should expose a view revealing
${\exists \myaddress. \patient(\patientid, \age, \myaddress)}$,
since $\myaddress$ is neither a free variable nor shared across sources.

\myparagraph{The critical instance}
In the definition of UN non-disclosure of a query by a set of views, we required that on \emph{any}
instance of the sources, a user who has access to the views
cannot reconstruct the answer to the query $\secret$.
An instance that will be helpful in several examples is the 
following ``most problematic'' instance \cite{marnettecritinst}.

\begin{definition}
The \emph{critical instance} 
of a schema $\aschema$
is the instance whose
active domain consists of a single element  $\criticalelement$ and whose facts are 
$R(\criticalelement, \ldots, \criticalelement)$ for all relational names 
$R$ in $\aschema$. 
\end{definition}
Note that every BCQ
over the relevant relations holds
on the critical instance of the source.
The  critical instance is the hardest instance for UN non-disclosure
in the following sense:
\begin{theorem} \cite{lics16,privacyjair}
\label{thm:critinst}
Consider any 
CQ views $\V$, and any BCQ
$\secret$. 
If $\secret$ is determined by $\V$ at some instance of the source schema
then it is determined by $\V$ at the critical instance.
\end{theorem}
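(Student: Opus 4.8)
The plan is to argue the contrapositive: assuming $\V$ does not determine $\secret$ at the critical instance $\critinstance$, I show that $\V$ fails to determine $\secret$ at every instance $\inst$ with $\inst\models\secret$. Note first that for a CQ view $V$ one has $V(\critinstance)=\{(\critelement,\dots,\critelement)\}$: the only available element is $\critelement$, the map sending all variables of $V$ to $\critelement$ is a homomorphism because every all-$\critelement$ fact is present, and it is the only one. Since $\critinstance\models\secret$ (every BCQ holds on $\critinstance$), determinacy of $\secret$ at $\critinstance$ is exactly the assertion that every instance $\V$-indistinguishable from $\critinstance$ also satisfies $\secret$; correspondingly, the hypothesis I use is that $\secret$ is determined at an instance where it holds, which is the situation arising in the definition of UN non-disclosure. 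Failure of determinacy at $\critinstance$ therefore yields an instance $\J$ with $V(\J)=V(\critinstance)=\{(\critelement,\dots,\critelement)\}$ for all $V\in\V$, yet $\J\not\models\secret$.

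The core of the argument is a direct-product construction. Given an arbitrary $\inst$ with $\inst\models\secret$, I form $\inst\times\J$ on domain $\adom(\inst)\times\adom(\J)$, with fact $R((a_1,b_1),\dots,(a_n,b_n))$ present exactly when $R(a_1,\dots,a_n)\in\inst$ and $R(b_1,\dots,b_n)\in\J$. Two standard facts about CQs and products drive everything. First, a CQ distributes over the product: $V(\inst\times\J)=\{((a_1,b_1),\dots,(a_k,b_k)) : (a_1,\dots,a_k)\in V(\inst),\ (b_1,\dots,b_k)\in V(\J)\}$, obtained by pairing a match in $\inst$ with a match in $\J$, and conversely by composing a match in the product with the two coordinate projections (which are homomorphisms). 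As $V(\J)=\{(\critelement,\dots,\critelement)\}$, this collapses to $V(\inst\times\J)=\{((a_1,\critelement),\dots,(a_k,\critelement)) : (a_1,\dots,a_k)\in V(\inst)\}$. Second, the same fact applied to the Boolean $\secret$ gives $\inst\times\J\models\secret$ iff both $\inst\models\secret$ and $\J\models\secret$; since $\J\not\models\secret$, I conclude $\inst\times\J\not\models\secret$.

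Thus $\inst\times\J$ already falsifies $\secret$; the remaining obstacle, and the main technical point, is that its view image is not literally $V(\inst)$ but an isomorphic copy supported on the pairs $(a,\critelement)$, whereas $\V$-indistinguishability demands equality of the output relations as sets of tuples. I repair this with a renaming $\rho$ that collapses the relevant fibre and refreshes everything else: set $\rho(a,\critelement)=a$ and, for $b\neq\critelement$, let $\rho(a,b)$ be a fresh element outside $\adom(\inst)$, chosen injectively. Then $\rho$ is a bijection on $\adom(\inst\times\J)$, hence an isomorphism from $\inst\times\J$ onto $\J':=\rho(\inst\times\J)$; the observation making $\rho$ harmless is that elements with second coordinate $\neq\critelement$ never occur in any view output, so collapsing and refreshing them does not disturb any view image. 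Since isomorphisms commute with query evaluation, $V(\J')=\rho(V(\inst\times\J))=V(\inst)$ for every $V\in\V$ (each output tuple $((a_1,\critelement),\dots)$ maps to $(a_1,\dots)$), while $\J'\not\models\secret$. Hence $\J'$ is $\V$-indistinguishable from $\inst$ but disagrees with $\inst$ on $\secret$, so $\V$ does not determine $\secret$ at $\inst$, completing the contrapositive. The converse direction is immediate, as $\critinstance$ is itself an instance on which $\secret$ holds.
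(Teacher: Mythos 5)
Your proof is correct, but note that the paper does not actually prove Theorem~\ref{thm:critinst}: it is imported from \cite{lics16,privacyjair}, so there is no in-paper argument to compare against. Your direct-product construction is essentially the standard technique behind critical-instance results in that literature, and your write-up is self-contained and sound: the distribution of CQs over products is applied correctly both to the views (collapsing $V(\inst\times\J)$ onto the $\critelement$-fibre, since $V(\J)=V(\critinstance)$ is the singleton all-$\critelement$ tuple) and to the Boolean secret (so $\J\not\models\secret$ forces $\inst\times\J\not\models\secret$), and the injective renaming $\rho$ legitimately converts an isomorphic copy of the view image into the literal equality of view outputs that $\V$-indistinguishability requires. (One implicit hypothesis worth making explicit: the product argument needs the views and the secret to be constant-free, which is consistent with the paper's standing assumption on queries.)

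One point deserves emphasis, and you handled it correctly: you prove the contrapositive only for instances $\inst$ with $\inst\models\secret$, i.e., you read the hypothesis as ``$\secret$ is determined by $\V$ at some instance \emph{on which $\secret$ holds}.'' This restriction is not cosmetic; under the fully literal reading the statement is false. For example, take a single binary relation $R$, the Boolean view $V=\exists x,y.\,R(x,y)$ and the secret $\secret=\exists x.\,R(x,x)$: the empty instance determines $\secret$ (to be false, since the only indistinguishable instances are empty), yet the critical instance does not determine it (the instance $\{R(a,b)\}$ with $a\neq b$ is indistinguishable from it but falsifies $\secret$). Since the critical instance satisfies every BCQ, ``determined at the critical instance'' is inherently a positive-disclosure statement, and the theorem is used in the paper only through UN non-disclosure, whose definition quantifies exactly over instances where the secret holds. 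So your reading is the intended one; it would strengthen your write-up to state explicitly that this interpretation is necessary for the theorem to be true, not merely convenient.
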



\section{CQ views} \label{sec:cqviews}
Returning to Example \ref{ex:run}, recall the intuition  that 
the canonical d-view of $\utility$ is the ``least informative d-view''
that supports the ability to answer $\utility$.
We start our analysis by proving such a result,
but with two restrictions: $\utility$ must be a minimal  CQ, and we only consider
 views specified by CQs:

\begin{theorem}\label{thm:minimalqnoconstrdeterminacy}[Minimally informative useful CQ views]
If CQ-based d-view $\Varb$ determines a minimal Boolean CQ $\util$,   
then $\Varb$ determines each canonical view 
 $\canonview^s(\util)$  of $\util$.
\end{theorem}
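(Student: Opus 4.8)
The plan is to prove the contrapositive, reduced to a single source. Since every view in $\Varb$ and every canonical view $\canonview^s(\util)$ reads only its own source, and $\Varb$-indistinguishability is just the conjunction of the per-source indistinguishabilities, it suffices to fix a source $s$ and show: whenever two $s$-instances $J,J'$ satisfy $\Varb_s(J)=\Varb_s(J')$, they also satisfy $\canonview^s(\util)(J)=\canonview^s(\util)(J')$. Write $X=\sjvars(s,\util)$ (these are exactly the free variables of $\canonview^s(\util)$, since $\util$ is Boolean), $Y=\svars(s,\util)\setminus X$, and let $Z$ be the variables of $\util$ occurring only outside $s$; let $A_s$ and $A_{\neg s}$ be the $s$- and non-$s$-atoms, so that $A_s$ uses $X\cup Y$ and $A_{\neg s}$ uses $X\cup Z$. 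Assuming the conclusion fails, by symmetry I may fix a tuple $\vec t$ with $\vec t\in\canonview^s(\util)(J)\setminus\canonview^s(\util)(J')$, witnessed by a homomorphism $h_s:A_s\to J$ with $h_s(x)=t_x$ for $x\in X$.

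Next I would glue a single fixed completion of the remaining sources onto both $J$ and $J'$. Let $K$ be the instance over the non-$s$ sources obtained by freezing $A_{\neg s}$: send each $x\in X$ to $t_x$ (its value under $h_s$) and each $z\in Z$ to a fresh constant $\hat z$ not occurring in $J$ or $J'$. Put $\globinst=(J,K)$ and $\globinst'=(J',K)$. Two observations are immediate. First, $\globinst$ and $\globinst'$ are $\Varb$-indistinguishable: they coincide on the non-$s$ sources, and on $s$ we have $\Varb_s(J)=\Varb_s(J')$. Second, $\util$ holds in $\globinst$: the witness $h_s$ on $A_s$ and the freezing map on $A_{\neg s}$ agree on the shared variables $X$ (both send $x$ to $t_x$), so together they form a homomorphism $\util\to\globinst$. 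Since $\Varb$ is useful for $\util$, usefulness forces $\util$ to hold in $\globinst'$ as well.

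The heart of the argument is to turn a homomorphism $g:\util\to\globinst'$ into a witness that $\vec t\in\canonview^s(\util)(J')$, contradicting the choice of $\vec t$. Because the local schemas of distinct sources are disjoint, $g$ must send the $A_s$-atoms into $J'$ and the $A_{\neg s}$-atoms into $K$; and since the $\hat z$ are fresh (absent from $J'$), for every $x\in X$ we get $g(x)\in\adom(J')\cap\adom(K)\subseteq\{t_{x'}:x'\in X\}$. Thus $g$ sends source-join variables back to source-join values, inducing a self-map $\rho$ on $X$ via $g(x)=t_{\rho(x)}$. If $\rho$ is the identity then $g|_{A_s}:A_s\to J'$ already witnesses $\vec t\in\canonview^s(\util)(J')$. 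In general I would reduce to this case by showing that $\rho$ extends to an automorphism $\pi$ of $\util$; then $g\circ\pi^{-1}$ is again a homomorphism $\util\to\globinst'$, but now sends each $x$ to $t_x$, and its restriction to $A_s$ gives exactly the desired witness.

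The main obstacle is precisely this last step: ruling out ``unintended'' homomorphisms $g$ that reroute or collapse the source-join variables in a way not realized by a symmetry of $\util$. This is exactly where minimality of $\util$ is indispensable — for a minimal (core) CQ every endomorphism is an automorphism, which is what forbids such collapses and lets me promote the induced map $\rho$ on $X$ to a genuine automorphism of the whole query. To make this rigorous I would follow the chase-proof viewpoint advertised for this section: phrase ``$\util$ holds in $\globinst'$'' as a CQ entailment and analyze its canonical (chase) witness, using minimality to guarantee that the witness factors through the frozen copy of $A_{\neg s}$ and hence pins the source-join variables to $\vec t$ up to an automorphism. I expect the non-injective case, where $h_s$ identifies several variables of $X$, to need a little extra care; but there the offending coordinates are already forced onto the single shared value available in $\adom(K)\cap\adom(J')$, so it collapses back to the injective analysis.
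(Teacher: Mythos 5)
Your reduction to a single source and the gluing construction $(J,K)$ versus $(J',K)$ are fine, but the argument has a genuine gap at exactly the step you flag, and it cannot be closed along the lines you propose. The map $\rho$ you extract from the match $g$ of $\util$ in $(J',K)$ is only a symmetry of the \emph{context} query $\canoncont^s(\util)$ (equivalently, of the frozen copy $K$): $g$ sends the $s$-atoms into $J'$, about whose structure $\util$ tells you nothing, so you never obtain an endomorphism of $\util$ itself. Minimality of $\util$ constrains endomorphisms of $\util$, not endomorphisms of the subquery $\canoncont^s(\util)$, and the latter can be nontrivial even when $\util$ is a core. Concretely, take $\util=\exists x,y.\,R(x,y)\wedge S(x)\wedge S(y)$ with $R$ and $S$ on different sources; $\util$ is minimal. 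With $J=\{R(a,b)\}$, $\vec t=(a,b)$, $K=\{S(a),S(b)\}$ and $J'=\{R(b,a)\}$, every fact your proof actually derives holds --- $\util$ is true in $(J',K)$ via the match that swaps the roles of $x$ and $y$ --- yet $\vec t\notin\canonview^s(\util)(J')$, and the swap does not extend to an automorphism of $\util$ (it is an automorphism of $S(x)\wedge S(y)$ only). So the inference ``usefulness gives a match of $\util$ in $(J',K)$, hence $\vec t\in\canonview^s(\util)(J')$'' is invalid: the facts you establish can all hold while the desired conclusion fails.

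The deeper reason no patch can work within your framework is that your outline never uses the hypothesis that the views in $\Varb$ are CQs: it uses only the per-source decomposition of indistinguishability and usefulness, both of which hold for arbitrary d-views. But the theorem is false for arbitrary views --- this is the point of Section \ref{sec:arbviews} and Example \ref{ex:invshuffleviews}, where the invariant shuffle views determine a minimal $\util$ yet do not determine its canonical views (they are UN non-disclosing for a secret that the canonical d-view discloses, so by Proposition \ref{prop:minprivate} they cannot determine it). Any correct proof must therefore exploit CQ-ness of $\Varb$ in an essential way. The paper does this via the chase characterization of determinacy by CQ views: the chase structure built from $\canondb(\util)$ and the forward and inverse view definitions admits a homomorphism back onto $\canondb(\util)$ (Lemma \ref{lem:backward_homomorphism_determinacy}), so the match of $\util'$ witnessing determinacy composes into an endomorphism of $\util$ itself; minimality forces that endomorphism to be injective, and inverting it yields determinacy of each canonical view. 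That retraction onto $\canondb(\util)$ --- which converts the global determinacy hypothesis into a self-map of the whole query $\util$, the object on which minimality actually acts --- is precisely the ingredient your per-instance gluing argument lacks.
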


We only sketch the proof here.
The first step is to show that the determinacy of a CQ $Q$ by a CQ-based d-view $\Varb$ leads to 
a certain homomorphism of $Q$ to itself. This step follows using a characterization
of determinacy of CQ queries by CQ views via the well-known chase procedure (See \cite{ustods} or the ``Green-Red chase'' in
\cite{redspider}).
The second step is to argue that if this homomorphism is a bijection, then it implies that the canonical
d-view determines $Q$, while if the homomorphism is not bijective, we get a contradiction of minimality.
This  step relies on an analysis of how the chase characterization of determinacy factorizes over a  d-schema.


\myparagraph{Consequences} 
Combining Theorem \ref{thm:minimalqnoconstrdeterminacy} and Proposition \ref{prop:minprivate}
gives a partial answer to the question of how to obtain
useful and non-disclosing views:
\begin{corollary} \label{cor:minimalcanonviews} For any non-disclosure function $F$, BCQs $\util$ and $\secret$, if there is
a CQ based d-view that is useful for $\util$ and non-disclosing for $\secret$ according to $F$, then  
the  canonical d-view of $\util^\minq$ is such a d-view, where $\util^\minq$ is any minimal
CQ equivalent to $\util$.
\end{corollary}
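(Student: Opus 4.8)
The plan is to exhibit the canonical d-view of $\util^\minq$ as a \emph{minimally informative useful d-view} for $\util^\minq$ within the class $\mathcal C$ of CQ views, and then to invoke Proposition \ref{prop:minprivate}. First I would dispose of the passage from $\util$ to $\util^\minq$: since $\util$ and $\util^\minq$ are logically equivalent, they have identical output on every d-instance, so a d-view determines one at an instance exactly when it determines the other. Hence ``useful for $\util$'' and ``useful for $\util^\minq$'' are interchangeable throughout, both for the hypothesized d-view and for the canonical d-view.

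Next I would verify the two clauses in the definition of a minimally informative useful d-view for the canonical d-view of $\util^\minq$, whose source-$s$ view is $\canonview^s(\util^\minq)$. For usefulness, I would give the standard ``join reconstruction'' argument: a match of the Boolean CQ $\util^\minq$ into a d-instance $\globinst$ decomposes, source by source, into matches of the $s$-atoms that agree on the shared source-join variables, and $\canonview^s(\util^\minq)$ records exactly the tuples of source-join values admitting such a local match (the purely local existential variables having been quantified away). Thus $\util^\minq$ holds at $\globinst$ iff the natural join of the canonical view images on the shared source-join variables is non-empty, a condition depending only on the view images; so any two $\mathcal C$-indistinguishable instances agree on $\util^\minq$, which is usefulness. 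For the minimality clause, I would apply Theorem \ref{thm:minimalqnoconstrdeterminacy}: since $\util^\minq$ is a minimal BCQ, any CQ-based d-view $\Varb$ that determines $\util^\minq$ determines each $\canonview^s(\util^\minq)$ --- which is precisely the requirement that every competing useful CQ d-view determine the view definition of each view in the canonical d-view.

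Having established that the canonical d-view of $\util^\minq$ is a minimally informative useful d-view for $\util^\minq$ (equivalently $\util$) within $\mathcal C$, I would conclude by feeding it, together with the hypothesized CQ d-view that is useful for $\util$ and non-disclosing for $\secret$ under $F$, into Proposition \ref{prop:minprivate}. That proposition then yields directly that the canonical d-view is itself useful for $\util$ and non-disclosing for $\secret$ according to $F$, which is the claim.

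I expect the only genuine work to be the usefulness of the canonical construction together with the bookkeeping that aligns the conclusion of Theorem \ref{thm:minimalqnoconstrdeterminacy} (``$\Varb$ determines each canonical view'') with the minimality clause in the definition (``$\views'$ determines the view definition of each view in $\views$''); once these are matched, the corollary is a direct composition of the theorem with Proposition \ref{prop:minprivate}. One minor point to watch is that a source-join variable may be shared among more than two sources, so the reconstruction join must enforce agreement across \emph{all} sources sharing a variable, not merely pairwise --- but the natural join handles this automatically, so this does not introduce any real difficulty.
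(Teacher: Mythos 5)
Your proof is correct and takes essentially the same route as the paper: the corollary is obtained precisely by combining Theorem \ref{thm:minimalqnoconstrdeterminacy} (which, applied to the minimal CQ $\util^\minq$, yields the minimality clause) with Proposition \ref{prop:minprivate}. The paper leaves the usefulness of the canonical d-view and the interchangeability of $\util$ with $\util^\minq$ implicit, and your join-reconstruction argument and equivalence bookkeeping fill in those details correctly.
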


If we consider the specific non-disclosure notion, UN non-disclosure, we can infer a complexity bound
from combining these results with prior work on the complexity of checking non-disclosure 
(Theorem 44 of \cite{privacyjair}):

\begin{corollary} \label{cor:sigmatwop} There is a $\sigmatwop$ algorithm
taking as input BCQs $\util$ and $\secret$
 and determines
whether there is a CQ-based d-view  that is useful for $\util$ and UN non-disclosing for $\secret$.
If $\util$ is assumed minimal then the algorithm can be taken to be in $\conp$.
\end{corollary}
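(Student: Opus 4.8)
The plan is to reduce the existence question to a single, fixed candidate d-view and then bound the cost of testing that candidate. By Corollary~\ref{cor:minimalcanonviews}, applied with $F$ the UN non-disclosure function (which is determinacy-compatible, as noted after its definition), a CQ-based d-view that is useful for $\util$ and UN non-disclosing for $\secret$ exists if and only if the canonical d-view of $\util^\minq$ is itself useful and UN non-disclosing, where $\util^\minq$ is any minimal CQ equivalent to $\util$. The ``only if'' direction is exactly the content of the corollary, and the ``if'' direction is trivial. Moreover the canonical d-view of any CQ is always useful for that CQ---its per-source views re-join on the source-join variables to recover the query---so usefulness of the candidate is automatic. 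Hence the whole decision problem collapses to: \emph{is the canonical d-view of $\util^\minq$ UN non-disclosing for $\secret$?}

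For the complexity of this last test I would invoke Theorem 44 of~\cite{privacyjair}: for a fixed set of CQ views $\views$ and a BCQ $\secret$, checking that $\views$ is UN non-disclosing for $\secret$ is in \conp. (Intuitively, the complementary disclosure test reduces, by Theorem~\ref{thm:critinst}, to deciding determinacy of $\secret$ by $\views$ at the critical instance, which admits a polynomial-size witness and so sits in \np.) This immediately handles the minimal case: when $\util$ is promised to be minimal we may take $\util^\minq=\util$, build the canonical d-view $\{\canonview^s(\util)\}_s$ in polynomial time, and run the \conp\ non-disclosure test on it; by the reduction above the answer to this single test is the answer to the existence problem, so the whole algorithm is in \conp.

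For general $\util$ the only extra work is obtaining a minimal equivalent $\util^\minq$, which I would fold into one existential guess to stay at the $\sigmatwop$ level. The algorithm existentially guesses a subquery $\util_0$ of $\util$ together with a homomorphism $h\colon\util\to\util_0$ (certifying $\util_0\equiv\util$), and then verifies, in \conp, the conjunction of two conditions: (i) $\util_0$ is minimal, i.e.\ there is no homomorphism from $\util_0$ to a strict subquery of itself; and (ii) the canonical d-view of $\util_0$ is UN non-disclosing for $\secret$. Both (i) and (ii) are \conp\ predicates over polynomial-size objects, so their conjunction is \conp, and a single existential guess followed by a \conp\ verification is exactly $\sigmatwop$. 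Soundness uses that the core is unique up to isomorphism: any minimal $\util_0\equiv\util$ yields the same canonical d-view up to renaming, hence the same non-disclosure status, so guessing one correctly decides the fixed property ``canonical d-view of $\util^\minq$ is non-disclosing''.

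The main obstacle is bookkeeping of the quantifier alternation rather than any new combinatorial insight: the genuinely hard facts---that a single canonical candidate suffices (Corollary~\ref{cor:minimalcanonviews}, resting on Theorem~\ref{thm:minimalqnoconstrdeterminacy}) and that the non-disclosure test is in \conp\ (Theorem 44 of~\cite{privacyjair}, via the critical instance of Theorem~\ref{thm:critinst})---are imported. The care needed is to nest the minimality check and the non-disclosure check under one existential so that minimization does not add a second alternation, and to justify, via uniqueness of the core, that the existential guess is both sound and complete.
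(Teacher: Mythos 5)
Your proposal is correct and follows essentially the same route as the paper: reduce existence to the single candidate given by the canonical d-view of a minimal equivalent of $\util$ (Corollary~\ref{cor:minimalcanonviews}), import the \conp\ non-disclosure test from Theorem~44 of \cite{privacyjair}, and absorb the minimization step into one existential guess to get $\sigmatwop$ (dropping it when $\util$ is promised minimal to get \conp). Your explicit quantifier bookkeeping and the core-uniqueness remark just spell out details the paper leaves implicit (soundness in fact needs no core uniqueness, since any passing guess directly exhibits a witnessing d-view).
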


\michael{Probably remove for conference version}
We will not focus on the algorithmic consequences more in the body of the paper, but
some further comments can be found in the appendix.

The following example shows that the requirement that $\util$ is minimal (that is, has
no redundant conjuncts) in
Theorem \ref{thm:minimalqnoconstrdeterminacy} is essential.

\begin{example} \label{ex:nonminimal}
Consider two sources. The first source comprises the relations ${R_1}$, $R_2$ and $R_3$, 
while the second source comprises a single relation $T$. 
Consider also the conjunctions of atoms $C_1$ and $C_2$ defined as:
\begin{align}
	C_1 &=  R_1(x,y) \wedge T(x) \nonumber \\
	C_2 &= R_1(x',y') \wedge R_1(y',z')  \wedge R_1(z',x') ~\wedge \nonumber \\ 
	       & ~~~~~~~~~~~T(x') \wedge R_2(y') \wedge R_3(z') \nonumber
\end{align}
The conjunction $C_1$ states that there is an element in $T$ that is the source of an $R$ edge.
The conjunction $C_2$ states that there is 
an $R_1$-triangle with one vertex in $T$, a second in $R_2$, and a third in $R_3$.
Consider now the query $\util$ defined as
\begin{align}
	\exists x,y,x',y',z'. ~ C_1 \wedge C_2 \nonumber
\end{align}
Note that the conjunction of atoms $C_1$ in $\util$ is redundant.
Indeed, the query ${\util^{min} = \exists x',y',z'. C_2}$ is equivalent to $\util$. 

The canonical view of $Q$ for the $R_i$'s source is:
\begin{align}
\exists y, y', z'. R_1(x,y)  \wedge \nonumber \\ 
R_1(x',y') \wedge R_1(y',z') \wedge R_1(z',x') \wedge \nonumber \\ 
R_2(y') \wedge R_3(z') \nonumber
\end{align}

But the canonical view of $\util^{\minq}$ for this source is
\begin{align}
	\exists y', z'. ~ R_1(x',y') \wedge R_1(y',z') \wedge R_1(z', x') \wedge \nonumber \\
	R_2(y')  \wedge R_3(z') \nonumber
\end{align}

Theorem \ref{thm:minimalqnoconstrdeterminacy} tells us that the canonical d-view of $\util^{\minq}$ is a minimally informative  useful d-view
within the class of CQ views.
We claim that the canonical d-view of $\util$ is \emph{not} minimally informative for this class, and in fact
reveals significantly more than the canonical d-view of $\util^{\minq}$.
Consider
 the secret query ${\secret= \exists t.R_1(t,t)}$ stating that there is an $R_1$ self-loop.
The canonical d-view of $\util^{\minq}$ 
is UN non-disclosing for $\asecretquery$. Indeed, given
any instance $\globalinst$, consider the instance $\globalinst'$ in which the $T$ source is identical to the
one in $\globalinst$, but the $R$ source is replaced by one where each  node $e$ in the canonical
view for $\globalinst$ is in a triangle with distinct elements for $y', z'$.
Such a $\globalinst'$ does not satisfy $\asecretquery$, and is indistinguishable from $\globalinst$ according
to the canonical d-view of $\util^{\minq}$.

In contrast, the canonical d-view of $\util$ is UN disclosing for $\secret$. Consider $\globalinst_0$, the critical instance
for the source schema
(see Section \ref{sec:prelims}).
On $\globalinst_0$ the returned bindings have  $x$ as only $\critelement$, and from this we can infer
that the witness elements for $y'$ and $z'$ can only be $\critelement$, and hence the d-view discloses
$\secret$ with $\globalinst_0$ as the witness.

By Proposition \ref{prop:minprivate} the canonical d-view of $Q$ can not be minimally informative
within
the class of CQ views.
\end{example}

\section{Arbitrary views} \label{sec:arbviews}

In the previous section we showed that the  canonical
d-view is minimally informative  within the class of CQ views, assuming
that the utility query is minimized.
We now turn to minimally informative useful views, not restricting
to views given by
CQ view definitions.

Our goal will be to arrive at a  generalization of the notion of canonical d-view that
gives the minimal information over arbitrary d-views that are useful for a given BCQ
$\utility$. That is we want to arrive at
an analog of Theorem \ref{thm:minimalqnoconstrdeterminacy} replacing
``CQ views'' by  ``arbitrary views'' and ``canonical view'' by a generalization.

\subsection{An equivalence class representation of minimally informative views}
Recall that general views are defined by queries, where a query can be
any function on instances.  An \emph{Equivalence Class Representation
  of a d-view} (ECR) consists of an equivalence relation $\equivr_s$
for each source $s$.  An ECR is just another way of looking at a
d-view defined by a set of arbitrary functions on instances: given a
function $F$, one can define an equivalence relation by identifying
two local instances when the values of $F$ are the same. Conversely,
given an equivalence relation then one can define a function mapping
each instance to its equivalence class.
A d-instance $\globalinst$ is indistinguishable from a d-instance $\globalinst'$ by the
 d-view specified  by ECR $\<\equivr_s:s \in \sources\>$ 
	exactly when $\globalinst_{s} \equivr_s \globalinst'_s$ holds for each $s$.
Determinacy of one d-view by another corresponds exactly to the
refinement relation between
the corresponding ECRs. We will thus abuse notation by talking about indistinguishability,  usefulness,
and minimal informativeness of an ECR, referring to the corresponding d-view.

Our first step will be to show that 
there is an easy-to-define ECR whose corresponding d-view is minimally informative.
For a source $s$, an \emph{$s$-context}  is an instance for each source other than $s$.
Given an $s$-context $C$ and an $s$-instance $\inst$, we use $(\inst,C)$ to denote
the d-instance formed by interpreting the $s$-relations as in $\inst$
and the others as in $C$.

We say two $s$-instances $\inst,\inst'$  are \emph{$(s,Q)$-equivalent} if
for any $s$-context $C$, ${(\inst, C) \models Q \Leftrightarrow (\inst', C) \models Q}$.
We say two d-instances  $\globalinst$ and $\globalinst'$ are \emph{globally $Q$-equivalent} if  for each source $s$,
the restrictions of $\globalinst$ and $\globalinst'$ over source $s$, are ($s$, $Q$)-equivalent.

Global $Q$-equivalence is clearly an ECR. Via ``swapping one component at a time'' 
we  can see that the corresponding d-view is useful for $Q$.
It is also not difficult to see that this d-view is minimally informative  for $Q$ within  the class of all views:

\begin{proposition}  \label{prop:qequivcoarse} 
The d-view corresponding to global $Q$-equivalence is a minimally informative useful d-view
for $Q$ within the collection of
all views.
\end{proposition}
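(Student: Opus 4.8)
The plan is to prove two things about the d-view corresponding to global $Q$-equivalence: that it is useful for $Q$, and that it is minimally informative within the class of all views. Both are asserted as ``easy to see'' in the excerpt, so the task is to make the two indicated arguments precise.

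First I would establish usefulness. Suppose $\globalinst$ and $\globalinst'$ are indistinguishable under the ECR, meaning $\globalinst_s$ is $(s,Q)$-equivalent to $\globalinst'_s$ for every source $s$. I want to show $\globalinst \models Q \Leftrightarrow \globalinst' \models Q$ (since $Q$ is a BCQ, determinacy amounts to agreement on the Boolean output). The ``swapping one component at a time'' idea is to enumerate the sources $s_1, \ldots, s_k$ and form a sequence of hybrid d-instances $\globalinst = \globalinst^{(0)}, \globalinst^{(1)}, \ldots, \globalinst^{(k)} = \globalinst'$, where $\globalinst^{(i)}$ agrees with $\globalinst'$ on sources $s_1, \ldots, s_i$ and with $\globalinst$ on the remaining sources. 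Passing from $\globalinst^{(i-1)}$ to $\globalinst^{(i)}$ replaces only the $s_i$-component, from $\globalinst_{s_i}$ to $\globalinst'_{s_i}$, leaving a fixed $s_i$-context $C$ (built from the other components of $\globalinst^{(i-1)}$). By the definition of $(s_i,Q)$-equivalence applied with this very context, $(\globalinst_{s_i}, C) \models Q \Leftrightarrow (\globalinst'_{s_i}, C) \models Q$, i.e.\ $\globalinst^{(i-1)} \models Q \Leftrightarrow \globalinst^{(i)} \models Q$. Chaining these equivalences gives $\globalinst \models Q \Leftrightarrow \globalinst' \models Q$, so the d-view determines $Q$ on every d-instance and is therefore useful.

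Second I would establish minimal informativeness. Let $\views'$ be any d-view useful for $Q$; I must show $\views'$ determines the view definition of each view of our ECR d-view, i.e.\ that $\views'$-indistinguishability refines global $Q$-equivalence. So suppose $\globalinst$ and $\globalinst'$ are $\views'$-indistinguishable; I want to conclude they are globally $Q$-equivalent. Fix a source $s$; I must show $\globalinst_s$ and $\globalinst'_s$ are $(s,Q)$-equivalent, i.e.\ that for every $s$-context $C$ we have $(\globalinst_s, C) \models Q \Leftrightarrow (\globalinst'_s, C) \models Q$. The key point is that $\views'$-indistinguishability is itself an ECR, so it is \emph{componentwise}: if $\globalinst_s \equivr'_s \globalinst'_s$ then replacing the $s$-component of any d-instance by either $\globalinst_s$ or $\globalinst'_s$, keeping the same context $C$, yields two $\views'$-indistinguishable d-instances $(\globalinst_s, C)$ and $(\globalinst'_s, C)$. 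From $\globalinst \equivr' \globalinst'$ the componentwise nature gives $\globalinst_s \equivr'_s \globalinst'_s$, hence $(\globalinst_s, C) \equivr' (\globalinst'_s, C)$ for the arbitrary context $C$. Since $\views'$ is useful for $Q$, it determines $Q$ everywhere, so $(\globalinst_s, C) \models Q \Leftrightarrow (\globalinst'_s, C) \models Q$. As $C$ was arbitrary, $\globalinst_s$ and $\globalinst'_s$ are $(s,Q)$-equivalent; as $s$ was arbitrary, $\globalinst$ and $\globalinst'$ are globally $Q$-equivalent, which is exactly refinement.

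The main obstacle, such as it is, is keeping the componentwise character of ECR-indistinguishability straight in the second part: the whole argument hinges on the fact that an arbitrary useful d-view, expressed as an ECR, compares each source in isolation, so that swapping a single component between $\views'$-indistinguishable d-instances preserves $\views'$-indistinguishability. I would state this as an explicit observation (the d-view of any ECR identifies $\globalinst$ with $\globalinst'$ exactly when they agree componentwise, which is recorded right after the definition of ECR) and then invoke it. Everything else is bookkeeping: the hybrid-sequence chaining in the usefulness direction and the direct unfolding of definitions in the minimality direction. I would also note at the outset that, because $\util$ is a BCQ, ``determines $Q$'' reduces to agreement on the truth value of $Q$, which is what both halves of the argument actually verify.
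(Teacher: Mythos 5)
Your proposal is correct and follows essentially the same route as the paper's proof: usefulness via the source-by-source swapping argument (using $(s,Q)$-equivalence with the fixed intermediate context at each step), and minimal informativeness by fixing an arbitrary context $C$, observing that indistinguishability under any useful d-view is componentwise (so identical contexts are related by reflexivity, giving $(\globalinst_s, C)$ indistinguishable from $(\globalinst'_s, C)$), and then invoking usefulness to conclude $(s,Q)$-equivalence. The only difference is presentational: you make explicit the hybrid-sequence chaining and the reduction of determinacy to refinement of ECRs, both of which the paper leaves implicit.
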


Note that the  result can be seen as an analog of 
Theorem \ref{thm:minimalqnoconstrdeterminacy}. From it we conclude
an analog of Corollary \ref{cor:minimalcanonviews}:

\begin{proposition} \label{prop:qequivsuffices}
If there is any  d-view that is useful for BCQ $Q$ and non-disclosing for 
BCQ $\asecretquery$,
then the d-view given by global $Q$-equivalence is useful for $Q$ and non-disclosing for $\asecretquery$.
\end{proposition}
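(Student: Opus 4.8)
The plan is to derive this as a direct corollary of Proposition~\ref{prop:qequivcoarse} together with the abstract transfer principle of Proposition~\ref{prop:minprivate}, mirroring exactly the way Corollary~\ref{cor:minimalcanonviews} is obtained from Theorem~\ref{thm:minimalqnoconstrdeterminacy}. The only content specific to this statement is the observation that the class $\mathcal C$ appearing in Proposition~\ref{prop:minprivate} may be instantiated as the class of \emph{all} views, and that global $Q$-equivalence supplies the required minimally informative useful d-view for that class.

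First I would fix a determinacy-compatible non-disclosure function $F$ and let $\views^\ast$ denote the d-view corresponding to global $Q$-equivalence. By Proposition~\ref{prop:qequivcoarse}, $\views^\ast$ is a minimally informative useful d-view for $Q$ within the collection of all views; in particular it is useful for $Q$, which already discharges half of the conclusion. Next I would invoke the hypothesis: there exists a d-view $\views'$ that is useful for $Q$ and non-disclosing for $\asecretquery$ according to $F$. Since $\mathcal C$ is here the class of \emph{all} views, $\views'$ is trivially based on views in $\mathcal C$, so the hypotheses of Proposition~\ref{prop:minprivate} are met with the minimally informative view taken to be $\views^\ast$.

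Applying Proposition~\ref{prop:minprivate} then yields immediately that $\views^\ast$ is useful for $Q$ and non-disclosing for $\asecretquery$ according to $F$, which is the claim. For completeness I would unwind why the non-disclosure transfers: minimal informativeness of $\views^\ast$ gives that the useful d-view $\views'$ determines each view of $\views^\ast$, and then determinacy-compatibility of $F$, read in contrapositive form, carries non-disclosure down from the more informative $\views'$ to $\views^\ast$.

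There is essentially no obstacle at this level: all the mathematical weight sits in Proposition~\ref{prop:qequivcoarse}, whose proof must separately establish usefulness of global $Q$-equivalence (via the ``swap one component at a time'' argument) and its minimal informativeness among all views. The present statement is a repackaging of that fact through Proposition~\ref{prop:minprivate}; the one point to be careful about is that the result is stated for an arbitrary determinacy-compatible $F$ (and hence applies in particular to UN non-disclosure), so the argument should invoke no property of $F$ beyond determinacy-compatibility.
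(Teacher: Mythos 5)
Your proposal is correct and takes essentially the same route as the paper, which presents Proposition~\ref{prop:qequivsuffices} precisely as the analog of Corollary~\ref{cor:minimalcanonviews}: instantiate Proposition~\ref{prop:minprivate} with $\mathcal C$ the class of \emph{all} views and with the minimally informative useful d-view supplied by Proposition~\ref{prop:qequivcoarse}. Your unwinding of the transfer step --- minimal informativeness gives that the hypothesized useful d-view determines each view of the global $Q$-equivalence d-view, and determinacy-compatibility then propagates any disclosure back to contradict the hypothesis --- is exactly the paper's (implicit) argument, with the determinacy-compatibility direction applied correctly.
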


\myparagraph{From an ECR to a concrete d-view}
We now have a useful d-view that is minimally informative within the set of all d-views, but  it is given only as the ECR
global $Q$-equivalence, and it is not clear that there are any views in the usual sense --
isomorphism-invariant functions mapping  into relations of some fixed  schema --- that correspond to this ECR.
Our next goal is to show that global $Q$-equivalence is induced by a d-view defined using  standard
database queries.

A \emph{shuffle} of a CQ is a mapping
from its free variables to themselves (not necessarily injective).
Given a CQ $Q$ and a shuffle $\shuf$,
we denote by $\shuf(Q)$ the CQ that results after replacing each variable occurring in $Q$ by its $\shuf$-image.
We call $\shuf(Q)$ a \emph{shuffled query}. For example, consider the query
${\exists y. R(x_1, x_2, x_2, y) \wedge S(x_2, x_3, x_3, y)}$.
Then, the query ${\exists y. R(x_2, x_1, x_1, y) \wedge S(x_1, x_1, x_1, y)}$ is a shuffle of $Q$. 

The \emph{canonical context query for $Q$ at source $s$},
$\canoncont^{s}(Q)$, is the CQ whose atoms are all the atoms of $Q$
that are \emph{not} in source $s$, and whose free variables are 
$\sjvars(s,Q)$. 

\begin{definition}
For a source $s$, a BCQ $Q$ and a variable binding $\sigma$ for $Q$,
a shuffle $\shuf$ of $\canonview^s(Q)$ is \emph{invariant relative to $\<\sigma, \canoncont^s(Q)\>$} if 
for any d-instance $\inst'$ 
where $\inst', \sigma \models \canoncont^s(Q)$, we have 
$\inst', \sigma \models \shuf(\canoncont^s(Q))$.
\end{definition}

Note that we can verify this invariance by finding a homomorphism from
$\sigma(\shuf(\canoncont^s(Q)))$
to $\sigma(\canoncont^s(Q))$.

Invariance talks about every binding $\sigma$. We would like to abstract
to bindings satisfying a set of equalities.
A \emph{type} for $\sjvars(s,Q)$ is a set of equalities
between  variables in $\sjvars(s,Q)$. The notion of a variable binding satisfying
a type is the standard one.
For a type $\tau$, we can  talk about a mapping $\shuf$ being \emph{invariant relative to $\<\tau, \canoncont^s(Q)\>$}:
the invariance condition holds for all bindings $\sigma$ satisfying $\tau$.

For a source $s$ and a CQ $Q$, let ${\tau_1,\dots,\tau_n}$
be all the equality types over the variables in $\sjvars(s,Q)$ and $\vec x$ be the variables in $\sjvars(s,Q)$.
\begin{definition}
The \emph{invariant shuffle views} of $Q$ for source $s$ is the set of views $V_{\tau_1}, \dots, V_{\tau_n}$ where
each $V_{\tau_i}$ is defined as
${\tau_i(\vec x) \wedge \bigvee \nolimits_{\shuf} \shuf(\canonview^{s}(Q))}$,
where $\vec x$ are the source-join variables of $Q$ for source $s$,
and where the disjunction is  over shuffles invariant relative to $\tau_i$.
\end{definition}

Note that since the domain
of $\shuf$ is finite, there are only  finitely many mappings on them,
and thus 
there are finitely many disjuncts in each view up to equivalence.

We can show that global $Q$ equivalence corresponds to agreement on these views:
\begin{proposition} \label{prop:fullshuffle} For any BCQ $Q$ and any source $s$,
two $s$-instances
are $(s,Q)$-equivalent if and only if they agree on each 
invariant shuffle view
of $Q$ for $s$.
\end{proposition}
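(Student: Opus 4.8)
The plan is to prove the biconditional in Proposition~\ref{prop:fullshuffle} by unwinding both sides into statements about the behavior of $s$-instances under arbitrary $s$-contexts, and then connecting them through the invariant-shuffle machinery. Fix a source $s$ and a BCQ $Q$. Recall that two $s$-instances $\inst, \inst'$ are $(s,Q)$-equivalent exactly when, for every $s$-context $C$, we have $(\inst, C) \models Q \Leftrightarrow (\inst', C) \models Q$. The first step is to decompose the condition $(\inst, C) \models Q$. Since $Q$ is a BCQ whose atoms partition into the $s$-atoms and the non-$s$-atoms, a match of $Q$ into $(\inst, C)$ is precisely a match of the $s$-atoms into $\inst$ together with a compatible match of the non-$s$-atoms into $C$, where the two matches must agree on the shared variables $\sjvars(s,Q)$. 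Thus $(\inst,C) \models Q$ iff there is some binding $\sigma$ of $\sjvars(s,Q)$ such that $\inst, \sigma \models \canonview^s(Q)$ \emph{and} $C, \sigma \models \canoncont^s(Q)$. This reformulation is the key reduction, and I would state it as a preliminary lemma.

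Given this decomposition, I would rephrase $(s,Q)$-equivalence as: for every binding $\sigma$ of $\sjvars(s,Q)$ and every $s$-context $C$ witnessing $C,\sigma \models \canoncont^s(Q)$, the instance $\inst$ admits an $s$-match of $\canonview^s(Q)$ compatible with some $\sigma'$ that in turn makes the context satisfiable iff $\inst'$ does. The aim is to show this is equivalent to $\inst$ and $\inst'$ agreeing on every invariant shuffle view $V_{\tau_i}$. For the agreement-implies-equivalence ($\Leftarrow$) direction, suppose $\inst,\inst'$ agree on all the $V_{\tau_i}$ and suppose $(\inst,C)\models Q$; then by the decomposition there is $\sigma$ with $\inst,\sigma \models \canonview^s(Q)$ and $C,\sigma\models\canoncont^s(Q)$. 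Let $\tau$ be the equality type of $\sigma$ on $\sjvars(s,Q)$; then $\sigma$ satisfies $V_\tau$ in $\inst$ (the trivial identity shuffle is always invariant, so $\canonview^s(Q)$ itself is a disjunct), hence by agreement $\sigma$ satisfies $V_\tau$ in $\inst'$. This means some \emph{shuffle} $\shuf$ invariant relative to $\langle\tau,\canoncont^s(Q)\rangle$ satisfies $\inst',\sigma\models\shuf(\canonview^s(Q))$. The invariance of $\shuf$ relative to $\tau$ is exactly what guarantees that $C,\sigma\models\shuf(\canoncont^s(Q))$ follows from $C,\sigma\models\canoncont^s(Q)$, and composing the $s$-match in $\inst'$ with the context match reassembles a full match of (a shuffled copy of) $Q$ into $(\inst',C)$ that witnesses $(\inst',C)\models Q$; by symmetry the converse holds. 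The crucial point making this work is that shuffling only renames free variables, so a shuffled match of the canonical view still glues to the \emph{same} context binding $\sigma$.

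For the equivalence-implies-agreement ($\Rightarrow$) direction, I would argue the contrapositive: if $\inst,\inst'$ disagree on some $V_{\tau}$, produce a distinguishing context. Suppose $\sigma$ (of type $\tau$) satisfies $V_\tau$ in $\inst$ but not in $\inst'$. Satisfying $V_\tau$ in $\inst$ means $\inst,\sigma\models\shuf(\canonview^s(Q))$ for some $\tau$-invariant $\shuf$; failing it in $\inst'$ means $\inst',\sigma$ satisfies \emph{no} $\tau$-invariant shuffle of $\canonview^s(Q)$. The natural distinguishing context is the canonical context database of $\canoncont^s(Q)$ bound according to $\sigma$ --- i.e.\ take $C$ to be the frozen instance of $\canoncont^s(Q)$ with the source-join variables assigned the values $\sigma$ gives. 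Then $C,\sigma\models\canoncont^s(Q)$ holds, so $(\inst,C)\models Q$ via the $\shuf$-match (again using invariance to push the context match through the shuffle). The obstacle --- and the step I expect to require the most care --- is showing $(\inst',C)\not\models Q$: I must rule out \emph{all} matches of $Q$ into $(\inst',C)$, including those using a different binding $\sigma''$ of the source-join variables. This is where invariance must be exploited in its full strength: any match of $Q$ into $(\inst',C)$ restricts to a homomorphism from $\canoncont^s(Q)$ into the frozen canonical context $C$, and by the standard property of frozen/canonical databases such a homomorphism corresponds to a shuffle $\shuf''$ of the source-join variables that is invariant relative to $\langle\tau,\canoncont^s(Q)\rangle$; the $s$-part of the match then witnesses $\inst',\sigma\models\shuf''(\canonview^s(Q))$, contradicting that $\inst',\sigma$ satisfies no $\tau$-invariant shuffle. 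Verifying this correspondence between context-homomorphisms and invariant shuffles is the technical heart of the argument, and I would isolate it as a claim and prove it by the homomorphism/frozen-instance characterization of CQ satisfaction on canonical databases.
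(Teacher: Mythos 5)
Your proof is correct, and it uses the same core toolkit as the paper's: the decomposition of a match of $Q$ over $(\inst,C)$ into an $s$-part and a context part glued along $\sjvars(s,Q)$, the frozen instance of $\sigma(\canoncont^s(Q))$ as the crucial context, and the extraction of a shuffle $\shuf''=\sigma^{-1}\circ h$ from a hypothetical match, with its invariance certified by the homomorphism into the frozen context. But your factorization is genuinely different. The paper routes everything through an intermediate ECR, \emph{invariant shuffle equivalence}, and proves two separate equivalences: that this equivalence coincides with $(s,Q)$-equivalence (Proposition \ref{prop:shuffleequivqequiv}), and that it coincides with agreement on the views $V_\tau$ (Proposition \ref{prop:intermediate}). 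The hard direction of the second step is where the paper pays a price you never pay: since $\inst_1,\sigma\models V_\tau$ may hold via a non-identity disjunct $\shuf(\canonview^{s}(Q))$, the paper must pass to the pre-image binding $\sigma'=\sigma\circ\shuf$, apply shuffle equivalence there to get a second shuffle $\shuf'$, and then prove (via two dedicated claims) that the composite $\shuf'\circ\shuf$ is again $\tau$-invariant and satisfied. Your contrapositive argument makes this composition unnecessary: any match of $Q$ into $(\inst',C)$ directly yields \emph{some} $\tau$-invariant shuffle witnessing $\inst',\sigma\models V_\tau$, and it is irrelevant how that shuffle factors; the contradiction is against the whole disjunction. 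What the paper's longer route buys is modularity: the intermediate notion and both halves of its proof are lifted essentially verbatim to the $\Sigma$-relative setting of Section \ref{sec:localconstraints}, whereas your merged argument would have to be re-threaded there (replacing the frozen context by its chase under $\Sigma$).

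One point to be careful about when you flesh out the step you rightly identified as the technical heart: the homomorphism into the frozen context gives invariance of $\shuf''$ relative to the \emph{specific binding} $\sigma$, while membership in $V_\tau$ requires invariance relative to the \emph{type} $\tau$, i.e., for all bindings satisfying $\tau$. The lift is sound because the types here are complete equality patterns, so the frozen instances of any two bindings of type $\tau$ are isomorphic and the invariance test (existence of a homomorphism from $\sigma(\shuf''(\canoncont^{s}(Q)))$ to the frozen $\sigma(\canoncont^{s}(Q))$ fixing the constants) depends only on $\tau$; your plan to prove this via the canonical-database characterization of CQ entailment is exactly the right instrument, but the binding-versus-type distinction should be made explicit rather than absorbed into the phrase ``corresponds to.''
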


Putting together Proposition \ref{prop:qequivsuffices} and \ref{prop:fullshuffle},
we obtain:

\begin{theorem} \label{thm:fullshuffleminimal} [Views that are minimally useful among all views]
The invariant shuffle views are minimally informative for $\util$ within the class of all views.
\end{theorem}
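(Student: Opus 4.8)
The plan is to assemble Theorem \ref{thm:fullshuffleminimal} almost entirely from the two results that precede it, treating it as a corollary whose only real content is verifying that the invariant shuffle views are genuine \emph{views} (isomorphism-invariant functions definable by database queries) inducing exactly the ECR of global $Q$-equivalence. First I would recall from Proposition \ref{prop:qequivcoarse} that the d-view corresponding to global $Q$-equivalence is minimally informative useful for $Q$ within the class of all views. By the definition of minimal informativeness, this property is determinacy-invariant: any other d-view inducing the \emph{same} ECR is also minimally informative, since two d-views inducing the same indistinguishability relation mutually determine each other's view definitions. So it suffices to show that the d-view given by the invariant shuffle views induces precisely global $Q$-equivalence.

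The heart of the reduction is Proposition \ref{prop:fullshuffle}, which states that two $s$-instances are $(s,Q)$-equivalent if and only if they agree on each invariant shuffle view of $Q$ for $s$. Taking this conjunctively across all sources $s$, two d-instances $\globalinst,\globalinst'$ are globally $Q$-equivalent exactly when, for every source $s$, their $s$-restrictions agree on every invariant shuffle view for $s$ --- which is precisely the statement that $\globalinst$ and $\globalinst'$ are indistinguishable by the collection of invariant shuffle views (recall that $\views$-indistinguishability decomposes sourcewise, since each view in $\globview_s$ is defined only over $s$-relations). Hence the ECR induced by the invariant shuffle views coincides with global $Q$-equivalence. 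I would state this agreement explicitly and invoke it as the bridge between the abstract ECR and the concrete d-view.

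Combining the two observations finishes the argument: the invariant shuffle views induce the same indistinguishability relation as global $Q$-equivalence, and by Proposition \ref{prop:qequivcoarse} that relation is minimally informative useful for $Q$ among all views; since minimal informativeness depends only on the induced ECR (up to mutual determinacy), the invariant shuffle views are themselves minimally informative useful for $Q$ within the class of all views. I would also remark, as a sanity check that they \emph{are} legitimate views, that each $V_{\tau_i}$ is a disjunction of shuffled canonical queries guarded by an equality type, hence a relational-algebra (indeed DCQ) query over the local source schema, with finitely many disjuncts up to equivalence as already noted after the definition; this confirms they live inside ``the class of all views'' and in fact witness that minimal information is achievable within a modest extension of CQ views.

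The main obstacle is conceptual rather than computational: one must be careful that ``minimally informative'' is an invariant of the induced ECR and not of the syntactic presentation, so that transporting minimality from global $Q$-equivalence to the invariant shuffle views is legitimate. This rests on the fact that usefulness and the determinacy ordering underlying minimal informativeness are defined purely in terms of indistinguishability, so any two d-views with identical indistinguishability relations are interchangeable for these properties. Once that is made explicit, the theorem is an immediate consequence of Propositions \ref{prop:qequivcoarse} and \ref{prop:fullshuffle}, and the only remaining care is the bookkeeping that global $Q$-equivalence decomposes as the sourcewise conjunction matching the sourcewise decomposition of $\views$-indistinguishability.
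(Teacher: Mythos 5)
Your proof is correct and takes essentially the same route as the paper, which likewise obtains the theorem as an immediate combination of the minimal informativeness of the global $Q$-equivalence ECR (Proposition \ref{prop:qequivcoarse}) with the sourcewise characterization of that ECR by the invariant shuffle views (Proposition \ref{prop:fullshuffle}). Your explicit observation that minimal informativeness and usefulness depend only on the induced ECR---because d-views with identical indistinguishability relations mutually determine one another---is precisely the bridge the paper leaves implicit when it states the theorem follows by ``putting together'' the two propositions.
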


This yields a  corollary  for non-disclosure analogous to Corollary \ref{cor:minimalcanonviews}:

\begin{corollary} \label{cor:fullshufflesuffices} 
If an arbitrary d-view $\varb$ is useful 
for BCQ $Q$ and non-disclosing for BCQ $\asecretquery$, then
the d-view containing, for each
source $s$, the invariant shuffle views of $Q$ for $s$,
is useful and non-disclosing. In particular, some DCQ
is useful for $Q$ and non-disclosing for $\asecretquery$.
\end{corollary}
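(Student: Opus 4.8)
The plan is to assemble this corollary from the two preceding propositions, exploiting the fact that both usefulness and UN non-disclosure are properties of the induced equivalence-class representation (ECR) rather than of any particular syntactic presentation of a d-view. Write $\W$ for the d-view consisting, for each source $s$, of the invariant shuffle views of $Q$ for $s$. First I would observe that $\W$ induces \emph{exactly} the ECR given by global $Q$-equivalence. Indeed, two d-instances are $\W$-indistinguishable iff for each source $s$ their $s$-restrictions agree on every invariant shuffle view for $s$, and by Proposition \ref{prop:fullshuffle} this is equivalent to saying that the $s$-restrictions are $(s,Q)$-equivalent for every $s$ --- which is precisely global $Q$-equivalence.

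Next I would note that usefulness and UN non-disclosure transfer between any two d-views that induce the same indistinguishability relation. Usefulness is defined purely through indistinguishability (determinacy at every d-instance), and UN non-disclosure is likewise defined through determinacy; hence both depend only on the ECR. Since the hypothesis supplies \emph{some} arbitrary d-view $\varb$ that is useful for $Q$ and non-disclosing for $\secret$, Proposition \ref{prop:qequivsuffices} yields that the d-view given by global $Q$-equivalence is itself useful for $Q$ and non-disclosing for $\secret$. Combining this with the first observation, $\W$ --- which induces the very same ECR --- is useful for $Q$ and non-disclosing for $\secret$. This step is exactly parallel to how Corollary \ref{cor:minimalcanonviews} is obtained from Theorem \ref{thm:minimalqnoconstrdeterminacy} and Proposition \ref{prop:minprivate}.

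For the final sentence, I would verify that each invariant shuffle view is, up to logical equivalence, a DCQ. Each such view has the shape $\tau_i(\vec x) \wedge \bigvee_{\shuf} \shuf(\canonview^{s}(Q))$: a conjunction of an equality type with a finite disjunction of shuffled canonical-view CQs. Distributing the conjunction over the disjunction rewrites it as $\bigvee_{\shuf} \bigl(\tau_i(\vec x) \wedge \shuf(\canonview^{s}(Q))\bigr)$, and each disjunct becomes a genuine CQ once the equalities in $\tau_i$ are discharged by identifying the corresponding source-join variables. Since a non-injective $\shuf$ can collapse free variables, the disjuncts need not share the same free variables, so the result lies outside the UCQ fragment but is exactly a DCQ in the sense fixed in the preliminaries; thus $\W$ is a DCQ-based d-view, establishing that a DCQ suffices.

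I expect the only non-routine point to be the careful justification that UN non-disclosure --- whose definition quantifies over tuples $\vec t$ and over instances at which $\secret(\vec t)$ holds --- is genuinely invariant under replacing a d-view by another inducing the same indistinguishability relation. This is immediate from unwinding the definitions, but it is the hinge on which the transfer from $\varb$ to global $Q$-equivalence to $\W$ turns, so I would state it explicitly rather than leave it to the reader.
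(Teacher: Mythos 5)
Your proof is correct and takes essentially the same route as the paper: the paper obtains this corollary by combining Proposition \ref{prop:fullshuffle} (the invariant shuffle views induce exactly the global $Q$-equivalence ECR) with Proposition \ref{prop:qequivsuffices}, and then transferring usefulness and non-disclosure across d-views inducing the same ECR via mutual determinacy, which is precisely your argument. The ``hinge'' you flag — that non-disclosure is invariant under replacing a d-view by one with the same ECR — is exactly what the paper's determinacy-compatibility requirement on non-disclosure functions guarantees (same ECR gives mutual determinacy, hence disclosure transfers both ways), and your explicit check that the invariant shuffle views are DCQs up to discharging the equality types is more detail than the paper itself provides.
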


In Example \ref{ex:run} there are no nontrivial shuffles,
so we can conclude that the canonical d-view is minimally informative within the class of all views.
In general the  invariant shuffle views can be \emph{unsafe}:
different disjuncts may contain distinct variables.
Of course, they can be implemented easily by using
a wildcard to represent elements outside the active domain.
Further, we can convert each of these unsafe views to an ``information-equivalent'' 
set of relational algebra views:

\begin{proposition} \label{prop:makesafe}
For every view defined by a DCQ
(possibly unsafe), there is
a finite set of relational algebra-based views $\V'$ that induces  the same ECR.
Applying this to the invariant shuffle views for a CQ $\util$, we can find a relational algebra-based d-view that is
minimally informative for $\util$ within the class of all views.
\end{proposition}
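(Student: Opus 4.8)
The plan is to prove the two claims in turn, with the first — converting an unsafe DCQ into an information-equivalent finite set of relational-algebra views — carrying the real content. Write the DCQ view as $V = \bigvee_j D_j$, where each $D_j$ is a CQ whose free variables $\vec x_j$ form a subset of the full variable set $\vec x$ of $V$; the unsafety is exactly the phenomenon that the $\vec x_j$ may differ across $j$. First I would pin down the intended ``wildcard'' semantics: the output $V(\inst)$ is the set of tuples over $\adom(\inst) \cup \{\star\}$ in which each disjunct $D_j$ contributes, for every homomorphism witnessing $D_j$, the tuple carrying genuine domain values at the positions of $\vec x_j$ and the wildcard $\star$ at the positions of $\vec x \setminus \vec x_j$. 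The crucial observation is that one must group the disjuncts by their free-variable set rather than keep them separate: for each $W \subseteq \vec x$ set $E_W = \bigvee_{j : \vec x_j = W} D_j$, which is now a genuine (safe) UCQ.

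The next step is to show that $V(\inst)$ and the family $(E_W(\inst))_{W \subseteq \vec x}$ are interdefinable, so that they induce the same equivalence relation on $s$-instances. In one direction $V(\inst)$ is obtained by padding every tuple of each $E_W(\inst)$ with $\star$ on the positions $\vec x \setminus W$; in the other, $E_W(\inst)$ is recovered as the projection onto $W$ of exactly those tuples of $V(\inst)$ whose wildcard positions are precisely $\vec x \setminus W$, which is well defined because $W$ can be read off from the pattern of $\star$'s. Hence $V(\inst) = V(\inst')$ holds iff $E_W(\inst) = E_W(\inst')$ for every $W$, i.e.\ $\{E_W : W \subseteq \vec x\}$ induces the same ECR as $V$. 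Since each $E_W$ is a UCQ it is expressible in relational algebra, and there are only finitely many subsets $W$, giving the required finite set $\V'$. I would flag here the one point that is easy to get wrong: taking the individual disjuncts $\{D_j\}$ instead of the grouped $\{E_W\}$ yields a strictly finer ECR (a tuple can migrate between two disjuncts sharing the same free variables while their union is unchanged), so the grouping is essential for obtaining the \emph{same} ECR rather than merely refining it.

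For the second claim, observe that each invariant shuffle view $V_{\tau_i} = \tau_i(\vec x) \wedge \bigvee_{\shuf} \shuf(\canonview^s(Q))$ becomes, after distributing the equalities of the type $\tau_i$ over the disjunction, a DCQ $\bigvee_\shuf (\tau_i(\vec x) \wedge \shuf(\canonview^s(Q)))$ over the local schema of $s$. Applying the first claim to each $V_{\tau_i}$ produces a finite set of relational-algebra views over $s$ with the same ECR. Collecting these across all types $\tau_i$ and all sources $s$ gives a relational-algebra d-view $\V'$. Because the ECR of a set of views is the meet (intersection) of the ECRs of its members, replacing each $V_{\tau_i}$ by an ECR-equivalent family leaves the combined ECR unchanged; thus $\V'$ induces the same d-view as the invariant shuffle views. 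By Theorem \ref{thm:fullshuffleminimal} that d-view is minimally informative for $\util$ within the class of all views, so $\V'$ is as well.

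The main obstacle is the first step: correctly fixing the semantics of the unsafe disjunction and recognizing that the correct relational-algebra surrogate groups disjuncts by their free-variable set. Everything downstream — RA-expressibility of UCQs, finiteness, and the transfer of minimality via the meet of ECRs — is routine once that grouping and the interdefinability of $V(\inst)$ with $(E_W(\inst))_W$ are in place.
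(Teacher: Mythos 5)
There is a genuine gap, and it sits exactly where you located the ``real content'': the wildcard semantics you stipulate for the unsafe DCQ is not the semantics the view actually has, and under the correct semantics your negation-free grouping $\{E_W\}$ does \emph{not} induce the same ECR. The output of a DCQ $V$ is its set of satisfying assignments over the full variable tuple, with the wildcard merely standing in for values outside the active domain; a tuple whose restriction to some $W$ satisfies a disjunct with variable set $W$ therefore belongs to $V(\inst)$ with \emph{arbitrary} values --- including active-domain values --- at the remaining positions. The output thus carries no tag recording which disjunct produced a tuple, and your recovery step (``$E_W(\inst)$ is the projection onto $W$ of the tuples whose wildcard positions are precisely $\vec x \setminus W$'') fails: a tuple all of whose entries are genuine can be absorbed by a disjunct with a strictly smaller variable set. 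Concretely, take $V(x,y) = A(x) \vee B(x,y)$ and instances $\inst_1 = \{A(a), B(a,b), C(b)\}$ and $\inst_2 = \{A(a), C(b)\}$. In both instances the satisfying assignments of $V$ are exactly the pairs $(a,v)$ with $v$ arbitrary, so $\inst_1$ and $\inst_2$ agree on $V$; yet they disagree on your $E_{\{x,y\}} = B(x,y)$. Hence $\{E_W\}$ strictly refines the ECR of $V$, which also sinks your second claim: a strictly finer ECR is strictly more informative, so the resulting d-view would not be minimally informative --- the invariant shuffle views themselves are useful but do not determine it.

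You correctly flagged that disjuncts sharing a variable set must be grouped, but the missing idea is the cross-set absorption just described, and it is precisely what forces genuine relational algebra (with difference) rather than UCQs. The paper's construction defines, for each variable set $S$, a view $V_S$ whose positive part is the disjunction of the disjuncts with variables exactly $S$, \emph{conjoined with the negation of every disjunct whose variable set is a proper subset of $S$}; mutual determinacy with $V$ is then shown by a minimal-variable-set witness argument in one direction and, in the other, an induction on $S$ using tuples padded with fresh elements outside both active domains. The paper's Example \ref{ex:invshuffleviews} already displays these negations in $\view^\safe_{\tau_2}$ --- a signal that a purely positive translation of the unsafe disjunction cannot be information-equivalent.
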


The intuition behind  the proposition  is to construct  separate views for different
subsets of the variables that occur as a CQ disjunct.
A view with a given set of variables
$S$ will assert that some CQ disjunct with variables $S$ holds and that no disjunct
corresponding to a subset of $S$ holds.

\begin{example} \label{ex:invshuffleviews}
Consider a d-schema with two sources, one containing a ternary relation $R$ and the other containing
a unary relation $S$.
Consider the utility query $Q$:
\begin{align*}
	\exists x_1, x_2, y. R(x_1, x_2, y) \wedge R(y, x_2, x_1) \wedge S(x_1) \wedge S(x_2) \nonumber
\end{align*}
The canonical view for the $R$ source $\canonview^{R}(Q)$ is  ${\exists y. ~ R(x_1, x_2, y) \wedge R(y, x_2, x_1)}$.

Observe that  $\util$ is a minimal CQ, and hence by Theorem \ref{thm:minimalqnoconstrdeterminacy} the canonical d-view of $\util$ is
minimally informative among the CQ-based d-views.
We will argue that this d-view is not minimally informative useful for $\util$ among all d-views, by arguing
that it discloses more secrets than the shuffle views disclose.

Consider the secret query $ \secret = \exists x. R(x,x,x)$.
We show that the canonical d-view of $\util$ is UN disclosing for $\secret$.
Consider the critical instance of the $R$-source. 
An external party will know
that the instance contains $\{R(\critelement,\critelement,y_0),
R(y_0,\critelement,\critelement)\}$ for some 
$y_0$. On the other hand,
if $y_0 \neq \critelement$, then the canonical d-view
would reveal $x_1=y_0, x_2=\critelement$. So $y_0$ must be $\critelement$,
and therefore $\secret$ is disclosed.
By Corollary \ref{cor:minimalcanonviews},
we know that no CQ-based d-view can be UN non-disclosing for $\secret$ and useful
for $\util$. 

The shuffle views of $\util$ are always useful for $\util$.
We will show that they
are UN non-disclosing for $\secret$.
Let us start by deriving the invariant shuffle views for
the $R$ source.
There are two types, $\tau_1$ in which $x_1=x_2$, and $\tau_2$ in which
the variables are not identified.

For a binding satisfying $\tau_1$, the canonical view of $Q$ for the $R$ source is equivalent to 
\[
	\exists y. R(x_1, x_1, y) \wedge R(y, x_1, x_1)
\]
Since
there is only one free variable in it, there is only one invariant
shuffle, the identity.
Thus 
\[
	\view_{\tau_{1}} = \exists y. R(x_1, x_1, y) \wedge R(y, x_1, x_1)
\]
For bindings satisfying $\tau_2$ 
there are several  shuffles invariant for 
${\canoncont^{R}(Q)= S(x_1) \wedge S(x_2)}$:
the identity, the shuffle which swaps $x_1$ and $x_2$,
the shuffle in which $x_1$ and $x_2$ both go to $x_1$, and
the shuffle in which both $x_1$ and $x_2$ go to $x_2$.
Thus we get the view
$\view_{\tau_2}$ defined as $x_1 \neq x_2$ conjoined with:
\begin{align}
\exists y. R(x_1, x_2, y) &\wedge R(y, x_2, x_1) \vee \nonumber \\
\exists y. R(x_1, x_1, y) &\wedge R(y, x_1, x_1) \vee \nonumber \\
\exists y. R(x_2, x_2, y) &\wedge R(y, x_2, x_2) \vee \nonumber \\
\exists y. R(x_2, x_1, y) &\wedge R(y, x_1, x_2) \nonumber 
\end{align}

This last view is unsafe, but via Proposition \ref{prop:makesafe}
we can convert it into a safe relational algebra view
that yields the same ECR, $\view^\safe_{\tau_2}$ defined as $x_1 \neq x_2$ conjoined with:
\begin{align}
	\neg ( \exists y. R(x_1, x_1, y) \wedge R(y, x_1, x_1) ) \wedge \nonumber \\
	\neg ( \exists y. R(x_2, x_2, y) \wedge R(y, x_2, x_2) )  \wedge \nonumber \\
	[( \exists y. R(x_1, x_2, y) \wedge R(y, x_2, x_1) \vee \nonumber \\
	\exists y. R(x_2, x_1, y) \wedge R(y, x_1, x_2) )] \nonumber
\end{align}

We now argue that the shuffle views are UN non-disclosing for $\secret$.
This is because in any d-instance we can replace
each fact $R(x_0, x_0, x_0)$ by facts $R(x_0, x_0, c)$ and $R(c, x_0, x_0)$ for a fresh $c$, obtaining
an indistinguishable instance where $\secret$ does not hold.
Hence by Proposition \ref{prop:minprivate}, the canonical views of $\util$ can not
be minimally informative within the class of all views, or even within the class of relational algebra
views.

\end{example}

\section{Local background knowledge} \label{sec:localconstraints}
We now look at the impact of a background knowledge on the sources.
We start with the case of a background theory $\Sigma$ in which
each sentence is \emph{local}, referencing
relations on a single source.

\subsection{Extension of results on CQ views to local background knowledge} 
It is easy to show that we can not generalize the prior results for CQ views to arbitrary local background
knowledge. Intuitively using such knowledge we can encode design problems for arbitrary views using CQ views.
\begin{example} \label{ex:localnontgdcqview}
Consider the schema, utility query $\util$ and secret query $\secret$ from Example \ref{ex:invshuffleviews}.
Let $\Sigma$ consist of the view definitions for the views $VR_1$ and $VR_2$ in the example.
Relative to $\Sigma$ we have CQ view definitions that are useful and UN non-disclosing,
namely the views those simply export $S, VR_1$, and $VR_2$. However, as observed
in Example \ref{ex:invshuffleviews}, the canonical views are UN disclosing for $\secret$.
\end{example}

Thus we restrict to \emph{local constraints $\Sigma$ that are existential rules}.
We show that the results on CQ views
extend to this setting.
We must now consider
utility queries $\util$ that are \emph{minimal with respect
to  $\Sigma$}, meaning that there is no strict subquery
equivalent to $\util$ under $\Sigma$. By modifying the chase-based
approach used to prove Theorem \ref{thm:minimalqnoconstrdeterminacy}, we show:

\begin{theorem}\label{thm:minimalqlocalconstrdeterminacy} [Min.Inf. CQ views w.r.t. local rules]
Let $\Sigma$ be a set of local existential rules, 
 $\util$ a CQ minimal with respect to $\Sigma$.
Then the canonical d-view of $\util$ is minimally informative useful within the class of CQ views relative to $\Sigma$.
\end{theorem}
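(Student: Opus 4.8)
The plan is to adapt the chase-based argument behind Theorem~\ref{thm:minimalqnoconstrdeterminacy} to the setting where entailment between CQs is now entailment \emph{modulo} $\Sigma$, exploiting that $\Sigma$ consists of local existential rules so that the chase factorizes cleanly across sources. Concretely, I would first establish the analog of the ``first step'' cited in the sketch of Theorem~\ref{thm:minimalqnoconstrdeterminacy}: if a CQ-based d-view $\Varb$ determines $\util$ relative to $\Sigma$, then the determinacy witness (obtained via the Green--Red / chase characterization, now chasing with $\Sigma$ added) yields a homomorphism $h$ from $\util$ into $\chase_\Sigma(\util)$ that is compatible with the view partition. The key locality observation is that since every rule in $\Sigma$ mentions relations of a single source, the chase of $\util$ by $\Sigma$ never creates atoms spanning two sources, and the nulls it introduces for source $s$ depend only on the $s$-atoms of $\util$. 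Hence $\chase_\Sigma(\util)$ is itself the disjoint-over-sources union of the local chases $\chase_{\Sigma_s}(\util_s)$ glued along the source-join variables $\sjvars(s,\util)$, and the homomorphism $h$ respects this decomposition.

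Second, I would run the bijectivity dichotomy from the proof of Theorem~\ref{thm:minimalqnoconstrdeterminacy}, but with $\Sigma$-minimality replacing plain minimality. If $h$, restricted to each source and read modulo the $\Sigma$-chase, is essentially a bijection on the non-null part, then the same factorization argument shows that the canonical view $\canonview^s(\util)$ is determined by $\Varb$ relative to $\Sigma$: agreement of $\Varb$ on two $\Sigma$-satisfying instances forces agreement of their $s$-chases on the relevant projection, hence agreement of $\canonview^s(\util)$. If instead $h$ collapses some atoms, I would build from $h$ a strict subquery of $\util$ that is $\Sigma$-equivalent to $\util$ --- here one uses that a $\Sigma$-homomorphism folding $\util$ into a proper sub-part witnesses $\util \equiv_\Sigma \util_0$ for the subquery $\util_0$ --- contradicting that $\util$ is minimal with respect to $\Sigma$. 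This is exactly where the new minimality hypothesis is consumed, and it is why Example~\ref{ex:localnontgdcqview} (arbitrary local constraints, where no such chase-based folding is available) falls outside the theorem.

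I expect the main obstacle to be making the factorization of the chase characterization of determinacy fully rigorous in the presence of $\Sigma$. In the constraint-free case one can talk about the structure of $\util$ directly; with existential rules one must reason about $\chase_\Sigma(\util)$, which may introduce labelled nulls, and argue that determinacy still reduces to a homomorphism of $\util$ into a controlled extension of itself rather than into some uncontrolled infinite chase. The delicate points are (i) ensuring the determinacy witness can be taken finite and source-local despite a possibly infinite chase --- which I would handle by observing that only the part of the chase reachable from the source-join variables and the query atoms matters for determining $\canonview^s(\util)$, and that this part is finite because the determinacy question reduces to the behaviour on the source-local chases --- and (ii) verifying that the glued chase $\bigcup_s \chase_{\Sigma_s}(\util_s)$ genuinely computes $\chase_\Sigma(\util)$, i.e.\ that no cross-source triggers are ever active, which follows from locality of each rule but must be stated as a lemma. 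Once these two locality lemmas are in place, the bijective/non-bijective dichotomy transfers essentially verbatim from the proof of Theorem~\ref{thm:minimalqnoconstrdeterminacy}, and an application of Proposition~\ref{prop:minprivate} upgrades the determinacy conclusion to the claimed minimal-informativeness statement.
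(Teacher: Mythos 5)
Your plan follows essentially the same route as the paper's proof: the paper also re-runs the chase-based determinacy algorithm behind Theorem~\ref{thm:minimalqnoconstrdeterminacy} with a $\chase_{\Sigma}$ step interleaved in each round, proves a backward-homomorphism lemma and a lemma identifying the source-$s$ atoms of the run on $\util$ with the run on $\canonview^s(\util)$ (your locality lemma (ii)), and then repeats the injective/non-injective dichotomy, with locality guaranteeing that the source-join variables land on constants $c_v$ and with $\Sigma$-minimality consumed exactly in the non-injective case. So the skeleton matches.

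Two points in your write-up would not survive scrutiny. First, your proposed fix for the infinite-chase obstacle --- that the relevant part of the chase ``is finite because the determinacy question reduces to the behaviour on the source-local chases'' --- is wrong: locality does nothing to make the chase terminate, since arbitrary local existential rules are allowed (the paper explicitly notes that its modified procedure is only a semi-decision procedure). What saves the argument is not finiteness of any part of the chase but finiteness of the query: a match of $\util'$ into the chase limit touches only finitely many facts, each created at a finite stage of a fair chase sequence, so the determinacy witness appears at some finite round $l$; the paper's formulation in terms of the stages $F^l_j$ of the algorithm is precisely this compactness argument. Second, your closing appeal to Proposition~\ref{prop:minprivate} is a mis-citation: that proposition goes \emph{from} minimal informativeness \emph{to} non-disclosure, so it cannot upgrade a determinacy conclusion into minimal informativeness. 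No upgrade is needed --- once you show that every CQ-based d-view useful for $\util$ relative to $\Sigma$ determines each $\canonview^s(\util)$ relative to $\Sigma$, minimal informativeness is the definition, provided you also record the easy (but unstated in your proposal) fact that the canonical d-view is itself useful for $\util$ relative to $\Sigma$. Finally, one caution your proposal shares with the paper's own sketch: in the non-injective case the folded image $h(\util)$ lives in $\chase_{\Sigma}(\canondb(\util))$ and need not literally be a subquery of $\util$, so tying it back to the paper's subquery-based notion of $\Sigma$-minimality requires an extra folding step that neither you nor the appendix spells out.
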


The analog of Corollary \ref{cor:minimalcanonviews} follows from the theorem:
\begin{corollary} \label{cor:minimalcanonviewslocal}
If any CQ based d-view  is useful for $\Sigma$-minimal $\util$ and non-disclosing for $\secret$ relative to $\Sigma$, then
the  canonical d-view of $\util$ is useful for $\util$ and non-disclosing for $\secret$ relative to $\Sigma$.
\end{corollary}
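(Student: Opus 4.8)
The plan is to deduce Corollary \ref{cor:minimalcanonviewslocal} as a direct consequence of Theorem \ref{thm:minimalqlocalconstrdeterminacy} together with the relativized version of Proposition \ref{prop:minprivate}, exactly mirroring how Corollary \ref{cor:minimalcanonviews} was obtained from Theorem \ref{thm:minimalqnoconstrdeterminacy}. The key observation is that UN non-disclosure, and indeed any determinacy-compatible non-disclosure function, interacts well with the refinement ordering on d-views, and that this ordering is precisely what ``minimally informative'' controls. So the corollary should fall out once we verify that the relevant notions are all interpreted relative to $\Sigma$.

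First I would invoke Theorem \ref{thm:minimalqlocalconstrdeterminacy} to conclude that, for $\Sigma$-minimal $\util$, the canonical d-view $\Vcal$ of $\util$ is minimally informative useful within the class of CQ views relative to $\Sigma$. By the definition of minimal informativeness (relativized to $\Sigma$), this means $\Vcal$ is useful for $\util$ relative to $\Sigma$ and that \emph{every} CQ-based d-view $\Vcal'$ that is useful for $\util$ relative to $\Sigma$ determines each view of $\Vcal$ relative to $\Sigma$. Next I would take the hypothesis of the corollary: suppose some CQ-based d-view $\Vcal'$ is useful for $\util$ and non-disclosing for $\secret$, all relative to $\Sigma$. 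Applying the minimal-informativeness conclusion to this particular $\Vcal'$, we get that $\Vcal'$ determines each view in $\Vcal$ relative to $\Sigma$.

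The final step is to apply determinacy-compatibility in the relativized setting. Since $\Vcal'$ is non-disclosing for $\secret$ and $\Vcal'$ determines each view of $\Vcal$, the contrapositive of determinacy-compatibility gives that $\Vcal$ is also non-disclosing for $\secret$ (relative to $\Sigma$): if $\Vcal$ were disclosing, then any d-view determining its views --- in particular $\Vcal'$ --- would also be disclosing, contradicting our assumption on $\Vcal'$. This is precisely the relativized form of Proposition \ref{prop:minprivate}, whose proof is insensitive to whether quantification is over all instances or only those satisfying $\Sigma$. Combined with the usefulness of $\Vcal$ for $\util$ relative to $\Sigma$ (already established), we conclude that the canonical d-view of $\util$ is useful for $\util$ and non-disclosing for $\secret$ relative to $\Sigma$, as desired.

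The main thing to check, rather than any genuine obstacle, is that the determinacy-compatibility argument transfers cleanly to the $\Sigma$-relativized notions: that ``determines relative to $\Sigma$'' again corresponds to refinement of the $\Sigma$-restricted indistinguishability relations, and that the non-disclosure function remains determinacy-compatible when all quantifiers range only over d-instances satisfying $\Sigma$. Both are routine given the definitions in the ``Variations'' paragraph, so the corollary is essentially immediate once Theorem \ref{thm:minimalqlocalconstrdeterminacy} is in hand; all the real work lives in that theorem rather than in this corollary.
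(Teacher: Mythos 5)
Your proposal is correct and follows exactly the paper's route: the paper derives this corollary by combining Theorem \ref{thm:minimalqlocalconstrdeterminacy} (the canonical d-view of $\Sigma$-minimal $\util$ is minimally informative among CQ views relative to $\Sigma$) with the $\Sigma$-relativized form of Proposition \ref{prop:minprivate}, which is precisely your argument, including the contrapositive use of determinacy-compatibility. Your closing remark that the relativization of the determinacy-compatibility step is routine matches the paper's treatment, which indeed states the corollary ``follows from the theorem'' without further elaboration.
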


\myparagraph{Consequences for decidability}
Theorem \ref{thm:minimalqlocalconstrdeterminacy} shows that
even in the presence of arbitrary local existential rules $\Sigma$ it suffices
to minimize the utility query under $\Sigma$
and  check the canonical d-view for non-disclosure under $\Sigma$.
For arbitrary existential rules,   even CQ minimization is undecidable. 
But for well-behaved classes of rules (e.g. those
with terminating chase \cite{acyclicity,moreacyclicity}, or frontier-guarded rules \cite{frontier1})
we can perform both minimization and UN non-disclosure checking effectively \cite{ijcai19}.

\subsection{Extension of results on arbitrary views to the presence of local constraints}\label{subsec:localarb}

We can also extend the results on arbitrary d-views to account for 
local existential rules $\Sigma$. The notion of a shuffle being $\Sigma$-invariant
is defined in the obvious way, restricting to instances that satisfy the constraints in $\Sigma$.
The $\Sigma$-invariant shuffle views are also defined analogously; they are DCQ views, but can be replaced
by the appropriate relational algebra views. Following the prior template, we can show:

\begin{theorem} \label{thm:fullshufflelocalsuffices} [Min.Inf.   views  w.r.t. local rules]
For any set of local existential rules $\Sigma$, the $\Sigma$-invariant shuffle views of $\util$ provide a minimally informative useful d-view for $\util$
within the class of all views, relative to $\Sigma$.
\end{theorem}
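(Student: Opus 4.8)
The plan is to mirror, step by step, the development that established Theorem~\ref{thm:fullshuffleminimal} for the unconstrained case, inserting ``relative to $\Sigma$'' at each point and checking that the underlying arguments survive when all instances are required to satisfy the local rules $\Sigma$. The skeleton of that development is: (i) identify an ECR --- global $Q$-equivalence --- that is minimally informative useful (Proposition~\ref{prop:qequivcoarse}); (ii) show this ECR is induced by the invariant shuffle views (Proposition~\ref{prop:fullshuffle}); and then (iii) conclude (Theorem~\ref{thm:fullshuffleminimal}). So the first task is to define $\Sigma$-local versions of these objects and re-prove the two propositions in that setting.

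First I would define, for each source $s$, the relation of \emph{$(s,Q)$-equivalence relative to $\Sigma$}: two $s$-instances $\inst,\inst'$ are related if for every $s$-context $C$ such that both $(\inst,C)$ and $(\inst',C)$ satisfy $\Sigma$, we have $(\inst,C)\models Q \Leftrightarrow (\inst',C)\models Q$. Here the locality of $\Sigma$ is essential and is the feature I would lean on most heavily: because every rule of $\Sigma$ references relations of a single source, whether a d-instance $(\inst,C)$ satisfies $\Sigma$ decomposes into whether $\inst$ satisfies the $s$-local rules and whether $C$ satisfies the rules of the other sources, independently. This decomposition is exactly what makes the ``swap one component at a time'' argument behind Proposition~\ref{prop:qequivcoarse} go through: given two $\Sigma$-satisfying d-instances that are globally $Q$-equivalent relative to $\Sigma$, I replace one source component at a time, and locality guarantees that each intermediate hybrid still satisfies $\Sigma$, so each swap stays inside the class of admissible instances and preserves the truth value of $Q$. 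Thus global $\Sigma$-$Q$-equivalence is a useful ECR and, by the same refinement/minimality reasoning as before, it is minimally informative within the class of all views relative to $\Sigma$.

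Next I would establish the $\Sigma$-relative analog of Proposition~\ref{prop:fullshuffle}, namely that $(s,Q)$-equivalence relative to $\Sigma$ coincides with agreement on the $\Sigma$-invariant shuffle views. A shuffle $\shuf$ of $\canonview^s(Q)$ is $\Sigma$-invariant relative to $\<\tau,\canoncont^s(Q)\>$ precisely when, on every $s$-context $C$ satisfying the non-$s$ rules of $\Sigma$ and every binding $\sigma$ of type $\tau$, satisfaction of $\canoncont^s(Q)$ forces satisfaction of $\shuf(\canoncont^s(Q))$; again locality lets me speak of ``$C$ satisfying $\Sigma$'' purely in terms of the non-$s$ rules, with the $s$-local rules handled on the $\inst$ side. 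The proof of the biconditional then follows the original argument: agreement on the $\Sigma$-invariant shuffle views says exactly that, for each type $\tau$ realized by a source-join binding, the set of shuffled canonical-view patterns that hold is the same for $\inst$ and $\inst'$; and for any admissible context $C$ that makes $(\inst,C)\models Q$, one uses an invariant shuffle to transport the witnessing match from $\inst$ to $\inst'$ while keeping $C$ fixed, exactly because $C$ (restricted to the non-$s$ rules) is $\Sigma$-admissible and the shuffle is $\Sigma$-invariant on such contexts. Putting these two propositions together yields Theorem~\ref{thm:fullshufflelocalsuffices}, and Proposition~\ref{prop:makesafe} still converts the (possibly unsafe) DCQ shuffle views into relational algebra views inducing the same ECR.

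The main obstacle I anticipate is verifying that the component-swapping and match-transport steps genuinely stay within the $\Sigma$-satisfying instances, which is the one place where constraints could break the argument. The whole approach hinges on the claim that an $s$-context $C$ can be combined with an $s$-instance $\inst$ into a $\Sigma$-satisfying d-instance iff $\inst$ satisfies the $s$-local rules and $C$ satisfies the remaining rules --- and this is precisely where the locality hypothesis is indispensable and where non-local constraints (as Example~\ref{ex:localnontgdcqview} warns) would invalidate everything. I would therefore state and use explicitly the decomposition lemma that $\Sigma$-satisfaction of $(\inst,C)$ factors over sources, and check carefully that the $\Sigma$-invariance condition on shuffles is formulated so that it quantifies only over contexts satisfying the non-$s$ portion of $\Sigma$; getting that quantifier scope right is the subtle point on which correctness rests.
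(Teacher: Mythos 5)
Your overall plan---mirror the unconstrained development, with an explicit decomposition lemma saying that $\Sigma$-satisfaction of $(\inst,C)$ factors over sources---is exactly the route the paper takes, and your handling of the minimally informative ECR (the analog of Proposition~\ref{prop:qequivcoarse}) and of the ``transport'' direction (agreement on the $\Sigma$-invariant shuffle views implies global $Q$-$\Sigma$-equivalence, hence usefulness) is sound. The gap is in the \emph{converse} direction of the biconditional, which is the direction that carries the minimality claim: you must show that global $Q$-$\Sigma$-equivalence implies agreement on the $\Sigma$-invariant shuffle views, since that is what lets you transfer minimality from the abstract ECR to the concrete views (every useful d-view determines global $Q$-$\Sigma$-equivalence, and hence determines the shuffle views only if this implication holds). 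In the unconstrained proof this direction is established by plugging in one specific context, namely the canonical database of $\sigma(\canoncont^{s}(Q))$, and extracting an invariant shuffle from the homomorphism $h$ witnessing that $Q$ holds in $(\inst_2, C_1)$. Under constraints this step literally breaks: that canonical context need not satisfy $\Sigma$, so it is not an admissible context, and $Q$-$\Sigma$-equivalence tells you nothing about it. Your sketch asserts this part ``follows the original argument'' and locates the only danger in the decomposition lemma and quantifier scope, but it never constructs an admissible context, which is where the real work lies.

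The paper's fix, absent from your proposal, is to build the context in two steps: take the canonical database of $\sigma(\canoncont^{s}(Q))$ and then \emph{chase it with $\Sigma$}, using $\chase_\Sigma$ of that instance as the context $C_1$ (possibly infinite, which is harmless because the semantics quantifies over infinite instances). Two further adjustments are then needed and are not automatic. First, one re-verifies that the witnessing homomorphism $h$ still maps $\sjvars(s,Q)$ into the range of $\sigma$; this holds because the chase introduces only fresh nulls, so the elements shared between $\inst_2$ and the chased context still lie in the range of $\sigma$. Second, the proof that the extracted shuffle $\shuf$ is $\Sigma$-invariant changes in nature: $h$ is now a homomorphism from $\sigma(\shuf(\canoncont^{s}(Q)))$ into $\chase_\Sigma(\sigma(\canoncont^{s}(Q)))$, and invariance over \emph{all} $\Sigma$-satisfying contexts follows only by composing with the universal property of the chase (any $\Sigma$-satisfying instance realizing $\sigma(\canoncont^{s}(Q))$ receives a homomorphism from the chase fixing the image of $\sigma$). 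Without the chased context and this universality argument you obtain only that the shuffle views are useful---i.e.\ at least as informative as the minimal ECR---not that they are minimally informative, which is the content of the theorem.
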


The result has effective consequences for ``tame''
rules (e.g. with terminating
chase) with no non-trivial invariant shuffles. In such cases,
the $\Sigma$-invariant shuffle views degenerate to the canonical d-view,
and we can check whether the canonical d-view is non-disclosing effectively
\cite{lics16,ijcai19}.

\section{Non-local background knowledge} \label{sec:replication}
The simplest kind of non-local constraint is the
replication of a table between sources.
Unlike local constraints, these require some communication among the sources to enforce.
Thus we can consider a replication constraint to be a restricted form of source-to-source
communication.

We will see that several new phenomena  arise in the presence of replication constraints.
Recall that  with only local constraints, we have useful d-views with minimal information.
We can not guarantee the existence of such a
d-view in the presence of replication:
\begin{proposition} \label{prop:nominimalrep}
There is a schema with replication constraint $\Sigma$ and a BCQ $\util$ where
there is no minimally informative useful d-view for $\util$ within the class of all views w.r.t. $\Sigma$.
\end{proposition}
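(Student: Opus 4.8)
The plan is to prove non-existence by \emph{exhibiting a concrete witness} and then showing that the collection of useful d-views, ordered by informativeness, has no least element. Take two sources $s_1,s_2$: source $s_1$ carries unary relations $P_1,R_1$ and source $s_2$ carries unary relations $P_2,R_2$, with the replication constraint $\Sigma = \{\,\forall x\,(R_1(x)\rightarrow R_2(x)),\ \forall x\,(R_2(x)\rightarrow R_1(x))\,\}$, i.e.\ $R_1=R_2$ on every d-instance satisfying $\Sigma$. Let the utility query be the constant-free BCQ $\util = \exists x\,\big(P_1(x)\wedge R_1(x)\wedge P_2(x)\big)$. The feature I will exploit is that, thanks to replication, the ``$R$-part'' of $\util$ can be evaluated on \emph{either} source: $\util$ is equivalent under $\Sigma$ to $\exists x\,\big((P_1\cap R_1)(x)\wedge P_2(x)\big)$, which $s_1$ can prepare, and also to $\exists x\,\big(P_1(x)\wedge(P_2\cap R_2)(x)\big)$, which $s_2$ can prepare.

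Next I will define the two competing d-views and verify usefulness. Let $\globview_A$ expose on $s_1$ the unary view $P_1\cap R_1$ and on $s_2$ the view $P_2$; let $\globview_B$ expose on $s_1$ the view $P_1$ and on $s_2$ the view $P_2\cap R_2$. Since $\util$ is equivalent to $\exists x\,\big((P_1\cap R_1)(x)\wedge P_2(x)\big)$, the d-view $\globview_A$ is useful relative to $\Sigma$ (indeed even without $\Sigma$); and since under $\Sigma$ the query $\util$ is equivalent to $\exists x\,\big(P_1(x)\wedge(P_2\cap R_2)(x)\big)$, the d-view $\globview_B$ is useful relative to $\Sigma$. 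A short argument shows the two are \emph{incomparable}: fixing $P_1=\{a\}$, $P_2=\emptyset$ and varying $R_1=R_2$ over $\{a\}$ and $\emptyset$ gives $\globview_B$-indistinguishable $\Sigma$-instances that $\globview_A$ separates, and symmetrically, so neither d-view determines the other.

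The crux is to show that the information \emph{common} to $\globview_A$ and $\globview_B$ is empty. Working with the Equivalence Class Representation, informativeness is componentwise refinement, and the greatest d-view determined by both $\globview_A$ and $\globview_B$ is obtained source by source as the \emph{join} of the two per-source partitions. I will show this join is trivial on each source: on $s_1$ the two partitions distinguish $s_1$-instances by $P_1\cap R_1$ and by $P_1$ respectively, and any two $s_1$-instances $I=(P_1,R_1)$ and $I'=(P_1',R_1')$ are connected by the chain $I \to (P_1,\emptyset)\to(P_1',\emptyset)\to I'$, where the first and third steps preserve $P_1$ and the middle step preserves $P_1\cap R_1=\emptyset$; hence all $s_1$-instances fall in a single class, and symmetrically on $s_2$. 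Thus the only d-view determined by both $\globview_A$ and $\globview_B$ reveals nothing, which is plainly not useful for $\util$ (compare $\globinst$ with $P_1=R_1=R_2=P_2=\{a\}$ against the empty d-instance, both satisfying $\Sigma$). Finally, if some $\globview^*$ were a minimally informative useful d-view, then both $\globview_A$ and $\globview_B$ would determine $\globview^*$, so $\globview^*$ would be determined by their common information, i.e.\ by the trivial d-view; hence $\globview^*$ reveals nothing and cannot be useful, a contradiction. This contrasts with Proposition~\ref{prop:qequivcoarse}, whose ``swap one component at a time'' construction breaks here precisely because the intermediate swapped d-instances violate $\Sigma$.

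The main obstacle I anticipate is this crux step: correctly computing the componentwise join of the two ECRs and confirming that it collapses to the trivial partition, while keeping the quantification honestly restricted to d-instances satisfying $\Sigma$ (so that the $\Sigma$-dependent usefulness of $\globview_B$ is neither borrowed by $\globview_A$ nor smuggled into the non-usefulness conclusion). I would also isolate the final monotonicity step as a clean lemma: usefulness is upward-closed under ``is determined by'', so once the greatest common coarsening fails usefulness, no common coarsening can be useful, which is exactly what forbids a least element.
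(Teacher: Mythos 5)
Your proposal is correct, and it takes a genuinely different route from the paper's proof, even though your witness is essentially the paper's own (your schema with $P_1,R_1\mid P_2,R_2$ and $R_1=R_2$ replicated, with $\util=\exists x\,(P_1(x)\wedge R_1(x)\wedge P_2(x))$, is the paper's $R,S,T$ example with $S$ replicated and $\util=\exists x\,(R(x)\wedge S(x)\wedge T(x))$, modulo renaming). The paper derives the contradiction from two \emph{non-isomorphism-invariant} d-views built as ECRs from integer ``stretch'' functions $\stretchf_1,\stretchf_2$, chosen so that they act identically on the replicated set $\{0,1\}$ but pair its elements differently; the contradiction comes from applying $\stretchf_1$ on one source and $\stretchf_2$ on the other to get a $\Sigma$-valid indistinguishable pair disagreeing on $\util$. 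You instead use two \emph{CQ-definable} useful d-views ($\globview_A$ exposing $P_1\cap R_1$ and $P_2$, $\globview_B$ exposing $P_1$ and $P_2\cap R_2$) and show directly that any d-view whose per-source view definitions are determined, relative to $\Sigma$, by both must have every view constant. The one step you flagged as the anticipated obstacle --- lifting your per-source chain $(P_1,R)\to(P_1,\emptyset)\to(P_1',\emptyset)\to(P_1',R')$ to $\Sigma$-satisfying d-instances --- does close: pair each $s_1$-instance with the context $P_2=\emptyset$ (and $R_2$ forced equal to the current $R_1$); then the first and third links are $\globview_B$-indistinguishable because $P_2\cap R_2=\emptyset$ on both sides regardless of $R$, and the middle link is $\globview_A$-indistinguishable because $P_1\cap\emptyset=\emptyset$ on both sides, so every view of a candidate minimally informative $\globview^*$ is constant on source $s_1$, symmetrically on $s_2$, and $\globview^*$ then fails to separate the all-$\{a\}$ instance from the empty instance. (Note the subtlety your lift exploits: the \emph{global} join of the two indistinguishability relations on $\Sigma$-instances actually preserves $\util$, so no chain of full d-instances connects a $\util$-true to a $\util$-false instance; your argument succeeds precisely because views are local, letting you re-choose the $s_2$-context at each step.) What each approach buys: your argument is more elementary, avoids fixing the active domain to integers, and shows that the obstruction to minimal informativeness already arises from two CQ-based useful d-views; the paper's stretch-function machinery is heavier here but sets up the product/iteration technique reused for Proposition \ref{prop:fullrep} and the non-isomorphism-invariance phenomena of Proposition \ref{prop:norelalg}.
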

Our proof of Proposition \ref{prop:nominimalrep} uses a schema with unary relations $R$, $S$, $T$.
There are two sources: $R$ and $T$ are in different sources,
and  $S$ is replicated between the two sources. Let $\util$ be 
${\exists x. R(x) \wedge S(x) \wedge T(x)}$.

We will explain how our views act when the active domain of our
instances is over the integers. The proof will easily be seen to extend to arbitrary instances (e.g. by having the views
reveal all information outside of the integers).

Consider the functions $F_1(x) = 2x+3$ for $x$ even and
$2x+2$ for $x$ odd.  $F_2(x) = 2x+4$ for $x$ even and $3x$ odd.
Let $\stretchf_1$ be the function that applies $F_1$ to a relation
element-wise, and similarly define $\stretchf_2$ using $F_2$.  Notice
that $F_1$ maps $0$ to $3$ and $1$ to $4$, while $F_2$ maps $0$ to $4$
and $1$ to $3$.

We define a d-view  $\views_1$ via an ECR,
relating two instances $\inst$ and $\inst'$ of the source exactly when $\inst'$ can
be obtained from applying  $\stretchf_1$ on $\inst$ some number of times
(applying it to both relations of the source) or vice versa.
That is, the ECR of $\views_1$ is the smallest equivalence relation containing
each pair $(\inst, \stretchf_1(\inst))$.
Let  $\views_2$ be defined analogously using $\stretchf_2$.
To see that $\views_1$ and $\views_2$
are useful we will use the following claim, which captures their key properties:
\begin{claim}
If we have two  d-instances satisfying the replication constraint,
${(\inst_1, \inst_2)}$ and ${(\inst'_1, \inst'_2)}$, with
the replicated relation instances non-empty, then:
\begin{compactitem}
\item  We cannot have ${\inst'_1=\stretchf^i_1(\inst_1)}$, ${\inst'_2=\stretchf^j_1(\inst_2)}$ 
with ${i \neq j}$; and similarly for $\stretchf_2$.
\item We  cannot have ${\inst'_1=\stretchf^i_1(\inst_1)}$ and ${\inst_2=\stretchf^j_1(\inst'_2)}$
unless $i=j=0$;
and similarly for $\stretchf_2$.
\end{compactitem}
\end{claim}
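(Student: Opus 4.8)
The plan is to exploit the replication constraint to tie the two sources together through their shared $S$-relation, and then to reduce everything to two elementary properties of $F_1$ and $F_2$: each is injective, and neither has a finite orbit. First I would record how $\stretchf_\ell$ interacts with replication. Write $S_\inst$ for the content of the replicated relation in a source instance $\inst$, and abuse notation by writing $F_\ell(X)$ for the elementwise image of a set $X$ under $F_\ell$. Since $\stretchf_\ell$ applies $F_\ell$ to every relation of a source, $S_{\stretchf_\ell(\inst)}=F_\ell(S_\inst)$. If $(\inst_1,\inst_2)$ satisfies the replication constraint then $S_{\inst_1}=S_{\inst_2}$; I denote this common set by $S$, which is non-empty by hypothesis, and I write $S'$ for the common replicated relation of $(\inst'_1,\inst'_2)$.

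The single fact I would isolate and reuse is: for every finite non-empty set $X$ of domain elements and every $k\ge 1$, $F_\ell^k(X)\neq X$. I would derive this from the strict growth property $F_\ell(x)>x$ on the active domain: taking $m=\max X$, we get $F_\ell^k(m)>m$, so $F_\ell^k(m)\notin X$, whereas $F_\ell^k(m)\in F_\ell^k(X)$; hence $F_\ell^k(X)\neq X$. I would also note that $F_\ell$ is injective, so $F_\ell^i$ is injective and left-cancellable.

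With these in hand both items are short. For the first item, $\inst'_1=\stretchf_1^i(\inst_1)$ gives $S'=F_1^i(S)$, and $\inst'_2=\stretchf_1^j(\inst_2)$ gives $S'=F_1^j(S)$ (using $S_{\inst_2}=S$ from replication); hence $F_1^i(S)=F_1^j(S)$. Assuming $i\le j$ without loss of generality and cancelling $F_1^i$ by injectivity yields $S=F_1^{j-i}(S)$, so $j-i=0$ by the isolated fact; the $\stretchf_2$ case is identical. For the second item the two stretches run in opposite directions: $\inst'_1=\stretchf_1^i(\inst_1)$ gives $S'=F_1^i(S)$, while $\inst_2=\stretchf_1^j(\inst'_2)$ gives $S=F_1^j(S')$. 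Composing, $S=F_1^{j}(F_1^{i}(S))=F_1^{i+j}(S)$, and since $i,j\ge 0$ the isolated fact forces $i+j=0$, i.e.\ $i=j=0$.

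The main obstacle, and the only part requiring real care, is establishing the growth and no-finite-orbit property on the chosen active domain; everything else is pure cancellation. I would verify $F_\ell(x)>x$ and injectivity together by a case split on the parity of $x$, which selects the branch of $F_\ell$ and shows that the two parity classes land in complementary residue classes. I should also check the boundary and negative elements carefully, since naively extending these functions can introduce spurious fixed points that would invalidate the max-element argument; this is precisely where the non-emptiness of the replicated relation is essential, as it is what guarantees $\max X$ exists and escapes $X$ under iteration.
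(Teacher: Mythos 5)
Your high-level reduction is the same as the paper's: replication forces both sources to carry the same replicated content $S$, and both items collapse to the single fact that iterating the stretch never maps a non-empty finite set back to itself. Your use of injectivity and cancellation for the first item is only a cosmetic variation on the paper, which compares a numerical measure directly instead. The gap is in how you establish that single fact. You derive it from the pointwise growth property $F_\ell(x)>x$, but the paper explicitly takes the active domain to be all the integers, and on negative integers this property is false: $F_1(-4)=2(-4)+3=-5<-4$, and in general $F_1(x)<x$ for every $x\le -3$. So the max-element argument does not go through as written. This is exactly the ``boundary and negative elements'' check that you deferred to the end of your proposal, and it is not a routine verification --- it is where the proof breaks. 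The paper circumvents the problem for $F_1$ by replacing your measure $\max X$ with $c(X)=\max\{|x+2| \mid x\in X\}$, which strictly increases under $\stretchf_1$ at every integer, negatives included; that choice of measure is the actual content of the paper's proof, and it has no counterpart in your argument.

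For $F_2$ the situation is worse: the lemma you isolate is simply false over the integers, because $-4$ is even and $F_2(-4)=2(-4)+4=-4$ is a fixed point, so $X=\{-4\}$ is a finite non-empty set with $F_2(X)=X$. No argument can recover your lemma there, and indeed the claim itself fails for $\stretchf_2$: taking both sources to consist only of the element $-4$ (replicated relation $S=\{-4\}$), we get $\stretchf_2(\inst_2)=\inst_2$, violating the first item with $i=0$, $j=1$. To be fair, the paper's own proof stumbles at the same point --- its asserted inequality $d(X)<d(\stretchf_2(X))$ for $d(X)=\max\{|x+1.5| \mid x\in X\}$ also fails at $X=\{-4\}$ --- so this is a flaw in the paper as much as in your proposal. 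Both arguments (and the claim) become correct if the active domain is restricted to the non-negative integers, where $F_\ell(x)>x$ does hold for both functions and your route is in fact cleaner than the paper's; but over the domain the paper actually stipulates, your key step fails for $F_1$ and your key lemma is false for $F_2$.
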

\begin{proof}
Let $S$ be the content of the replicated relation in
$(\inst_1, \inst_2)$, while $S'$ is the content of the replicated
relation in $(\inst'_1, \inst'_2)$.

We focus first on $\stretchf_1$. For the first item, let $c(S) =
max \{ |x+2| \mid x\in S\}$. We can check directly that for any
non-empty $S$, $c(\stretchf_1(S)) > c(S)$. Then
$\stretchf^i_1(S)=S'=\stretchf^j_1(S)$ implies that $i=j$ as otherwise
$c(\stretchf^i_1(S)) > c(\stretchf^j_1(S))$ when $i>j$ and
$c(\stretchf^i_1(S)) < c(\stretchf^j_1(S))$ when $i<j$. For the second
item we would have, $\stretchf^i_1(S)=S'$ and $\stretchf^j_1(S')=S$
which means $\stretchf^{i+j}_1(S)=S$ which is only possible for
$i+j=0$.

For $\stretchf_2$, the proof is the same, but now using the function $d$
defined as $d(S) = max\{|x+ 1.5| \mid x\in S\}$. We can check that $d(S)
< d(\stretchf_2(S))$ for any non-empty $S$.

\end{proof}

From the claim, usefulness follows easily.  Suppose we have
$(\inst_1, \inst_2)$ satisfying $\util$, and $(\inst'_1, \inst'_2)$ is
equivalent to $(\inst_1, \inst_2)$. From $(\inst_1, \inst_2)$
satisfies $\util$, we know that the replicated relation in $\inst_1$
and $\inst_2$ is non-empty, so the claim applies to tell us that
$\inst'_1=\inst_1$, $\inst'_2=\inst_2$.

Now suppose $\views$ were a minimally informative useful d-view for $\util$.
We must have $\views_1$ and $\views_2$ determine $\views$.
Thus in particular if we have two local instances that agree
on \emph{either} $\views_1$ or $\views_2$, then  they agree on $\views$.

Consider a d-instance $\globinst$ with $R= \{0\}$,  $S=\{0, 1\}$,  $T=\{0\}$, 
and a d-instance $\globinst'$  with $R=\{3\}$, $S=\{3,4\}$, $T=\{4\}$.
These are  both valid instances (i.e. the replication
constraint is respected).
But $\globinst'$ is obtained from $\globinst$ by applying $\stretchf_1$ on the source with $R$, and 
by applying $\stretchf_2$ on the other source. 

Thus $\globinst'$ and $\globinst$ are indistinguishable
by $\views$,
but $\util$ has a match in $\globinst$ but not in $\globinst'$. This contradicts the
assumption that $\views$ is useful for $\util$.

Given Proposition \ref{prop:nominimalrep}, for the remainder of subsection we will focus on obtaining useful views that
minimize the set of queries that are UN non-disclosed.
If $Q$ is a CQ,
we say that a d-view $\views$ is  \emph{UN non-disclosure minimal for $Q$} within a class
of views $\mathcal C$ if: $\views$ is useful for $Q$
and for any BCQ $\secret$,  if there is a d-view based on $\mathcal C$
which is useful for $Q$ and UN non-disclosing for $\secret$, then $\views$ is UN
non-disclosing for $\secret$
Proposition \ref{prop:minprivate} implies that
if $\views$ is Min.Inf. within $\cal C$, then it is UN non-disclosure minimal
for any BCQ $Q$ 
within $\mathcal C$.

We can use the technique in the proof of Proposition \ref{prop:nominimalrep}
to show a  more promising new phenomenon: there may be d-views that are useful for
CQ $\util$ and
UN non-disclosing for CQ $\secret$, but they are much more intricate
than any query related to the canonical d-view of 
$\util$.
In fact we can show that with a fully-replicated relation in the utility query we can
can get useful and UN non-disclosing  views whenever this is not ruled out for trivial reasons:
\begin{proposition}
  \label{prop:fullrep}
If BCQ $Q$ contains a relation of non-zero arity replicated across all sources
then there is a d-view that is useful for $\util$ and
UN non-disclosing for BCQ $\secret$ if and only if there is no homomorphism of $\secret$ to $\util$. 
Further we can use the same d-view for every  such $\secret$ without a homomorphism into a given $\util$. In particular, there is a view that is UN non-disclosure minimal for $\util$.
\end{proposition}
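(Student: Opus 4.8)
The statement has an easy direction and a hard direction, so let me first dispatch the easy one. If there is a homomorphism of $\secret$ into $\util$, then whenever $\util$ holds on an instance, $\secret$ holds as well. In particular, on the critical instance (or on any witness for $\util$) any useful d-view determines $\util$, hence determines the true value of $\secret$; so $\secret$ is UN disclosed by every useful d-view. This shows the ``only if'' direction: existence of a useful, UN non-disclosing d-view forces the absence of a homomorphism $\secret \to \util$.

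For the substantive ``if'' direction, the plan is to reuse the stretching machinery from the proof of Proposition \ref{prop:nominimalrep}, but now to our advantage. Suppose $\util$ contains a relation of nonzero arity that is replicated across \emph{all} sources, and suppose there is no homomorphism $\secret \to \util$. The idea is to build a single d-view, given by an ECR, that is coarse enough to hide $\secret$ everywhere yet still fine enough to recover $\util$. First I would restrict attention to instances whose active domain lies in a fixed infinite set (say the integers), handling elements outside this set by revealing them verbatim, exactly as in Proposition \ref{prop:nominimalrep}. Then I would define, for each source $s$, a ``stretch'' operation $\stretchf_s$ analogous to the $\stretchf_1,\stretchf_2$ used before, and let the ECR identify two $s$-instances when one is obtained from the other by iterating $\stretchf_s$ (and its inverse) some number of times. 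The crucial point, inherited from the Claim in Proposition \ref{prop:nominimalrep}, is that because the replicated relation is \emph{shared by every source} and is forced to be nonempty whenever $\util$ holds, the stretch counts at the different sources are rigidly synchronized: two globally equivalent instances that both satisfy $\util$ must in fact be applied the \emph{same} number of stretches across all sources, and this is enough to recover the truth of $\util$. This yields usefulness, via the same argument as before.

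The heart of the matter, and the step I expect to be the main obstacle, is arranging UN non-disclosure of $\secret$ while keeping usefulness. The tool is the hypothesis that $\secret$ does not map homomorphically into $\util$. Given any instance $\inst$ on which $\secret$ holds at some tuple, I must produce an indistinguishable instance on which that same match of $\secret$ is broken. The plan is to show that the stretching equivalence class of $\inst$ contains an instance whose only ``structure'' is the copy of $\util$ needed for usefulness, together with isolated/spread-out material, so that any match of $\secret$ would have to factor through a homomorphic image of $\util$ --- which is impossible by hypothesis. Concretely, I would use the freedom in the stretch functions (mapping finitely many seed elements to widely separated targets) to manufacture, within the equivalence class, a witness instance that satisfies $\util$ but admits no match of $\secret$; indistinguishability is then immediate from the definition of the ECR, and UN non-disclosure follows. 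The delicate bookkeeping is to verify simultaneously that (i) the stretch operation never collapses the synchronization that gives usefulness, and (ii) the reachable representatives are ``sparse enough'' that the no-homomorphism hypothesis genuinely forbids $\secret$. Finally, since this construction of the d-view uses only $\util$ and the replication structure --- and nowhere refers to the specific $\secret$ beyond the no-homomorphism test --- the \emph{same} d-view works uniformly for every $\secret$ lacking a homomorphism into $\util$; combined with the easy direction, this shows the d-view is UN non-disclosing for exactly the $\secret$'s that any useful d-view could possibly hide, i.e.\ it is UN non-disclosure minimal for $\util$.
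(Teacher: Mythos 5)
Your easy direction is fine, and you have correctly identified the two properties the construction must deliver (synchronization of the per-source transformations through the nonempty replicated relation, and the fact that matches of $\secret$ must ``factor through'' $\util$). But the mechanism you propose cannot deliver the second property. The stretches of Proposition \ref{prop:nominimalrep} are functions applied \emph{element-wise} to an instance, and any such operation preserves satisfaction of every BCQ in the forward direction: if $h$ is a match of $\secret$ in $\inst$, then $F\circ h$ is a match of $\secret$ in the image instance. Worse, the stretches used there are injective, so every instance in a stretching equivalence class is isomorphic to $\inst$ and satisfies exactly the same Boolean queries; no amount of ``mapping seed elements to widely separated targets'' changes the isomorphism type. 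Consequently, on any instance where $\secret$ holds, every instance in your equivalence class also satisfies $\secret$, and the corresponding d-view is UN \emph{disclosing} for $\secret$. The renaming machinery was exactly right for the negative result of Proposition \ref{prop:nominimalrep} (where one wants equivalent instances to be hard to tell apart while toggling $\util$), but it is structurally incapable of breaking a match of $\secret$.

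The missing idea in the paper's proof is a \emph{product construction}, not a renaming. Define $\stretchf$ to map each local instance to its synchronous product with $\canondb(\util)$: elements are pairs $(x,c)$ with $x$ in the instance and $c$ in $\canondb(\util)$, and a fact holds of a tuple of pairs iff the corresponding facts hold in both factors. This operation genuinely changes the homomorphism type: the second projection is a homomorphism from $\stretchf(\globalinst)$ into $\canondb(\util)$, so any match of $\secret$ in $\stretchf(\globalinst)$ would compose to a homomorphism $\secret \rightarrow \util$, contradicting the hypothesis --- this is what gives UN non-disclosure, since $\stretchf(\globalinst)$ is equivalent to $\globalinst$ but falsifies $\secret$. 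Usefulness survives because matches of $\util$ lift diagonally ($x \mapsto (h(x),c_x)$) and project back down, and because synchronization across sources is enforced by a height argument: working over a domain closed under pairing, each application of $\stretchf$ raises the minimal $\ph$ of elements in the replicated relation by exactly one, so two equivalent d-instances with nonempty replicated relation must use the same number of product applications on every source. Since the construction mentions only $\util$ and the replication structure, the same d-view works for all admissible $\secret$, which is the uniformity claim; but without replacing your renaming stretches by this product, the non-disclosure half of the proposition does not go through.
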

 Thus even though we do not have
minimally informative useful d-views, we have d-views that are optimal from the perspective of UN non-disclosure and
utility for a fixed $\util$.

We sketch the idea of the proof of Proposition \ref{prop:fullrep}. Given utility query $\util$ and local instance
$\inst$ we can form the ``product instance'' of $\inst$ and $\util$.
The elements of a product
instance will be pairs $(x,c)$ where $x$ is a variable of $\util$ and $c$ an element of $\inst$, and there will be 
atom ${R((x_1, c_1), \dots, (x_n,c_n))}$ in the product exactly when there are corresponding atoms in $\canonview^s(\util)$ and $\inst$.
Thus we will have homomorphisms  from
this instance to both $\inst$ and $\canonview^s(\util)$.
We use ECRs that  make an $s$-instance $\inst$ equivalent to
all instances formed by iterating this product construction of $\inst$ with  $\canonview^s(\util)$.
The d-views corresponding to these ECRs
 will be UN non-disclosing because  in the product
we will have a fresh copy
of any partial match of $\util$, and so the only way for the secret query $\secret$ to hold  in the product will be if
it has a homomorphism into $Q$, which is forbidden by hypothesis.
The replication constraint
 ensures  that if we have two d-instances  that are equivalent, where the interpretation of the replicated relation is non-empty,
the number of iterations of the product construction is the same on each source. Using this fact we can ensure that
the d-views are useful.

\begin{example} \label{ex:nonlocalconstraints}
We give an example of the power of Proposition \ref{prop:fullrep}, and we highlight the difference
from the situation with only local constraints.
Suppose we have two sources, with one binary relation $S$ replicated between the two, and each source
having one non-replicated binary relation, $R$ in one source and $T$ in the other.
The utility query $\util$ 
is ${\exists x,y. R(x,y) \wedge S(x,y) \wedge T(x,y)}$ and the secret query $\secret$ is
$\exists x. R(x,x)$.

Since $\secret$ is not entailed by $\util$,
Proposition \ref{prop:fullrep} implies that there are views that are useful for $\util$ but
UN non-disclosing for $\secret$.
But it is easy to see that the canonical d-view of $\util$ is UN disclosing
for $\secret$.

In fact, one can show that
there are views with these properties that can be defined in relational algebra.
The views are still much more complex than the classes
of candidate views we used earlier.
One can also show that no negation-free views can be useful for $\util$ and UN non-disclosing for $\secret$. See the appendix for more details concerning this example.
\end{example}

The views used in Proposition \ref{prop:fullrep}  are not isomorphism-invariant: like the views
from Proposition \ref{prop:nominimalrep}, the product construction
can be seen as applying some value transformation  on the  elements of each instance.
We can show ---  in sharp contrast
to the situation with only local constraints ---  that with replication, even
to achieve this weaker notion of minimality,  it may be essential to use
d-views based on queries that are not isomorphism-invariant.

\begin{proposition} \label{prop:norelalg}
There is a d-schema with a replication constraint, along with
BCQs $\util$ and $\secret$ such that there is a d-view useful
for $\util$ and UN non-disclosing for $\secret$, but there
is no such d-view based on queries returning values in the active domain and commuting with isomorphisms.
\end{proposition}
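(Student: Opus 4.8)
The plan is to exhibit a two-source schema in which a relation is replicated \emph{inside} $\util$, so that the positive direction is immediate from Proposition \ref{prop:fullrep}, and then to defeat all generic views by a symmetry argument. Concretely, I would take sources $s_1,s_2$ with a binary relation $S$ replicated across both, a binary $R$ private to $s_1$ and a binary $T$ private to $s_2$, and set
\[
  \util = \exists x\,y.\ R(x,y)\wedge S(x,y)\wedge S(y,x)\wedge T(y,x),
\]
with $\secret = \exists u.\ S(u,u)$. Since $S$ is replicated, of nonzero arity, and occurs in $\util$, while $\secret$ has no homomorphism into the loop-free canonical instance of $\util$, Proposition \ref{prop:fullrep} yields a d-view that is useful for $\util$ and UN non-disclosing for $\secret$; by construction this view applies a value transformation (the product/stretch construction) and is therefore not active-domain valued, which is exactly the non-generic witness the statement requires.

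For the impossibility I would isolate the one structural property of the forbidden class that I use: a d-view whose views commute with isomorphisms and return active-domain values induces an equivalence-class representation $\langle\equiv_s\rangle$ that is \emph{permutation-equivariant}, i.e.\ $\D\equiv_s\D'$ implies $\pi\D\equiv_s\pi\D'$ for every permutation $\pi$ of the universe. Assume toward a contradiction that such a $\views$ is useful for $\util$ and UN non-disclosing for $\secret$. Let $\D_0$ be the single-element loop instance $R=S=T=\{(0,0)\}$; it satisfies both $\util$ and $\secret$. UN non-disclosure supplies a loop-free $\D'\equiv\D_0$, and usefulness forces $\D'\models\util$, so $\D'$ contains a witness pair $(p,q)$ with $(p,q)\in R'$, both $(p,q),(q,p)\in S'$, $(q,p)\in T'$, and $p\neq q$. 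The leverage is that $\mathrm{Aut}(\D_0)$ is the full stabiliser of $0$: by equivariance, $\pi\D'\equiv\D_0$ for every $\pi$ fixing $0$, so all the renamings $\pi\D'$ lie in the class of $\D_0$.

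I would then combine two such renamings across the replication boundary. For permutations $\pi,\pi'$ fixing $0$ with $\pi S'=\pi'S'$, the d-instance whose $s_1$-part is $(\pi R',\pi S')$ and whose $s_2$-part is $(\pi' T',\pi'S')$ respects replication and is $\views$-indistinguishable from $\D_0$ on each source, so usefulness forces it to satisfy $\util$. Writing $\alpha=\pi^{-1}\pi'$, which must fix $0$ and stabilise $S'$ setwise, the condition for this mixed instance to satisfy $\util$ no longer mentions $\pi$: it asks for a pair $(a,b)\in R'\cap S'$ with $(b,a)\in S'$ and $(b,a)\in\alpha T'$. Hence it suffices to produce one $\alpha$ (fixing $0$, stabilising $S'$) for which no pair meets this condition: the resulting mixed instance is then replication-consistent, loop-free, indistinguishable from $\D_0$, yet fails $\util$, contradicting usefulness.

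The main obstacle is guaranteeing this match-destroying $\alpha$ for a witness $\D'$ that the adversary, not we, choose. This is precisely where the orientation asymmetry built into $\util$ (the reversal of $T$ relative to $R$ across the symmetric pair $S(x,y)\wedge S(y,x)$) is meant to pay off: the involution $\alpha=(p\,q)$ reverses orientations, and since $(q,p)\in T'$ need not force $(p,q)\in T'$, it destroys the witness match — provided $\alpha$ also stabilises $S'$ and creates no new match. To cover witnesses whose $S'$ is not symmetric enough for $(p\,q)$ to stabilise it, I expect to need an additional descent: iterate the argument along the chain $\pi\D'$, in the spirit of the $\stretchf_1$-versus-$\stretchf_2$ mismatch used in Proposition \ref{prop:nominimalrep}, either to reduce to a witness whose $S'$-stabiliser is rich enough or to extract a periodicity contradiction directly. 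Making this reduction unconditional — ruling out rigid witnesses whose $S'$-stabiliser is trivial — is the crux; everything else is equivariance bookkeeping together with the replication-consistency check.
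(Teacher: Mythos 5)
Your positive half is sound and is exactly the paper's route: a replicated relation of nonzero arity occurring in $\util$, plus no homomorphism from $\secret$ into $\util$, so Proposition \ref{prop:fullrep} applies directly. The genuine gap is in the impossibility half, and it is the one you flag yourself: the mixing argument only yields a contradiction when the loop-free witness $\D'$ supplied by UN non-disclosure admits a nontrivial $\alpha$ fixing $0$, stabilizing $S'$ setwise, and destroying every match. If $S'$ is rigid (trivial $0$-fixing setwise stabilizer), the only admissible $\alpha$ is the identity, the mixed instance is $\D'$ itself, which satisfies $\util$, and the argument terminates without contradiction; nothing forces the hypothetical view's equivalence class of $\D_0$ to contain a non-rigid witness. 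The proposed ``descent'' along the $\pi\D'$ cannot repair this as stated, because all the $\pi\D'$ have isomorphic, hence equally rigid, $S'$: renaming never enriches the stabilizer, unlike the $\stretchf_1$/$\stretchf_2$ device of Proposition \ref{prop:nominimalrep}, which changes the instance and not merely its naming. Even in the non-rigid case you still owe the check that $\alpha$ creates no new match, which you also acknowledge but do not discharge. A further warning sign is that your skeleton never invokes the active-domain hypothesis (only equivariance), so if it worked it would prove impossibility for a strictly larger class of views, which you have not justified.

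The paper engineers its example precisely so that the symmetry you need is trivially available. It uses \emph{unary} relations: private $P,S,T$ on sources each also holding a replicated unary $R$, with $\util= P(x)\wedge P(y)\wedge S(y)\wedge S(z)\wedge T(z)\wedge T(x)\wedge R(w)$ and $\secret=S(x)\wedge T(x)\wedge P(x)$. Two lemmas then do the work your $\alpha$ was supposed to do. Proposition \ref{prop:agreelocal}: if two local instances agree on the views and on the replicated relation, usefulness (tested with singleton contexts) forces agreement on the private relation. Proposition \ref{prop:containedshared}: in any $\globalinst$ view-equivalent to the critical instance, every non-critical element of a private relation lies in $R$ --- proved by swapping such an element with a \emph{fresh} one, which is exactly where the active-domain hypothesis is used, since it bounds the active domain of all view outputs by $\{\critelement\}$. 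Consequently, for a match with witnesses $y_0,z_0\neq\critelement$, both lie in the replicated \emph{unary} relation, so the transposition $(y_0\,z_0)$ automatically stabilizes $R$ and the views; Proposition \ref{prop:agreelocal} then makes each private relation invariant under the swap, producing an element of $P\cap S\cap T$ and hence a match of $\secret$ in every instance equivalent to the critical one, contradicting non-disclosure. In short: unary replication makes your stabilizer condition vacuous --- membership of the two swapped witnesses in $R$ \emph{is} the stabilization --- whereas with a binary replicated $S$, rigidity is a live possibility your argument cannot exclude, so the proposal as written does not establish the impossibility half even for your chosen $\util$ and $\secret$.
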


Our next observed limitation of Proposition \ref{prop:fullrep}
is that it tells us nothing about Min.Inf. d-views in the presence of replication
within the class of CQ-based d-views.
Unfortunately, there we can get quite strong negative results.

\begin{theorem} \label{thm:nodisclosuremincq}
There is a schema with a replication constraint and a utility query $\util$
such that
 there is no CQ-based d-view $\views$ that  is minimal for UN non-disclosure
within the class of CQ views.
In particular, there is no minimally informative useful d-view within the
class of CQ-based views.
\end{theorem}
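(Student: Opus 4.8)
```latex
\noindent\textbf{Proof proposal for Theorem~\ref{thm:nodisclosuremincq}.}
The plan is to exhibit a concrete schema with a replication constraint and a
utility CQ $\util$, and then show that the set of secrets $\secret$ that can be
UN non-disclosed by \emph{some} useful CQ-based d-view has no single CQ-based
d-view that achieves all of them simultaneously. The strategy is to build the
example so that different useful CQ-based d-views protect different, mutually
incompatible secrets: for two secrets $\secret_1$ and $\secret_2$, there is a
useful CQ d-view that is UN non-disclosing for $\secret_1$ and another that is UN
non-disclosing for $\secret_2$, but \emph{no} single useful CQ d-view is
UN non-disclosing for both. Since any candidate minimal-for-non-disclosure d-view
would have to be UN non-disclosing for every secret that is individually
protectable, this incompatibility rules out the existence of such a d-view,
yielding the first claim. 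The ``in particular'' statement then follows
immediately from the remark just before the theorem: a Min.Inf. d-view within
$\mathcal C$ is always UN non-disclosure minimal within $\mathcal C$, so the
nonexistence of the latter forces the nonexistence of the former.

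\noindent First I would design the schema to reuse the replication machinery of
Proposition~\ref{prop:nominimalrep} and Proposition~\ref{prop:fullrep}: a
replicated relation $S$ forces the two sources to ``move in lockstep'', which is
the feature that lets different CQ views witness different secrets on different
critical-instance-style configurations. The key tool for the negative analysis is
Theorem~\ref{thm:critinst}: for CQ views, a secret $\secret$ is UN disclosed
exactly when it is determined at the critical instance. So for each candidate
useful CQ d-view $\views$ I would analyze what $\views$ forces about the witnesses
on the critical instance of the source schema, relative to the replication
constraint. The idea is that usefulness for $\util$ constrains a CQ d-view so
tightly --- because the canonical-style join variables must be exposed to
reconstruct $\util$ across sources --- that any CQ d-view useful for $\util$ is
forced to reveal at least one of a family of ``diagonal'' secrets on the critical
instance, while which secret it reveals depends on design choices that cannot be
made uniformly.

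\noindent Concretely, the main step is a case analysis on the variables that a
useful CQ d-view must retain as free (source-join) variables. Because $\util$ is
answered by recombining per-source views, any useful CQ-based $\views$ must expose
enough of the shared variables that, on the critical instance, the external party
can pin down equalities among witness elements --- and each such forced equality
is exactly the determinacy of some self-loop/diagonal secret $\secret_i(\vec t)$
with $\inst,\vec t\models\secret_i$. I would show: (i) for each $i$ there is a
useful CQ d-view avoiding disclosure of $\secret_i$ (protectability), using a
replacement/stretching argument in the spirit of Example~\ref{ex:nonminimal} and
Proposition~\ref{prop:nominimalrep} to build an indistinguishable instance
violating $\secret_i$; and (ii) any single useful CQ d-view discloses at least one
$\secret_i$, via the critical-instance forcing above. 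Combining (i) and (ii) gives
that no useful CQ d-view is simultaneously UN non-disclosing for all the
individually-protectable secrets.

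\noindent The hard part will be step (ii): proving that \emph{every} useful
CQ-based d-view --- not just the canonical one --- is forced to disclose one of
the $\secret_i$. Unlike the canonical d-view, an arbitrary useful CQ d-view could
in principle use cleverly chosen projections, additional atoms, or redundant
conjuncts, so I must argue at the level of the determinacy characterization rather
than about a fixed view. The cleanest route is a chase-based analysis of
determinacy of $\util$ by $\views$ relative to the replication constraint, along
the lines sketched after Theorem~\ref{thm:minimalqnoconstrdeterminacy}, showing
that usefulness forces a homomorphism structure on $\util$ that, combined with the
lockstep behaviour enforced by replication, collapses the critical-instance
witnesses enough to determine some $\secret_i(\vec t)$. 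Pinning down that the
replication constraint genuinely removes the extra freedom a CQ d-view would
otherwise have --- so that no amount of view redesign escapes all the diagonal
secrets --- is where the real work lies.
```
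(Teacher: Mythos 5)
Your high-level plan coincides with the paper's: exhibit a schema with a replication constraint and a utility CQ admitting several secrets, each individually protectable by some useful CQ d-view, such that no single useful CQ d-view protects them all; the ``in particular'' claim then follows because Min.Inf.\ implies UN non-disclosure minimality (Proposition~\ref{prop:minprivate}). But the proposal has a genuine gap at exactly the point you yourself flag as ``where the real work lies'': you never exhibit the schema, the utility query, or the family of secrets, and you never prove step (ii) --- that \emph{every} useful CQ-based d-view must disclose one of them. That concrete construction and that impossibility argument are the entire content of the theorem; as written, the proposal is a strategy statement rather than a proof.

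Moreover, the route you sketch for step (ii) is likely to fail. You propose a chase-based analysis of determinacy ``along the lines'' of Theorem~\ref{thm:minimalqnoconstrdeterminacy}, relative to the replication constraint, together with Theorem~\ref{thm:critinst}. But the chase analysis of Sections~\ref{sec:cqviews} and~\ref{sec:localconstraints} rests on locality: source-join variables are forced onto constants $c_v$ and the chase factorizes source-by-source (Lemma~\ref{lem:qtocanonv}); both properties break once a constraint spans two sources, which is precisely what replication does. Theorem~\ref{thm:critinst} is also stated for CQ views with no background knowledge, so it cannot be invoked as-is under a replication constraint. The paper sidesteps all of this by proving a stronger, CQ-free statement (Proposition~\ref{lem:oneofthese}): taking $\util = \exists w,x,y,z.\, T(x,y)\wedge S(y,z)\wedge T(z,w)\wedge P(w,x)$ with $T$ replicated between the two sources, and secrets $\secret_1=\exists x.\, S(x,x)$, $\secret_2=\exists x,y.\, T(x,y)\wedge S(y,x)$, $\secret_3=\exists x,y,z.\, T(x,y)\wedge S(y,z)\wedge T(z,x)$, it shows that \emph{any} d-view that is useful, homomorphism-invariant and $\adom$-based must UN disclose one of the three. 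The proof works directly on $\canondb(\util)$ (the four-cycle on $a,b,c,d$): usefulness forces the $\Scal$-view image to contain $a$ or $d$, and $b$ or $c$, in its active domain, and each of the four resulting pairs forces, via a homomorphism of $\canondb(\util)$ into any instance agreeing with the views on the critical instance, an equality among witnesses that verifies one of the secrets. Since CQ views are homomorphism-invariant and $\adom$-based, the theorem follows --- with no chase, no determinacy algorithm, and with UCQ and Datalog views covered for free. If you want to complete your proposal, this semantic invariance argument is the missing piece; the chase route would require rebuilding the determinacy machinery from scratch for cross-source constraints.
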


We recall that a query $Q$ is \emph{homomorphism invariant} when, for
all instances $\I$ and $\I'$ and all homomorphism $\mu$ from $\I $ to
$I'$ then $\mu(Q(\I)) \subseteq Q(\I')$. The query $Q$
is \emph{$\adom$-based} when $\adom(Q(\I)) \subseteq \adom(I)$.  Note
that all queries defined by CQs, UCQs or Datalog are
homomorphism-invariant and $\adom$ based. A view is
homomorphism-invariant or $\adom$-based exactly when its defining
query is.

Our schema has two sources, $\Pcal$ and $\Scal$ all containing an eponymous
binary relation (respectively $P$ and $S$). Both sources contain a
shared binary relation $T$. 
The utility query is $\util= \exists w,x,y,z . T(x,y) \land S(y,z) \land T(z,w) \land
P(w,x)$.
We consider three secrets:
\begin{itemize}
  \item $\secret_1= \exists x. S(x,x)$,
  \item $\secret_2= \exists x,y. T(x,y) \land S(y,x)$,
  \item $\secret_3= \exists x,y,z. T(x,y)\land S(y,z) \land T(z,x)$.
\end{itemize}

It is easy to see that, for each of these secrets, there exists a
d-view based on CQs that is useful for $\util$ and UN non-disclosing
for the secret. For instance for $\secret_1$ we have the d-view defined
by the queries:
$$Q_{\Scal}(x,w) = \exists y,z. T(x,y)\land S(y,z) \land T(z,w)$$
$$Q_{\Pcal}(w,x) = P(w,x)$$

However, we can then show (see the appendix) that:

\begin{proposition} \label{lem:oneofthese}
Any d-view that is useful for $\util$, homomorphism-invariant, and
$\adom$-based must necessarily be UN disclosing for one of the secrets
among $\secret_1 \dots \secret_3$.
\end{proposition}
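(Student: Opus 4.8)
The plan is to argue by contradiction: assume $\views = (\views_{\Scal},\views_{\Pcal})$ is useful for $\util$, homomorphism-invariant, $\adom$-based, and UN non-disclosing for all three of $\secret_1,\secret_2,\secret_3$ at once, and derive a failure of usefulness. The first move is to put the views in a normal form. A homomorphism-invariant, $\adom$-based query is exactly an (infinitary) union of CQs: a tuple $\vec a$ lies in $V(\I)$ iff there is a pattern CQ $(P,\vec p)$ with $\vec p\in V(P)$ and a homomorphism $h\colon P\to\I$ with $h(\vec p)=\vec a$. Thus I may treat every $V$ as a (possibly infinite) UCQ view, and in particular each $V$ is monotone under instance inclusion. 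I also record the key locality fact: since $\secret_1,\secret_2,\secret_3$ are all queries over the $\Scal$-schema $\{S,T\}$, whether $\views$ discloses $\secret_i$ is governed by $\views_{\Scal}$ together with the coupling of $T$ into $\views_{\Pcal}$ forced by replication.

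Second, I extract from usefulness the fact that $\views_{\Scal}$ must \emph{certify} the $T$-$S$-$T$ connectivity that $\util$ tests. Let $\mathsf{Path}$ be the $\Scal$-instance $\{T(1,2),S(2,3),T(3,4)\}$ and $\mathsf{Path}^-$ the same with the $S$-atom deleted. Closing the cycle with the $\Pcal$-side $P=\{(4,1)\}$ over the replicated $T=\{(1,2),(3,4)\}$ makes $\util$ hold over $\mathsf{Path}$ but fail over $\mathsf{Path}^-$, while $\views_{\Pcal}$ is unchanged between the two ($P$ and $T$ are identical). Hence some $V\in\views_{\Scal}$ must differ on $\mathsf{Path}$ and $\mathsf{Path}^-$; by monotonicity there is a \emph{certificate} tuple $\vec a\in V(\mathsf{Path})\setminus V(\mathsf{Path}^-)$, and by the normal form its generating pattern routes through the $S$-edge and pins the two endpoints $1,4$ to the interior $S$-edge. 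This is the object the external party uses, via the join with $\views_{\Pcal}$ on the shared $T$, to detect the cycle.

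Third comes the heart of the argument, a trichotomy obtained by folding $\mathsf{Path}$ along its three surjections and reading the certificate on the triangle. There are homomorphisms $\mathsf{Path}\to\mathsf{Tri}=\{T(1,2),S(2,3),T(3,1)\}$ (identify $4,1$), $\mathsf{Path}\to\mathsf{Loop}=\{T(1,2),S(2,2),T(2,4)\}$ (identify $2,3$), and $\mathsf{Path}\to\mathsf{Bigon}=\{T(3,2),S(2,3)\}$ (identify $1,3$ and $4,2$), whose targets witness $\secret_3,\secret_1,\secret_2$ respectively. By homomorphism-invariance the certificate survives into $\mathsf{Tri}$, where the two endpoints are identified, so $\views_{\Scal}$ produces a ``diagonal'' certificate on $\mathsf{Tri}$. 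The plan is to show this diagonal cannot be produced innocuously: either (a) every generating pattern capable of producing it requires a genuine $T$-$S$-$T$ triangle, in which case the diagonal value is present only on instances satisfying $\secret_3$, so $\views$ determines $\secret_3$ at $\mathsf{Tri}$ and discloses it; or (b) some generating pattern produces the diagonal through a strictly smaller configuration created by the fold, which — being still a hom-invariant certificate — must contain an $S$-self-loop atom or a $T$-$S$ $2$-cycle; exhibiting an instance on which that pattern fires (in the spirit of the critical-instance reasoning of Theorem~\ref{thm:critinst}, adapted to UCQ views) then discloses $\secret_1$ or $\secret_2$. In every case one secret leaks, contradicting the standing assumption (and, via Proposition~\ref{prop:minprivate}, sharpening the statement about minimally informative views).

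The main obstacle is exactly step three: the three secrets are $\Scal$-local, but disclosure is a global notion entangled with $\Pcal$ through the replicated relation $T$, so I must carry out the certificate analysis while tracking this coupling. Concretely, the subtle point is that a single view may produce the needed diagonal value through \emph{several} competing patterns, and ruling out the ``innocent'' productions is precisely what forces $\secret_1$- or $\secret_2$-structure into the surviving pattern; establishing that the trichotomy is exhaustive — that any attempt to hide the endpoint coincidence either reinstates a $TST$-triangle, plants an $S$-loop, or plants a $T$-$S$ bigon, with no fourth escape — is where homomorphism-invariance must be used in an essential way (value-transforming views escape precisely here, which is consistent with Proposition~\ref{prop:fullrep}). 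I expect this to require a careful \emph{determinacy-at-the-critical-instance} lemma for UCQ views under the replication constraint, guaranteeing that the view value on $\mathsf{Tri}$ cannot be reproduced on any indistinguishable instance without one of the three forbidden patterns.
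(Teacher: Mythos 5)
Your proposal follows the same broad skeleton as the paper's proof---work from the canonical database of $\util$, use homomorphism-invariance to push view tuples into other instances, and finish with a determinacy argument at the critical instance---but it contains two genuine gaps, one of which is a false step rather than a missing detail.

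The false step is in your second stage: you assert that the certificate tuple $\vec a \in V(\mathsf{Path})\setminus V(\mathsf{Path}^-)$ ``pins the two endpoints $1,4$,'' and your whole third stage depends on this, since folding $1$ and $4$ together only yields a ``diagonal'' (repeated-value) tuple in $V(\mathsf{Tri})$ if both endpoints actually occur in $\vec a$. Nothing forces this. Monotonicity and hom-invariance only tell you that the generating pattern of $\vec a$ must use the $S$-atom; they say nothing about which elements appear in the \emph{output} tuple. Indeed, the d-view $Q_{\Scal}(x,y)=S(x,y)$ together with $Q_{\Pcal}(x,w)=\exists y,z.\ T(x,y)\wedge P(y,z)\wedge T(z,w)$ (one of the paper's own example d-views, useful for $\util$, CQ-based, hence hom-invariant and $\adom$-based) has as its only certificate on $\mathsf{Path}$ the $S$-edge $(2,3)$: neither endpoint occurs in it, so your folding argument produces no diagonal at all. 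This is precisely why the paper cannot fix a single exposed pair: it proves two separate claims---$\adom(\Vcal[\Scal](\D))$ must contain one of $\{a,d\}$ (via an extension of $\D$ with extra $T$-edges and an isomorphism swapping $a,d$ with the fresh elements, which usefulness must distinguish), and symmetrically one of $\{b,c\}$---and then runs a four-way case split in which each of the four possible pairs discloses a \emph{different} secret among $\secret_1,\secret_2,\secret_3$. Your single-certificate trichotomy collapses these cases and loses exactly the ones where the view exposes the $S$-edge or a mixed pair.

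The second gap you acknowledge yourself: step three, the exhaustiveness of the trichotomy and the ``determinacy-at-the-critical-instance lemma for UCQ views under the replication constraint,'' is stated as an expectation, not proven, and it is the core of the statement. For comparison, once the right exposed elements are known the paper's Claim 3 makes this step elementary: if, say, $\{d,b\}\subseteq \adom(\Vcal[\Scal](\D))$, take any $\E$ view-indistinguishable from the critical instance $\I$; usefulness gives a match $T(x,y),S(y,z),T(z,w),P(w,x)$ in $\E$; the homomorphism $\xi:\D\to\E$ sending $d,a,b,c$ to $x,y,z,w$ plus hom-invariance forces $\xi(d)=x$ and $\xi(b)=z$ into $\adom(\Vcal[\Scal](\I))=\{u\}$ by $\adom$-basedness, so $x=z=u$ and $\E\models\secret_2$. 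No infinitary-UCQ normal form or pattern analysis is needed. To repair your argument you would essentially have to re-derive the paper's Claims 1--3, so as it stands the proposal is an incomplete plan rather than a proof.
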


Theorem \ref{thm:nodisclosuremincq} follows easily from the results
above, any d-view minimal for UN non-disclosure cannot be
homomorphism-invariant and $\adom$-based.

\section{Discussion and outlook} \label{sec:conc}
We have studied the ability to design views that satisfy diverse goals:  expressiveness
requirements in terms of full disclosure of
a specified  set of queries in the context of data integration, and inexpressiveness restrictions
in terms of either  minimal utility or minimizing disclosure of 
queries.
Our main results characterize information-theoretically
minimal views that support the querying of a given CQ.

We consider only a limited setting; e.g.  CQs for the utility query.
Our hope is that the work can serve as a basis for further exploration of the trade-offs
in using query-based mechanisms in a variety of settings.

Even in this restricted setting, our contribution focuses primarily on expressiveness, leaving open 
many questions of decidability and complexity. In particular we do not know
whether the $\sigmatwop$ bound of Corollary \ref{cor:sigmatwop} is tight.
Nor do we know whether the analogous question for arbitrary views -- whether
there is an arbitrary  d-view that is useful for a given $\util$ but UN non-disclosing
for a given $\secret$ -- is even decidable. Our results reduce this to a non-disclosure
question for the shuffle views.

Lastly, we mention that our positive results about CQ views (e.g. in Section \ref{sec:cqviews})
rely on an analysis of the chase, which can be infinite. Thus they
are only proven for a semantics of usefulness that considers all instances. We believe that
the analogous results where only finite instances are considered can easily be proven using the techniques of Section \ref{sec:arbviews}, but leave 
this for future work.

\bibliographystyle{aaai}
\bibliography{refs}
\newpage
\onecolumn
\appendix

\noindent {\large\textbf{APPENDIX}}
\section*{Proofs for Section \ref{sec:cqviews}: designing minimally informative CQ-based  views}

\subsection*{Proof of Theorem \ref{thm:minimalqnoconstrdeterminacy}}

Recall the statement:

\medskip

If CQ  views $\Varb$ determine a minimal Boolean CQ $\util$,  
then $\Varb$ determines each canonical view 
 $\canonview^s(\util)$  of $\util$.

\medskip

We will make use of the following property of minimal queries, which is easy to verify:
\begin{lemma}\label{lem:minimality1}
    If $Q$ is minimal, then there does not exist a homomorphism $h$ from $Q$ into itself that maps two different variables occurring in
    $Q$ to the same variable.
\end{lemma}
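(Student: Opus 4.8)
The plan is to prove the contrapositive and recognize this as the standard fact that a minimal CQ (in the sense of the paper's definition, i.e. admitting no homomorphism to a strict subquery) is a \emph{core}: every endomorphism is injective on variables. So suppose toward a contradiction that there is a homomorphism $h$ from $Q$ into itself with $h(u)=h(v)$ for two distinct variables $u,v$ occurring in $Q$. I would then exhibit a strict subquery of $Q$ to which $Q$ maps homomorphically, contradicting minimality.

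The concrete object to use is the image $h(Q)$, the conjunction of atoms $\{h(A) : A \text{ an atom of } Q\}$. First I would check that $h(Q)$ is a \emph{subquery} of $Q$: by the definition of a CQ-to-CQ homomorphism, for every atom $R(\vec z)$ of $Q$ the atom $R(h(\vec z))$ is again an atom \emph{of $Q$}, so the atoms of $h(Q)$ are literally a subset of the atoms of $Q$. The free-variable side-condition in the definition of subquery is automatic here since $\util$ is Boolean (and in general follows because a CQ-to-CQ homomorphism is required to fix free variables, so the free variables of $Q$ are exactly those free variables of $Q$ that still occur in $h(Q)$). Second, I would argue $h(Q)$ is a \emph{strict} subquery: because $h$ collapses the two distinct variables $u \ne v$, the set of variables occurring in $h(Q)$ is a proper subset of the variables occurring in $Q$; since every variable occurring in $Q$ occurs in some atom of $Q$, the atom set of $h(Q)$ cannot coincide with that of $Q$, hence $h(Q) \ne Q$.

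Finally, $h$ itself is by construction a homomorphism from $Q$ onto $h(Q)$ (it is a homomorphism into $Q$ whose image is exactly $h(Q)$, and it fixes any free variables/constants, as required of a CQ-to-CQ homomorphism). Combining the three observations, $h$ is a homomorphism from $Q$ to the strict subquery $h(Q)$, directly contradicting the assumption that $Q$ is minimal. This establishes the lemma.

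I expect the routine steps to be entirely mechanical; the only point needing care is the bookkeeping that $h(Q)$ is genuinely a strict subquery \emph{as a subquery in the paper's sense}, namely verifying both the atom-subset condition (immediate from the homomorphism definition) and strictness via the variable-count drop, together with the free-variable condition. For the intended application $\util$ is Boolean, so the free-variable condition is vacuous and this obstacle disappears, but I would phrase the argument so that it also covers the general (non-Boolean) minimal CQ case by invoking the identity-on-free-variables requirement.
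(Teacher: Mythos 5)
Your proof is correct: the paper states Lemma~1 without proof (the authors only remark it is ``easy to verify''), and your argument is precisely the standard verification they intend --- take the image $h(Q)$, check it is a subquery of $Q$ (atoms of $h(Q)$ are atoms of $Q$ because $h$ is a CQ-to-CQ homomorphism into $Q$, and the free-variable condition holds since $h$ fixes free variables), observe that collapsing $u \neq v$ strictly shrinks the variable set so $h(Q)$ is a \emph{strict} subquery, and note that $h$ itself is then a homomorphism from $Q$ to a strict subquery, contradicting the paper's definition of minimality. Nothing is missing; your care with the subquery bookkeeping (and the observation that no iteration of $h$ to an idempotent is needed, since minimality forbids \emph{any} homomorphism to a strict subquery) is exactly right.
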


We will begin the proof by showing that the determinacy of a CQ $Q$ by a set of
CQ views $\Varb$ leads to the existence
of a certain homomorphism of $Q$ to itself.

Let $Q$ be a Boolean conjunctive query
and $\Varb$ be an arbitrary set of conjunctive query views.
We will need to review
an algorithm for checking determinacy.
We fix a signature for our queries and views, which we refer to as the
\emph{original signature}. From it we derive a \emph{primed signature}, containing
a relation $R'$ for each $R$ in the original signature.
Given a formula $\phi$  in the original signature, we let $\phi'$ be formed
by replacing every relation $R$ in $\phi$ by $R'$.
We use a similar notation for a set of facts $S$ in the original signature.
In particular, for a conjunctive query $Q$ in the original signature, $Q'$ refers to the conjunctive query obtained
by replacing every relation symbol by its primed counterpart.

Given a view  $\view(\vec x)$ defined by conjunctive
query $\exists \vec y ~ \phi(\vec x, \vec y)$, the corresponding
\emph{forward view definition} (for $\view$) is the rule:
\[
\phi(\vec x, \vec y) \rightarrow \view(\vec x)
\]
while the 
\emph{inverse view definition} for $\view$ is the rule:
\[
\view(\vec x) \rightarrow \exists \vec y .  \phi(\vec x, \vec y)
\]
We let $\forwview(\views)$ and $\backview(\views)$,
denote the forward and inverse view definitions for views in $\views$, and
$\forwview'(\views)$ and $\backview'(\views)$ the same axioms but for
primed copies of the base predicates.

\begin{example}
Suppose we have a set of views $\views$ consisting only of view $V(x)$ given by CQ $\exists y,z. R(x,y) \wedge S(x,z)$.
Then $\forwview(\views)$ contains the rule:
\[
 R(x,y) \wedge S(x,z) \rightarrow V(x)
\]
while 
$\backview'(\views)$ contains the rule
\[
V(x) \rightarrow \exists y, z. R'(x,y) \wedge S'(x,z)
\]
\end{example}

Letting $\Sigma_{Q, \V}$ to be the axioms above, we
can easily see that determinacy can be expressed as: 
\[
Q \wedge \Sigma_{Q, \V} \models Q'
\]
This is a containment of CQs under existential rules, and  the algorithm of Figure ~\ref{alg:query-determinacy} checks this via the 
chase procedure, which is complete for such containments (see Section \ref{sec:prelims}).

\begin{small}
\begin{figure}[h]
\caption{Algorithm $\determinacy(Q,\views)$ for checking determinacy}\label{alg:query-determinacy}
\begin{algorithmic}[1]
    \State $F_0 \defeq \canondb(Q)$                                            \label{alg:query-determinacy:init}
    \While{\text{\bf{true}}} ~~~ \% Next 2 lines create view facts from $F_0$ and then primed facts based on views
        \State $F_1 \defeq \chase_{\forwview(\views)}(F_0)$                       \label{alg:query-determinacy:view-forward}
        \State $F_2 \defeq \chase_{\backview'(\views)}(F_1)$             \label{alg:query-determinacy:view-acc-inverse}
        \If{$\exists h: Q' \rightarrow F_2$ mapping each free variable $v$ of $Q'$ into 
$c_v \in \adom(\canondb(Q))$} 
\textbf{return} \textbf{true} \label{alg:query-determinacy:return}
        \EndIf
        \State $F_3 \defeq \chase_{\forwview'(\views)}(F_2)$       ~~~ \%These 2 lines create view facts from primed facts and then original facts based on views             \label{alg:query-determinacy:primed-view-forward}

        \State $F_4 \defeq \chase_{\backview(\views)}(F_3)$  \label{alg:query-determinacy:view-inverse}
        \State $F_5=$ restrict $F_4$ to original schema
        \If{$F_0 \neq F_5$} ~~~~ \% no fixpoint, so repeat with expanded original facts
            \State $F_0 \defeq F_0 \cup F_5$, \label{alg:query-determinacy:reinit}
        \Else \State \textbf{return} \textbf{false}
        \EndIf
    \EndWhile     
\end{algorithmic}
\end{figure}
\end{small}

Intuitively, the algorithm iteratively chases with the axioms above in rounds, checking
for a match of $Q'$ after each round.
 The correctness is implicit in \cite{NSV}; see also \cite{ustods,redspider}.

\begin{theorem} \label{thm:algworks} The algorithm of Figure \ref{alg:query-determinacy} returns true if
and only if $\views$ determines $Q$.
\end{theorem}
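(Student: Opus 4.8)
The plan is to read Figure~\ref{alg:query-determinacy} as a particular \emph{scheduling} of a fair chase of $\canondb(Q)$ under the combined rule set $\Sigma_{Q,\V} = \forwview(\views) \cup \backview'(\views) \cup \forwview'(\views) \cup \backview(\views)$, and then to invoke two facts already in hand: (i) the stated equivalence between determinacy of $Q$ by $\views$ and the entailment $Q \wedge \Sigma_{Q, \V} \models Q'$; and (ii) the completeness of the chase for CQ entailment under existential rules recalled in Section~\ref{sec:prelims}, which reduces that entailment to the existence of a homomorphism $h\colon Q' \to \chase_{\Sigma_{Q,\V}}(\canondb(Q))$ that is the identity on constants (and respects free variables in the non-Boolean adaptation; for the Boolean $Q'$ at hand this condition is vacuous). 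Thus it suffices to show that the algorithm returns \textbf{true} exactly when such a homomorphism exists.

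First I would verify that each loop iteration is well defined and finite: the two forward steps ($F_1$ and $F_3$) chase rules whose heads carry no existential quantifier and so terminate as Datalog-style fixpoints, while the two inverse steps ($F_2$ and $F_4$) fire each view fact at most once to supply its witness and hence also terminate. I would then argue that iterating the four rule groups in the fixed order of the loop constitutes a \emph{fair} chase sequence for $\Sigma_{Q,\V}$, so that its cumulative effect over rounds equals $\chase_{\Sigma_{Q,\V}}(\canondb(Q))$ up to renaming of labeled nulls. The key structural observation is that primed facts are created \emph{only} at the $\backview'(\views)$ step $F_1 \mapsto F_2$; consequently every primed fact of the full chase first appears in some $F_2$ at a finite round, which is exactly the set the algorithm inspects for a $Q'$-match.

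Soundness is then immediate: if the check on line~\ref{alg:query-determinacy:return} succeeds, the witnessing $h\colon Q' \to F_2$ embeds $Q'$ into a subinstance of the chase, so $\chase_{\Sigma_{Q,\V}}(\canondb(Q)) \models Q'$, and by (ii) and (i) $\views$ determines $Q$; the requirement that free variables map into $\adom(\canondb(Q))$ is precisely what forces $h$ to respect the distinguished elements. For completeness, if $\views$ determines $Q$ then such an $h$ exists, and since $Q'$ is finite and mentions only primed relations, its image consists of finitely many primed facts, each appearing in some $F_2$ by the observation above; choosing a round large enough that all of them coexist makes the match check succeed, so the algorithm returns \textbf{true} before reaching its fixpoint test. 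Conversely, on the \textbf{false} branch we have $F_0 = F_5$, and I would show that stabilization of the original-schema restriction propagates to the whole instance (the view facts, and then the primed and original facts, being fully regenerated from $F_0$ each round), so the current $F_2$ already contains every primed fact of the full chase and the failed match there certifies $\chase_{\Sigma_{Q,\V}}(\canondb(Q)) \not\models Q'$, i.e.\ non-determinacy.

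The main obstacle is the bookkeeping underlying the two monotonicity claims just used. Because the inverse steps introduce fresh nulls on every round, the successive $F_2$ sets are nested only up to an isomorphism fixing the constants, so the statement ``every primed fact eventually appears in some $F_2$'' must be made precise through fairness together with an isomorphism-invariance (retraction) argument rather than literal set inclusion; dually, proving that a fixpoint of the original-schema restriction is a fixpoint of the entire four-rule system is what legitimizes the \textbf{false} answer and requires care about which facts can still be triggered. Finally, I would flag that — consistent with the known undecidability of conjunctive-query determinacy (see \cite{redspider}) — this is only a \emph{partial-correctness} statement: the loop may run forever when $Q$ is not determined and no fixpoint is reached, so no termination is claimed, and the content of the theorem is exactly that the \textbf{true}-branch is the sound and complete semi-decision procedure validated by (i) and (ii).
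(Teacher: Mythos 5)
Your proposal is correct, and its architecture is exactly the one the paper gestures at; the difference is that the paper never actually proves Theorem \ref{thm:algworks} --- it observes that determinacy is expressible as the entailment $Q \wedge \Sigma_{Q,\V} \models Q'$ and then defers correctness of the algorithm to prior work (\cite{NSV}, \cite{ustods}, \cite{redspider}), whereas you reconstruct the argument those citations stand for. The one claim you should weaken is that the loop ``constitutes a fair chase sequence'' whose cumulative effect equals $\chase_{\Sigma_{Q,\V}}(\canondb(Q))$ up to renaming of labeled nulls: a run of the algorithm is not literally a chase, because the restriction step ($F_5$) discards all view facts and primed facts at the end of every round and re-derives them with fresh nulls from the enlarged $F_0$, and a chase never retracts facts. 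The provable invariant is homomorphic equivalence fixing $\adom(\canondb(Q))$ --- each $F^l_2$ maps homomorphically into the chase with the $c_v$ fixed (giving soundness), and every finite set of primed facts of the chase maps homomorphically into some $F^l_2$ (giving completeness) --- which is all the $Q'$-match test can detect, and which the retraction remark in your final paragraph already supplies; so this is an imprecision of phrasing, not a gap. For the \textbf{false} branch, stabilization $F_0 = F_5$ indeed makes every later round an isomorphic copy of the current one, so a failed match at the fixpoint certifies that no match exists in the full chase; equivalently, one can read off an explicit counterexample to determinacy by taking $I$ to be the original-schema part of $F_0$ and $I'$ to be $\unprime$ of the primed part of $F_2$, for which the fixpoint condition forces $V(I)=V(I')$ while $Q$ holds in $I$ and fails in $I'$. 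Your closing caveat that only partial correctness is claimed (the loop need not terminate) matches how the paper uses the theorem, since the chase here need not terminate.
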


Given an arbitrary input query $Q$,
we use a superscript to distinguish among the different sets computed 
during each iteration of the algorithm of Figure ~\ref{alg:query-determinacy}, e.g., 
$F^i_0(Q,\V)$ denotes instance $F_0$ during the $i$-th iteration of 
the algorithm
when run on $Q$ with $\views$.
For any $Q, \views$ such that the algorithm returns $\true$, and any $j \in 0 \ldots 5$, let
 $F^\infty_j(Q,\views)$ denote the content of $F_j(Q,\V)$ at the point where
the algorithm terminates;
this will be  $F^l_j(Q,\views)$ for some $l$.
Note that in line~\ref{alg:query-determinacy:reinit} of the algorithm
we add to each $F^l_0(Q,\V)$ only the subset of ground atoms  from $F^l_4(Q,\views)$
that belong to the original schema, since the other atoms
will be re-derived in the next round.

For any fact $F=R'(\vec c)$ in the primed signature, let $\unprime(F)$ be the corresponding
fact $R(\vec c)$. For a set of facts $S$, we let
 \[
\unprime(S)= \{\unprime(F): F \in S \mbox{ in the primed signature } \}
\]
The following lemma states the easy fact that iterating the ``chase and backchase'' steps 
        in lines   \ref{alg:query-determinacy:view-forward}
        to \ref{alg:query-determinacy:view-acc-inverse} gives a homomorphic pre-image of what
we started with:

\begin{lemma}\label{lem:backward_homomorphism}
For each $l \geq 1$, there is a homomorphism 
$\mu_l:\unprime(F^l_2(Q,\views)) \rightarrow F^l_0(Q,\views)$ that preserves 
all elements of facts in $F^l_0(Q,\V)$.
\end{lemma}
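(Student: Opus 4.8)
The plan is to observe that the two chase steps in lines~\ref{alg:query-determinacy:view-forward} and~\ref{alg:query-determinacy:view-acc-inverse} constitute a forward pass that builds view facts out of $F_0$, followed by an inverse pass that regenerates primed base facts out of exactly those view facts; consequently the primed facts created by the inverse pass are nothing more than a homomorphic preimage of the base facts that triggered them. I would first fix a round $l$ and note that $F^l_0(Q,\views)$ contains only facts over the original, unprimed base schema, since $F_0$ is only ever enlarged in line~\ref{alg:query-determinacy:reinit} by facts restricted to the original schema. Hence $F^l_1(Q,\views)=\chase_{\forwview(\views)}(F^l_0(Q,\views))$ adds only view facts (each forward view definition has a single view atom as its head, with no existentials and no new elements), and every primed fact of $F^l_2(Q,\views)=\chase_{\backview'(\views)}(F^l_1(Q,\views))$ is produced solely by firing inverse view definitions on the view facts of $F^l_1(Q,\views)$. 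Thus $\unprime(F^l_2(Q,\views))$ consists exactly of unprimed copies of these inverse view instantiations.

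Next I would record, for each view fact $\view(\vec a)\in F^l_1(Q,\views)$, a witnessing forward homomorphism. Because $\view(\vec a)$ was added by chasing with the forward view definition $\phi_\view(\vec x,\vec y)\rightarrow \view(\vec x)$, there is a homomorphism $h_{\view,\vec a}$ of the body $\phi_\view$ into $F^l_0(Q,\views)$ with $h_{\view,\vec a}(\vec x)=\vec a$; in particular each component of $\vec a$ lies in $\adom(F^l_0(Q,\views))$. When the matching inverse view definition $\view(\vec x)\rightarrow \exists \vec y.\,\phi'_\view(\vec x,\vec y)$ fires on $\view(\vec a)$, it introduces fresh labelled nulls $\vec n_{\view,\vec a}$ for $\vec y$ together with the primed atoms $\phi'_\view(\vec a,\vec n_{\view,\vec a})$. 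I would then define $\mu_l$ to be the identity on $\adom(F^l_0(Q,\views))$ and to send each fresh null introduced for $y_j$ in the step for $\view(\vec a)$ to $h_{\view,\vec a}(y_j)$. Since distinct inverse view chase steps use disjoint fresh nulls, and the $\vec a$'s already live in $\adom(F^l_0(Q,\views))$, this prescription is well defined on all of $\adom(\unprime(F^l_2(Q,\views)))$.

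Finally I would verify the two required properties. Every fact of $\unprime(F^l_2(Q,\views))$ is an atom of some $\phi_\view(\vec a,\vec n_{\view,\vec a})$, and $\mu_l$ carries it to the corresponding atom of $\phi_\view(\vec a,h_{\view,\vec a}(\vec y))$, which lies in $F^l_0(Q,\views)$ precisely because $h_{\view,\vec a}$ is a homomorphism of $\phi_\view$ into $F^l_0(Q,\views)$; hence $\mu_l$ is a homomorphism into $F^l_0(Q,\views)$. As $\mu_l$ is the identity on $\adom(F^l_0(Q,\views))$, it preserves every element occurring in a fact of $F^l_0(Q,\views)$, as claimed. I expect the only delicate point to be the null bookkeeping: I must check the case where the inverse chase finds a trigger $\view(\vec a)$ already satisfied by primed facts produced in an \emph{earlier} step rather than introducing fresh nulls. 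This causes no difficulty, since then the witnessing primed atoms are already among the instantiations handled above and impose no new constraint on $\mu_l$, but it is the one case that should be addressed explicitly rather than glossed over.
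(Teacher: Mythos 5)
Your proof is correct and takes essentially the same approach as the paper's: both arguments trace each null of $F^l_2(Q,\views)$ back to the unique inverse-view chase step (and hence the view fact $\view(\vec a)$) that created it, and send that null to the corresponding witness given by the forward-chase homomorphism of the view body into $F^l_0(Q,\views)$, while keeping the identity on $\adom(F^l_0(Q,\views))$. Your write-up is merely more explicit than the paper's about well-definedness (disjointness of fresh nulls) and the already-satisfied-trigger case, but the underlying argument is identical.
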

\begin{proof}
    Let $\Varb_{i} \subseteq \Varb$ be the set of all views providing data from 
the $i$-th datasource and let $\view_{i,1} \ldots \view_{i,j_i}$ enumerate the
view predicate names in $\Varb_i$. , where $\view_{i,j}$
has definition $\vec{B}_{i,j}$.
    Consider the  forward view definition, forward transfer as well as, for
 each $\view_{i,j}$,  the inverse view definition  associated 
     going from  $\view_{i,j}$ to the primed relations. That is, the axioms:
    \begin{align*}
        \vec{B}_{i,j}(\vec{x}, \vec y) & \rightarrow \view_{i,j}(\vec{x}) \\ 
        \view_{i,j}(\vec{x}) & \rightarrow \exists \vec{y} ~ \vec{B}'_{i,j}(\vec{x}, \vec{y}) \\ 
    \end{align*}
    where $\vec{B}_{i,j}(\vec{x})$ is some conjunction of atoms from the $i$-th datasource and
    $\{\vec{y}\}$ is disjoint from $\{\vec{x}\}$. 

Each null $\lambda$ appearing in $F^l_2(Q,\V)$ is generated by a chase step associated with
Line  \ref{alg:query-determinacy:view-acc-inverse} of  the algorithm,
triggered by a unique fact $\view_{i,j}(\sigma)$,
where $\sigma$ is a binding of $\vec{x}$,
with $\lambda$ corresponding to some variable $y_\lambda$ in $ \vec{B}'_{i,j}(\vec{x}, \vec{y})$.
For each  fact  $\Varb'_{i,j}(\sigma)$, we know we can extend to a binding
$\sigma'$ for $\vec y$
such that  $\vec{B}_{i,j}(\sigma')$ is in $F^l_0(Q,\V)$. Our homomorphism
will map $\lambda$ to $\sigma'(v_\lambda)$. 
\end{proof}

An analogous reasoning applies to lines \ref{alg:query-determinacy:primed-view-forward}
to  \ref{alg:query-determinacy:view-inverse}:

\begin{lemma}\label{lem:homomorphism_F5_det}
For any CQ $Q$, each $l \geq 1$ there exists a 
homomorphism $\mu_l:F^l_5(Q,\V) \rightarrow \unprime(F^l_2(Q,\V))$ 
that preserves all elements occurring in $\unprime(F^l_2(Q,\V))$.
\end{lemma}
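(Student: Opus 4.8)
The plan is to mirror the proof of Lemma~\ref{lem:backward_homomorphism}, exploiting the fact that the second half-round of the algorithm, lines~\ref{alg:query-determinacy:primed-view-forward}--\ref{alg:query-determinacy:view-inverse}, is structurally identical to the first half-round with the roles of the original and primed relations interchanged. In the first half we chase with $\forwview(\views)$ and then $\backview'(\views)$, passing from $F_0$ to $F_2$; here we chase with $\forwview'(\views)$ and then $\backview(\views)$, passing from $F_2$ to $F_4$ and restricting to the original schema to get $F_5$. So I would expect the same null-tracking bookkeeping to produce a homomorphism, now in the reverse direction, landing in $\unprime(F_2)$.

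The clean part of the argument concerns the facts of $F^l_5(Q,\views)$ that are \emph{generated} by the chase with $\backview(\views)$ in line~\ref{alg:query-determinacy:view-inverse}. As in the previous lemma, each fresh null $\lambda$ occurring in such a fact is produced by a single chase step for an inverse view definition $\view_{i,j}(\vec x)\rightarrow \exists \vec y\, \vec{B}_{i,j}(\vec x,\vec y)$, triggered by a unique view fact $\view_{i,j}(\sigma)$ in $F_3$, with $\lambda$ corresponding to some $y_\lambda$. The key step is to exhibit, for each such trigger, a witnessing binding inside $\unprime(F^l_2(Q,\views))$. Here I would use that $\view_{i,j}(\sigma)$ lies in $F_3=\chase_{\forwview'(\views)}(F_2)$, so either it was generated by $\forwview'(\views)$ from a set of primed body facts $\vec{B}'_{i,j}(\sigma')$ in $F_2$, or it was already present in $F_2$, in which case the fact that $F_2$ is itself a chase result under $\backview'(\views)$ guarantees the inverse primed view definition is satisfied at $\view_{i,j}(\sigma)$, again giving an extension $\sigma'$ with $\vec{B}'_{i,j}(\sigma')$ among the facts of $F_2$. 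Unpriming, $\vec{B}_{i,j}(\sigma')$ is a set of facts of $\unprime(F^l_2(Q,\views))$. I then set $\mu_l(\lambda)=\sigma'(y_\lambda)$ and declare $\mu_l$ the identity on every element already occurring in $\unprime(F^l_2(Q,\views))$; since the $\vec x$-entries of the generated fact occur in $\vec{B}'_{i,j}(\sigma')$ and hence in $\unprime(F^l_2(Q,\views))$, each generated fact is sent into $\unprime(F^l_2(Q,\views))$ and its elements are preserved.

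Consistency of $\mu_l$ across distinct nulls is then immediate, exactly as in Lemma~\ref{lem:backward_homomorphism}: each null has a unique generating trigger, so there is no clash in the definition, and elements of $\unprime(F^l_2(Q,\views))$ are fixed by fiat.

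The main obstacle I anticipate is handling the original facts that are merely \emph{carried over} into $F_5$ (the original-schema facts already present in $F_2$) rather than freshly generated. For these one must verify that each survives into $\unprime(F^l_2(Q,\views))$, i.e.\ that its relation really was regenerated in primed form during the construction of $F_2$, and not left as an original fact of $F_0$ that has no primed counterpart. This is precisely the point where the reading of $\unprime$ and the invariant that $F_2$ satisfies all inverse primed view definitions must be pinned down, and it is the step I would check most carefully; once it is settled the homomorphism assembles routinely from the two cases above.
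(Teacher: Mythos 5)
Your treatment of the freshly generated facts is exactly the ``analogous reasoning'' the paper has in mind: it offers no separate proof of this lemma, only the remark that the argument for Lemma~\ref{lem:backward_homomorphism} transfers to lines~\ref{alg:query-determinacy:primed-view-forward}--\ref{alg:query-determinacy:view-inverse}. Your case split on the provenance of a view fact of $F_3$ (created by $\forwview'(\views)$ from primed facts of $F_2$, versus inherited from $F_2$, where the $\backview'(\views)$-chase guarantees a primed witness) is correct, and it is a case distinction the earlier lemma did not need, since $F_0$ contains no view facts. However, the obstacle you flag at the end is not a routine detail to be checked but a genuine defect of the statement as literally written, and your proof cannot be completed under the paper's definition of $\unprime$: that definition keeps only the unprimed copies of the \emph{primed} facts of $F^l_2(Q,\views)$, while $F^l_5(Q,\views)$ always contains all of $F^l_0(Q,\views)$. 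Concretely, take two sources holding unary relations $R$ and $S$ respectively, let $Q$ be $\exists x \exists y\, R(x) \wedge S(y)$, and let the only view be $V(x)$ on the first source, defined by $R(x)$. The first iteration gives $F^1_2 = \{R(c_x), S(c_y), V(c_x), R'(c_x)\}$, the determinacy test fails (there is no $S'$-fact), and then $F^1_3 = F^1_4 = F^1_2$, so $F^1_5 = \{R(c_x), S(c_y)\}$, while $\unprime(F^1_2) = \{R(c_x)\}$. Since the target contains no $S$-fact, no homomorphism from $F^1_5$ exists at all.

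The repair is to take the target of the lemma to be $F^l_0(Q,\views) \cup \unprime(F^l_2(Q,\views))$, equivalently to let $\unprime$ also retain the original-schema facts of $F_2$ (these are exactly $F^l_0$, because the chase steps in lines~\ref{alg:query-determinacy:view-forward} and~\ref{alg:query-determinacy:view-acc-inverse} add only view facts and primed facts). With this reading your argument closes: the facts of $F_0$ carried over into $F_5$ are mapped by the identity, the generated facts are mapped exactly as you describe, and no clash can arise because the nulls created in line~\ref{alg:query-determinacy:view-inverse} are fresh. This enlarged target is, moreover, what the downstream induction in Lemma~\ref{lem:backward_homomorphism_determinacy} actually needs: there one composes a homomorphism $\unprime(F^{l+1}_2) \rightarrow F^{l+1}_0 = F^l_5$ from Lemma~\ref{lem:backward_homomorphism} with the map of the present lemma and with the inductive hypothesis, and the composite must fix the elements $c_v$, which lie in $F^l_0$ but, as the example shows, need not occur in the strictly-read $\unprime(F^l_2)$ at all. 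So your strategy is the right one and matches the paper's; the step you deferred is the one place where the statement itself must be amended rather than merely verified.
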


Using the two lemmas above, we can prove by induction on $l$:
\begin{lemma}\label{lem:backward_homomorphism_determinacy}
For any CQ $Q$, CQ views $\views$, and number
$l$,  there exists a homomorphism $\unprime(F^l_2(Q,\views)) \rightarrow \canondb(Q)$ that
 is the
identity
on elements $c_v$.
By priming each relation, we can get $\mu: F^l_2(Q,\views) \rightarrow \canondb(Q')$.
\end{lemma}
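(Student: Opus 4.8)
The plan is to prove the statement by induction on $l$, using Lemmas \ref{lem:backward_homomorphism} and \ref{lem:homomorphism_F5_det} as the engine of the inductive step. The right auxiliary claim to carry through the induction is: for each $l$ there is a homomorphism $h_l : F^l_0(Q,\views) \to \canondb(Q)$ that is the identity on every element $c_v$ of $\canondb(Q)$. Once this is available, the lemma is immediate. Lemma \ref{lem:backward_homomorphism} gives a homomorphism $\beta_l : \unprime(F^l_2(Q,\views)) \to F^l_0(Q,\views)$ fixing every element of $F^l_0$, so the composite $h_l \circ \beta_l$ is the desired homomorphism $\unprime(F^l_2(Q,\views)) \to \canondb(Q)$; it fixes each $c_v$ because $c_v \in \adom(F^l_0)$ is preserved first by $\beta_l$ and then by $h_l$. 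Finally, priming every relation symbol and keeping the same element map turns this into a homomorphism $\mu$ from the primed facts of $F^l_2(Q,\views)$ into $\canondb(Q')$, since $\canondb(Q') = (\canondb(Q))'$.

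For the auxiliary claim the base case $l=1$ is trivial, since $F^1_0(Q,\views) = \canondb(Q)$ and $h_1$ can be taken to be the identity. For the inductive step I would recall that the algorithm sets $F^{l+1}_0 = F^l_0 \cup F^l_5$, so it suffices to extend $h_l$ to the facts of $F^l_5$. Here I chain the two lemmas: Lemma \ref{lem:homomorphism_F5_det} gives $\alpha_l : F^l_5(Q,\views) \to \unprime(F^l_2(Q,\views))$ fixing every element of $\unprime(F^l_2)$, Lemma \ref{lem:backward_homomorphism} gives $\beta_l : \unprime(F^l_2) \to F^l_0$ fixing every element of $F^l_0$, and the inductive hypothesis gives $h_l : F^l_0 \to \canondb(Q)$; composing yields a homomorphism $g = h_l \circ \beta_l \circ \alpha_l : F^l_5 \to \canondb(Q)$.

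The one delicate point — and the main obstacle — is that $h_{l+1}$ must be a single well-defined function on $\adom(F^l_0 \cup F^l_5)$, so $h_l$ and $g$ have to agree on the overlap $\adom(F^l_0) \cap \adom(F^l_5)$. I would settle this using the preservation clauses together with a small bookkeeping observation about the chase. Since each chase step only reuses existing elements or introduces fresh nulls, every non-null element of $F^l_5$ already occurs in a primed fact of $F^l_2$, hence in $\unprime(F^l_2)$; consequently any $e \in \adom(F^l_0) \cap \adom(F^l_5)$ is non-null and lies in $\adom(\unprime(F^l_2))$. For such $e$, Lemma \ref{lem:homomorphism_F5_det} forces $\alpha_l(e)=e$, Lemma \ref{lem:backward_homomorphism} then forces $\beta_l(e)=e$, so $g(e) = h_l(e)$. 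Thus setting $h_{l+1}$ to be $h_l$ on $\adom(F^l_0)$ and $g$ on $\adom(F^l_5)$ is consistent and defines a homomorphism on $F^{l+1}_0$, which still fixes each $c_v$ because $c_v \in \adom(F^l_0)$ and $h_l(c_v)=c_v$. This closes the induction and hence proves the lemma.
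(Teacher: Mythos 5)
Your proof follows exactly the route the paper intends: the paper's own ``proof'' of Lemma \ref{lem:backward_homomorphism_determinacy} is just the sentence that it follows ``by induction on $l$'' from the two preceding lemmas, and your auxiliary invariant (a homomorphism $h_l : F^l_0(Q,\views) \to \canondb(Q)$ fixing every $c_v$), your base case, and both of your compositions are the right way to flesh that sketch out. The problem is the justification of your self-declared ``one delicate point''. Both halves of your bookkeeping claim are false. The claim that every element of $\adom(F^l_0) \cap \adom(F^l_5)$ is non-null fails beyond the first round: chase steps only add facts, so $F^l_0 \subseteq F^l_5$ and the overlap is all of $\adom(F^l_0)$, which from round $2$ on generally contains labelled nulls imported from $F^{l-1}_5$ through the update $F_0 \defeq F_0 \cup F_5$. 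The claim that every non-null element of $F^l_5$ occurs in a primed fact of $F^l_2$ fails already at $l=1$: primed facts are created by the chase with $\backview'(\views)$ from view facts, so an element reaches a primed fact only if it is the image of an \emph{exported} variable of some view match. Take the single view $V(x) = \exists y.\, R(x,y)$ and $Q = \exists x\, y.\, R(x,y) \wedge S(y)$: the only primed fact of $F^1_2$ is $R'(c_x, n)$ for a fresh null $n$, so the non-null element $c_y$ of $F^1_5$ occurs in no primed fact of $F^1_2$.

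The conclusion you need is nevertheless true, for a simpler reason, so the proof is repairable without touching its structure. Since $F^l_0 \subseteq F^l_5$, Lemma \ref{lem:homomorphism_F5_det} can only hold if $\unprime(F^l_2)$ is read as containing the original-schema facts of $F^l_2$ --- hence all of $F^l_0$ --- alongside the unprimed copies of its primed facts; under the literal ``primed facts only'' reading, the same example refutes that lemma itself, as $\{R(c_x,c_y),\, S(c_y)\}$ admits no homomorphism into $\{R(c_x,n)\}$. With the correct reading, every $e$ in the overlap $\adom(F^l_0)$ lies in $\adom(\unprime(F^l_2))$ simply because $F^l_0 \subseteq \unprime(F^l_2)$, and then the preservation clauses of the two lemmas do give $\alpha_l(e) = e$ and $\beta_l(e) = e$, hence $g(e) = h_l(e)$ on the whole overlap, as you wanted. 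Equivalently, since $F^{l+1}_0 = F^l_0 \cup F^l_5 = F^l_5$, you can drop the gluing altogether, take $h_{l+1} = g$, and check that it fixes each $c_v$ by the same observation. With this substitution your induction closes and the lemma follows exactly as you describe.
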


The following simple proposition relates the sets produced by the algorithm
on input $Q$ with the sets produced by the same algorithm on
 one of the canonical views $\canonview^s(Q)$.
\begin{lemma} \label{lem:qtocanonv} For any $l \geq 1$ 
and any source $s$, \\
$F^l_0( \canonview^s(Q) ,\Varb)$ is the same as the set of atoms in source
$s$ within $F^l_0(Q,\Varb)$.
\end{lemma}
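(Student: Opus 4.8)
The plan is to prove the statement by induction on $l$, exploiting the fact that the $F_0$-dynamics of the determinacy algorithm factorizes over the source partition. The crucial structural observation is the autonomy assumption: every view in $\Varb$ is defined over the relations of a single source, so for each source $s'$ the four families of rules used in the algorithm --- $\forwview(\Varb_{s'})$, $\backview'(\Varb_{s'})$, $\forwview'(\Varb_{s'})$ and $\backview(\Varb_{s'})$ --- mention only the original and primed relations of $s'$ together with the view relations of $\Varb_{s'}$. Writing $\pi_s(G)$ for the restriction of a fact set $G$ to the original $s$-atoms, the primed $s$-atoms, and the $\Varb_s$-view facts, this observation gives the factorization $\pi_s(\chase_{\forwview(\Varb)}(G)) = \chase_{\forwview(\Varb_s)}(\pi_s(G))$ and likewise for the other three chase steps: a rule attached to source $s$ can only fire on, and can only produce, facts that $\pi_s$ retains, while rules attached to other sources touch nothing in $\pi_s(G)$. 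The one step that couples sources, the match test for $Q'$ in line~\ref{alg:query-determinacy:return}, never modifies $F_0$; it is therefore irrelevant here, so I can ignore termination timing entirely and treat $F^l_0$ as the value defined by the recurrence $F^{l+1}_0 = F^l_0 \cup F^l_5$.

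First I would settle the base case $l=1$. Here $F^1_0(\canonview^s(Q),\Varb) = \canondb(\canonview^s(Q))$, and since the body of $\canonview^s(Q)$ is by definition exactly the conjunction of the $s$-source atoms of $Q$, its canonical database is precisely the set of $s$-source atoms of $\canondb(Q) = F^1_0(Q,\Varb)$, under the identification of each variable of $Q$ occurring in $s$ with its freezing constant. That the source-join variables are free in $\canonview^s(Q)$ while the remaining $s$-variables are quantified is immaterial, since $\canondb$ freezes bound and free variables alike.

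For the inductive step I would first record that the run on $\canonview^s(Q)$ never leaves source $s$: its $F_0$ starts as a set of $s$-atoms, and by the factorization no rule of a source other than $s$ can ever fire, so $F^l_j(\canonview^s(Q),\Varb) = \pi_s(F^l_j(\canonview^s(Q),\Varb))$ for every $j$. Assuming $\pi_s(F^l_0(Q,\Varb)) = F^l_0(\canonview^s(Q),\Varb)$, I apply the factorization to each of lines~\ref{alg:query-determinacy:view-forward}--\ref{alg:query-determinacy:view-inverse} in turn to propagate the agreement through $F_1,\dots,F_4$, obtaining $\pi_s(F^l_j(Q,\Varb)) = F^l_j(\canonview^s(Q),\Varb)$; restricting $F_4$ to the original schema then yields agreement on $F_5$, and taking the union with $F_0$ gives $\pi_s(F^{l+1}_0(Q,\Varb)) = F^{l+1}_0(\canonview^s(Q),\Varb)$. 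This is the claim, since $F_0$ lives in the original schema, where $\pi_s$ selects exactly the $s$-source atoms.

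The main point requiring care is the generation of fresh nulls in the two backward steps (lines~\ref{alg:query-determinacy:view-acc-inverse} and~\ref{alg:query-determinacy:view-inverse}), since a literal set-equality demands that the nulls created for source $s$ coincide in the two runs rather than merely matching up to isomorphism. This is harmless: by the inductive hypothesis the $\Varb_s$-view facts triggering these steps are identical in both runs, so I can fix a single null-naming convention keyed to the triggering view fact and apply it uniformly, making the generated $s$-atoms literally the same. With this convention in place the factorization identities become exact equalities and the induction goes through.
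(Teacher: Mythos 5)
Your proof is correct: source-locality of the forward and inverse view rules gives exactly the factorization of each chase step that you state, and the induction on $l$ then goes through, with the base case identification of $\canondb(\canonview^s(Q))$ with the $s$-atoms of $\canondb(Q)$ being the right starting point. The paper offers no written proof of this lemma (it is introduced as a ``simple proposition''), and your argument is precisely the one alluded to in the proof sketch of Theorem~\ref{thm:minimalqnoconstrdeterminacy} (``how the chase characterization of determinacy factorizes over a d-schema''), with the added care --- the null-naming convention keyed to triggering view facts, and setting aside the termination test --- needed to make the claimed equality of fact sets literal rather than merely up to isomorphism.
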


We are now ready for the proof of Theorem \ref{thm:minimalqnoconstrdeterminacy}:

\begin{proof} 
    Assume that $\Varb$ does not determine  $\canonview^s(\util)$, for some datasource $s$,
    and  that $\Varb$ determines $\util$. From the latter assumption and
Theorem \ref{thm:algworks} we conclude
that there exists a homomorphism $h_1$
    from $\util'$ into $F^\infty_2(\util,\Varb)$.

    By composing $h_1$ with the $\nu$ produced by Lemma~\ref{lem:backward_homomorphism_determinacy} 
    we obtain a  homomorphism $h_\util$ from $\util'$ into itself.
By unpriming, we can change  $h_\util$ to a homomorphism from $\util$ to itself, and
we will sometimes abuse notation by considering $h_\util$ in either way.
Note that $h_1$, and also $h_\util$,  maps each variable $x$ in $\sjvars(s,Q)$ to some element of the form
$c_v$, since  each such $x$ appears in atoms over distinct sources, and only
elements of the form $c_v$ appear in such atoms within $F^\infty_2(\util, \Varb)$.

We consider two cases.

Case 1: $h_\util$ is injective. Thus $h_\util$ is both an injection on $\util'$ and an injection
from source-join variables to the constant corresponding to a source-join variable.
Then $(h_\util)^{-1}$ composed with $h_1$ is
an injective  homomorphism from $\util'$ into $F^\infty_2(\util,\Varb)$ that maps each
$x$ in $\sjvars(s,Q)$ to $c_x$.
By Lemma \ref{lem:qtocanonv} we can identify
the source $s$ atoms of $F^\infty_2(\util, \Varb)$ with the atoms of
$F^\infty_2(\canonview^s(\util), \Varb)$.
Applying this identification and abusing notation as described above we can see $(h_\util)^{-1}$ as a homomorphism
of $\util$ into $F^\infty_2(\canonview^s(\util), \Varb)$. 
that maps each $v$ in $\sjvars(s,Q)$ to $c_v$. But then the algorithm of Figure
\ref{alg:query-determinacy}
would have returned true when applied to the 
views $\Varb$ and  the query $\canonview^s(\util)$. By Theorem \ref{thm:algworks}, this is a contradiction of the fact
that $\Varb$ does not determine  $\canonview^s(\util)$.

Case 2: $h_\util$ is not injective. But now, by
 Lemma \ref{lem:minimality1}, we have a contradiction of minimality.
\end{proof}

\newpage
\section*{Proofs for Section \ref{sec:arbviews}: designing minimally informative arbitrary views}
\subsection*{Proof of Proposition \ref{prop:qequivcoarse}}
Recall the statement:

\medskip

The d-view corresponding to global $Q$-equivalence is a minimally informative
useful d-view for $Q$ within the collection of
all views.

\medskip

\begin{proof}
To see that $Q$-equivalence is useful, consider two d-instances $\globinst$
and $\globinst'$ that are globally $Q$-equivalent. Source-by-source,
we can modify $\globinst$ on a source $s$
to be the same as $\globinst'$ on source $s$, and the results of $Q$ is not changed, by $s,Q$-equivalence.

Assume $\equivr$ is useful and $\globalinst \equivr \globalinst'$.
Fix a context $C$ for source $s$  such that $(\globalinst_s, C) \models Q$.
We will argue that $(\globalinst'_s, C) \models Q$. Note that for
every source $s$, identical instances for source $s$ must be $\equivr_s$, since $\equivr_s$ is an equivalence relation.
 Thus $(\globalinst_s, C) \equivr (\globalinst'_s,C)$.
Since $\equivr$ is useful for $Q$, this means
 $(\globalinst'_s, C) \models Q$ as required.
\end{proof}

\subsection*{Proof of Proposition \ref{prop:fullshuffle}}
Recall the statement:

\medskip

For any BCQ $Q$ and any source $s$,
two $s$-instances
are $(s,Q)$-equivalent if and only if they agree on each
invariant shuffle view
of $Q$ for $s$.

\medskip

The proof will go through an intermediate view, also given by an ECR.

\begin{definition}
Given two $s$-instances $\inst_1$ and $\inst_2$, we say that $\inst_1$ and $\inst_2$ are  \emph{invariant shuffle
equivalent} (relative to $Q$) if:
Whenever  $\inst_1, \sigma$ satisfies $\canonview^{s}(Q)(\vec x)$
then there is some  shuffle $\shuf$ invariant
relative to $\<\sigma, \canoncont^{s}(Q)\>$, such that $\inst_2, \sigma$ satisfies $\shuf(\canonview^{s}(Q))(\vec x)$
and similarly with the role of $\inst_1$ and $\inst_2$ reversed.
\end{definition}

We show:
\begin{proposition} \label{prop:shuffleequivqequiv}
For any source $s$ invariant shuffle equivalence is identical to
 $s,Q$-equivalence.
\end{proposition}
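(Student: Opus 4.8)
The plan is to prove both inclusions by exploiting the way a match of the Boolean query $Q$ on a combined instance $(\inst, C)$ factors into an $s$-part and a context part. The central observation, which I would isolate as a preliminary lemma, is that for any $s$-instance $\inst$ and any $s$-context $C$, we have $(\inst, C) \models Q$ if and only if there is a binding $\sigma$ of $\vec x = \sjvars(s,Q)$ with $\inst, \sigma \models \canonview^s(Q)$ and $C, \sigma \models \canoncont^s(Q)$. This just amounts to splitting the atoms of $Q$ into those of source $s$ and the rest: the shared existential variables are exactly the source-join variables $\vec x$, source-$s$ atoms can only be matched in $\inst$ while the remaining atoms can only be matched in $C$, and the purely-local and purely-context variables are precisely those existentially quantified inside $\canonview^s(Q)$ and $\canoncont^s(Q)$ respectively.

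For the direction ``invariant shuffle equivalence implies $(s,Q)$-equivalence'', I would assume $\inst_1$ and $\inst_2$ are invariant shuffle equivalent and, by symmetry, suppose $(\inst_1, C) \models Q$ for some context $C$. The factoring lemma gives a binding $\sigma$ with $\inst_1, \sigma \models \canonview^s(Q)$ and $C, \sigma \models \canoncont^s(Q)$. Invariant shuffle equivalence then supplies a shuffle $\shuf$, invariant relative to $\<\sigma, \canoncont^s(Q)\>$, with $\inst_2, \sigma \models \shuf(\canonview^s(Q))$, which is the same as $\inst_2, \sigma\circ\shuf \models \canonview^s(Q)$ for the binding $x_i \mapsto \sigma(\shuf(x_i))$. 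Applying the invariance of $\shuf$ to the instance $C$ itself yields $C, \sigma \models \shuf(\canoncont^s(Q))$, i.e. $C, \sigma\circ\shuf \models \canoncont^s(Q)$. Thus $\sigma\circ\shuf$ witnesses, via the factoring lemma, that $(\inst_2, C) \models Q$, and symmetry closes this direction.

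For the converse, I would assume $\inst_1,\inst_2$ are $(s,Q)$-equivalent and, by symmetry, suppose $\inst_1, \sigma \models \canonview^s(Q)$. The idea is to probe this match with a canonical testing context: let $C_\sigma = \sigma(\canondb(\canoncont^s(Q)))$, the frozen context query with $\vec x$ sent to the values $\sigma(\vec x)$ and the bound variables sent to fresh nulls chosen outside $\adom(\inst_2) \cup \sigma(\vec x)$. Then $(\inst_1, C_\sigma) \models Q$ by the factoring lemma, so $(s,Q)$-equivalence gives $(\inst_2, C_\sigma) \models Q$, hence a homomorphism $g$ of $Q$ into $(\inst_2, C_\sigma)$. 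Set $\sigma' = g|_{\vec x}$. Since each source-join variable occurs in an $s$-atom, $\sigma'(x_i) \in \adom(\inst_2)$; since it also occurs in a non-$s$ atom, $\sigma'(x_i) \in \adom(C_\sigma) = \sigma(\vec x) \cup \{\text{nulls}\}$; and as the nulls avoid $\adom(\inst_2)$, we conclude $\sigma'(x_i) \in \sigma(\vec x)$. This lets me define a shuffle $\shuf$ by $\shuf(x_i) = x_j$ whenever $\sigma'(x_i) = \sigma(x_j)$, so that $\sigma' = \sigma\circ\shuf$ and therefore $\inst_2, \sigma \models \shuf(\canonview^s(Q))$. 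It then remains to check that $\shuf$ is invariant relative to $\<\sigma, \canoncont^s(Q)\>$, which by the homomorphism criterion recorded after the definition of invariance reduces to exhibiting a homomorphism from $\sigma(\shuf(\canoncont^s(Q)))$ into $\sigma(\canoncont^s(Q)) = C_\sigma$ preserving $\sigma(\vec x)$; this map is read off from the restriction of $g$ to the context atoms, reusing the images $g$ assigns to the bound context variables.

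The main obstacle I anticipate is the converse direction, specifically the bookkeeping that guarantees the source-join variables cannot escape onto the fresh nulls of $C_\sigma$ — this is exactly what forces $\sigma'$ to factor through $\sigma$ as a genuine shuffle — together with the verification that the resulting $\shuf$ is invariant. The latter is the one place where the homomorphism characterization of invariance must be invoked carefully: one must observe that the context part of $g$ is precisely a homomorphism witnessing that every instance satisfying $\canoncont^s(Q)$ at $\sigma$ also satisfies $\shuf(\canoncont^s(Q))$ at $\sigma$. By contrast, the forward direction and the factoring lemma are routine once the partition of the variables of $Q$ into local, source-join, and context variables has been set up correctly.
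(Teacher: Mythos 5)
Your proof is correct and follows essentially the same route as the paper's: the forward direction restricts a match of $Q$ to the source-join variables and applies invariance of the shuffle to the context $C$ itself, while the converse probes with the frozen context built from $\sigma(\canoncont^s(Q))$ (with fresh nulls forcing the transferred match to send $\sjvars(s,Q)$ into the range of $\sigma$), defines the shuffle through a chosen inverse of the possibly non-injective $\sigma$, and certifies invariance via the homomorphism criterion, reading the homomorphism off the context part of the transferred match. The only differences are presentational: you isolate the factoring of matches into a preliminary lemma and make explicit the freshness condition (nulls avoiding $\adom(\inst_2) \cup \sigma(\vec x)$) that the paper leaves implicit in the word ``fresh''.
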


\begin{proof}
First, suppose $\inst_1$, $\inst_2$ are local instances
for source $s$ that are invariant shuffle equivalent, and suppose we have a match
of $Q$ in $(\inst_1,C)$ via $h^{1,C}$. We want to show that there
is a match in $(\inst_2, C)$. 

We know that the variables in $\sjvars(s,Q)$ are mapped by
$h^{1,C}$ into $\inst_1$. Let $h_0$ be the restriction of $h^{1,C}$ to the variables
of $\sjvars(s,Q)$. Then $\inst_1, h_0$  satisfies $\canonview^{s}(Q)$.
Thus by  shuffle equivalence there is some shuffle $\shuf$ invariant
relative  to $\<h_0, \canonview^{s}(Q)\>$,
such that
$\inst_2, h_0 \models  \shuf(\canonview^{s}(Q))$, with witness $h_2$ extending $h_0$.
We also know that $C, h_0$ satisfies  $\canoncont^{S_0}(Q)$, since
$h^{1,C}$ witnesses this as well. 
 Applying the definition of shuffle invariance,
 $C, h_0$ satisfies
 $\shuf(\canoncont^{S_0}(Q))$. Let $h^{\shuf,C}$ be a homomorphism witnessing this.
Note that since $h^{\shuf,C}$ extends $h_0$ and $h_0$ restricts
$h^{1,C}$, $h^{\shuf,C}$ and $h^{1,C}$ agree on their common variables.
Define $h^{2,C}$ by mapping the variables in $\svars(s,Q)$ as in $h_2$,
and those variables  outside of $\sjvars(s,Q)$ as in $h^{\shuf,C}$.
Since these are two compatible homomorphisms,
$h^{2,C}$ witnesses that $(\inst_2,C) \models Q$.
This completes the argument that invariant shuffle equivalence implies global $Q$-equivalence.

We now show that global $Q$-equivalence implies  shuffle equivalence.
Suppose $s$-instances $\inst_1$, $\inst_2$ are globally $Q$-equivalent, and
 $\inst_1, \sigma$ satisfies $\canonview^{s}(Q)(\vec x)$,
We will show that there is a
shuffle $\shuf$, invariant
relative  to $\<\sigma, \canonview^{s}(Q)\>$,
such that $\inst_2, \sigma \models \shuf(\canonview^{s}(Q))(\vec x)$.

Let $C_1$ be the  canonical database of $\sigma(\canoncont^{s}(Q))$.
That is, for each source $s$ other than $s$, we have a fact for each
atom of $s$ atom of $Q$, where each  variable $x$ of $\sjvars(s,Q)$ is replaced
by $\sigma(x)$ and each variable $x$ not in $\sjvars(s,Q)$ is replaced by a fresh 
element $c_x$. 

$Q$ clearly holds in $(\inst_1, C_1)$. So by global $Q$-equivalence, $Q$ holds in $(\inst_2, C_1)$ via
some homomorphism $h$.  Note that in $(\inst_2, C_1)$ the only elements that
are shared between $\inst_2$-facts and $C_1$-facts lie in the range of $\sigma$.
Thus $h$ must map the variables in $\sjvars(s,Q)$ to the image of $\sigma$.
The binding $\sigma$ may not be injective, but we let $\sigma^{-1}$ be ``some inverse''
that is, any  function from the range of $\sigma$ to variables such that for any $c$ in the range of $\sigma$
$\sigma(\sigma^{-1}(c))=c$.
Let $\shuf$ map any variable $x \in \sjvars(s,Q)$  to $\sigma^{-1}(h(x))$.
So $\sigma(\shuf(x))=h(x)$.

We first claim 
that $\shuf$ is  invariant relative  to $\<\sigma, \canoncont^{s}(Q)\>$.
We show this by arguing that $h$ is a homomorphism
from $\sigma(\shuf(\canoncont^{s}(Q)))$ to
$\sigma(\canoncont^{s}(Q))$. By definition
of $\shuf$, we have for each atom $A(x_1 \ldots x_m, y_1 \ldots y_n)$ 
of $\canoncont^{s}(Q)(\vec x)$,
\begin{align*}
A(h(x_1) \ldots h(x_m), h(y_1) \ldots h(y_n)) = \\
A(\sigma(\shuf(x_1)) \ldots \sigma(\shuf(x_m)),  h(y_1) \ldots h(y_n))
\end{align*}
$A(h(x_1) \ldots h(x_m), h(y_1) \ldots h(y_n))$
is in $\sigma(\canoncont^{s}(Q))$ since by assumption $h$ is a homomorphism
from $Q$ to $(\inst_2, C_1)$,  $C_1$ is the canonical database of $\sigma(\canoncont^{s}(Q))$,
and $\inst_2$ is an $s$-instance, and hence cannot
 contain any facts over the relations in $(\shuf(\canoncont^{s}(Q)))(\sigma)$.
Thus 
\[
A(\sigma(\shuf(x_1)) \ldots \sigma(\shuf(x_m)),  h(y_1) \ldots h(y_n))
\] lies in
$\sigma(\canoncont(Q))$ 
 as required.

We next claim that $\inst_2, \sigma \models \shuf(\canonview^{s}(Q))$.
The witness
will be the extension $h'$
of $\sigma$ that maps all variables
in $\svars(s,Q)-\sjvars(s,Q)$ via $h$.

Consider an atomic formula 
$A(x_1 \ldots x_m, y_1 \ldots y_n)$
 of $\canonview^{s}(Q)$, where $\vec x$
are free variables of $\canonview^{s}(Q)$.
Therefore
$A(\shuf(x_1) \ldots \shuf(x_m), y_1 \ldots y_n)$
 is a generic atom of $\shuf(\canonview^{s}(Q))$.
To  argue that $h'$ is a homomorphism that witnesses $\inst_2, \sigma \models \shuf(\canonview^{s}(Q))$,
we need to argue that 
\[
A( \sigma(\shuf(x_1)) \ldots \sigma(\shuf(x_m)), h(y_1) \ldots h(y_n))
\]
holds in $\inst_2$.

But by the definition of $\shuf$, this is equivalent to showing that
\[
A(h(x_1) \ldots h(x_m), h(y_1) \ldots h(y_n) )
\]
holds in $\inst_2$.
But this follows  since
$h$
is a homomorphism of $Q$ into $(\inst_2, C_1)$.
\end{proof}

To complete the proof of Proposition  \ref{prop:fullshuffle}
we show:
\begin{proposition} \label{prop:intermediate}
For any CQ $Q$ and source $s$,
two $s$-instances
are invariant shuffle equivalent if and only if they agree on each 
invariant shuffle view
of $Q$ for $s$.
\end{proposition}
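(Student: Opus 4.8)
The plan is to relate the two conditions through the satisfaction behaviour at a single binding, exploiting the fact that whether a shuffle is invariant depends only on the equality type of the binding, not on the actual values it uses. First I would record two simple reformulations. For any $s$-instance $\inst$, shuffle $\shuf$, and binding $\sigma$ of the source-join variables, $\inst, \sigma \models \shuf(\canonview^s(Q))$ holds exactly when $\inst, \sigma \circ \shuf \models \canonview^s(Q)$ (this is immediate from the fact that shuffling only renames free variables, and holds even when $\shuf$ is non-injective). Moreover the identity shuffle is always invariant, so $\inst, \sigma \models \canonview^s(Q)$ is the special case where the identity is among the satisfied invariant shuffles. Consequently, writing $\tau$ for the type of $\sigma$, membership $\sigma \in \view_{\tau}(\inst)$ means precisely that some shuffle invariant relative to $\tau$ is satisfied by $\inst$ at $\sigma$.

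Second I would prove the \emph{type lemma}: a shuffle $\shuf$ is invariant relative to $\<\sigma, \canoncont^s(Q)\>$ if and only if it is invariant relative to $\<\tau, \canoncont^s(Q)\>$, where $\tau$ is the equality type of $\sigma$. Using the homomorphism characterisation of invariance noted just after the definition, invariance of $\shuf$ at $\sigma$ is witnessed by a homomorphism from $\sigma(\shuf(\canoncont^s(Q)))$ into $\sigma(\canoncont^s(Q))$; since any two bindings of the same type induce isomorphic canonical databases (via the bijection identifying their ranges), the existence of such a homomorphism depends only on $\tau$. This lets me freely pass between ``invariant relative to the type $\tau_i$'' (as used in the view definitions) and ``invariant relative to $\sigma$'' (as used in the definition of invariant shuffle equivalence).

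The backward direction, that agreement on the views implies invariant shuffle equivalence, is then easy: if $\inst_1, \sigma \models \canonview^s(Q)$, then taking $\tau_i$ to be the type of $\sigma$, the identity disjunct gives $\sigma \in \view_{\tau_i}(\inst_1)$; agreement yields $\sigma \in \view_{\tau_i}(\inst_2)$, hence $\inst_2, \sigma \models \shuf(\canonview^s(Q))$ for some shuffle invariant relative to $\tau_i$, which by the type lemma is invariant relative to $\sigma$. This is exactly the shuffle required by invariant shuffle equivalence, and the symmetric argument closes this direction.

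The hard part will be the forward direction, since invariant shuffle equivalence only supplies a matching shuffle when the \emph{unshuffled} canonical view is satisfied, whereas membership in $\view_{\tau_i}$ may be witnessed by a nontrivial invariant shuffle. To bridge this gap I would establish a \emph{composition lemma}: if $\shuf$ is invariant relative to $\sigma$ and $\shuf'$ is invariant relative to $\sigma \circ \shuf$, then $\shuf \circ \shuf'$ is invariant relative to $\sigma$; this follows by composing the two witnessing homomorphisms along $(\sigma \circ \shuf \circ \shuf')(\canoncont^s(Q)) \to (\sigma \circ \shuf)(\canoncont^s(Q)) \to \sigma(\canoncont^s(Q))$. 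Then, given $\sigma \in \view_{\tau_i}(\inst_1)$ witnessed by an invariant $\shuf$ with $\inst_1, \sigma \circ \shuf \models \canonview^s(Q)$, I apply invariant shuffle equivalence at the binding $\sigma \circ \shuf$ to obtain an invariant $\shuf'$ with $\inst_2, \sigma \circ \shuf \models \shuf'(\canonview^s(Q))$, i.e.\ $\inst_2, \sigma \models (\shuf \circ \shuf')(\canonview^s(Q))$; the composition lemma shows $\shuf \circ \shuf'$ is an invariant disjunct of $\view_{\tau_i}$, so $\sigma \in \view_{\tau_i}(\inst_2)$, and the symmetric argument yields full agreement on every $\view_{\tau_i}$. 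I expect the composition lemma, together with the bookkeeping of how the type of the binding changes under $\sigma \mapsto \sigma \circ \shuf$, to be the only genuinely delicate point.
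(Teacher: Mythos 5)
Your proposal is correct and follows essentially the same route as the paper's own proof: the easy direction is the identical identity-shuffle argument, and your hard direction---passing to the pre-image binding $\sigma \circ \shuf$, invoking invariant shuffle equivalence there, and composing the two shuffles---is precisely the paper's construction of the composite shuffle $\shuf''$. Your reformulation $\inst, \sigma \models \shuf(R) \Leftrightarrow \inst, \sigma\circ\shuf \models R$ is the paper's ``pre-image'' observation, and your composition lemma together with the type lemma amounts exactly to the paper's claim that $\shuf''$ is invariant relative to $\tau$, which the paper proves by chaining the same two invariances at an arbitrary binding of type $\tau$.
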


\begin{proof}
We first show that if  $\inst_1$ and $\inst_2$ agree on the  invariant shuffle views
of $Q$, they are invariant shuffle equivalent.
Suppose $\inst_1, \sigma \models \canonview^{s}(Q)$, and let $\tau$ be the type of $\sigma$.
Since the identity is invariant  relative
to  $\tau$, we have
$\inst_1, \sigma$ satisfies $V_\tau$, and thus $\inst_2, \sigma$ must satisfy it.
Therefore there is $\shuf$ that is invariant relative to  $\tau$
such that $\inst_2, \sigma \models \shuf(\canonview^{s}(Q))$.
Since $\tau$ is of type $\sigma$, we have
$\shuf$ is invariant relative to $\<\sigma, \canoncont^s(Q)\>$.
Arguing symmetrically for $\inst_2$, we see that $\inst_1$ and $\inst_2$ are invariant shuffle 
equivalent.

In the other direction, suppose $\inst_1$ and $\inst_2$ are invariant shuffle equivalent.
We will argue that they agree on each  invariant shuffle view $V_\tau$.

Towards that end, suppose
$\inst_1, \sigma \models V_\tau$. That is,
$\inst_1, \sigma \models
\tau \wedge \shuf(\canonview^{s}(Q))$ for some 
$\shuf$ that is invariant relative to $\tau$.
Let $\sigma'$ be the pre-image of $\sigma$ under $\shuf$: that is, the
variable binding defined by
$\sigma'(x)=\sigma(\shuf(x))$. 
Then $\inst_1, \sigma' \models  \canonview^{s}(Q)$ by definition.
Thus by invariant shuffle equivalence, there is $\shuf'$ invariant for
$\sigma', \canoncont^{s}(Q)$ such that $\inst_2, \sigma' \models \shuf'( \canonview^{s}(Q))$.

Let $\shuf''=\shuf' (\shuf)$. We will show that $\shuf''$ witnesses that $\inst_2, \sigma \models V_\tau$.
We first verify invariance:

\begin{claim}
$\shuf''$  is invariant relative to $\tau,  \canoncont^s(Q)$.
\end{claim}
\begin{proof}
Suppose $\sigma_0$ satisfies $\tau$, and $\inst_0, \sigma_0 \models  \canoncont^s(Q)$.
Let $\sigma'_0$ be the  pre-image of $\sigma'$ under $\shuf$.
Note that  a shuffle that is invariant for $\sigma$ must be invariant
for $\sigma_0$, since $\sigma_0$ satisfies all the equalities that $\sigma$ does.
Similarly a shuffle  that is invariant for $\sigma'$ must be invariant
for $\sigma'_0$.
 We can see that  the following chain of implications:

\begin{align*}
\inst_0, \sigma_0 \models \shuf(\canoncont^s(Q))  \mbox{by invariance of $\shuf$ for $\tau$} \\
\inst_0, \sigma'_0 \models \canoncont^s(Q) \mbox{by definition of $\sigma'_0$} \\
\inst_0, \sigma'_0 \models \shuf'(\canoncont^s(Q)) \mbox{by invariance of $\shuf'$ for $\sigma'_0$} \\
\inst_0, \sigma_0 \models \shuf(\canoncont^s(Q)) \mbox{by definition of $\sigma'_0$ again}
\end{align*}

\end{proof}

We now  show that $\inst_2, \sigma$ satisfies the corresponding shuffled query.

\begin{claim}
$\inst_2, \sigma \models  \shuf''( \canonview^{s}(Q))$.
\end{claim}
\begin{proof}

We make the following observation.
For any  instance $\inst$, bindings $\sigma_0$, CQs $R$, and  shuffles $\shuf_0$, 
let $\sigma_1$ be the pre-image of $\sigma_0$ under $\shuf_0$.
Then $\inst, \sigma_0 \models \shuf(R)$ if and only if $\inst, \sigma_1 \models R$.

Let $\sigma''$ be the pre-image of $\sigma$ under $\shuf''$. Note
that $\sigma''$ is also the pre-image of $\sigma'$ under $\shuf'$.

From the observation above, we see  that the following are equivalent:
\begin{align*}
\inst_2, \sigma \models  \shuf( \canonview^{s}(Q)) \\
\inst_2, \sigma'' \models  \canonview^{s}(Q) \\
\inst_2, \sigma' \models \shuf'( \canonview^{s}(Q))
\end{align*}
which gives the proof of the claim.
\end{proof}

 Putting together the two claims, We conclude that
 $\inst_2, \sigma$ satisfies  $V_\tau$ as required, which completes
the proof of  Proposition \ref{prop:intermediate}.
\end{proof}

\subsection*{Proof of Proposition \ref{prop:makesafe}}
Recall the statement:

\medskip

For every view defined by a DCQ
(possibly unsafe), there is
a set of relational algebra views $\V'$ that induces  the same ECR.
Applying this to the invariant shuffle views for a CQ $\util$, we can find a relational algebra-based minimally informative useful d-view 
for $\util$ within the class of all views.

\medskip

Putting the conclusion ``same ECR'' another way: if $V$ is the original
DCQ view, then we obtain a finite set of views $\V'$ with  the definition of each
view in relation algebra, such that
$\V'$ determines $V$ and $V$ determines $\V'$.

Clearly if we have this for a single DCQ view $V$, we obtain it for a finite set of views
(and hence for a d-view) by applying the construction to each view in the set.

We consider a  DCQ $V(x_1 \ldots x_n)$ 
defined by $\bigvee_i \phi_i$.
For each $\phi_i$ let $\dvars_i$ be the set of variables within them,
and for each subset $S$ of the vars let $D_S$ be the set of $i$ such that
$\phi_i$ uses variables $S$.

Given a set of variables $S=x_{j_1} \ldots x_{j_k}$ with $D_S \neq \emptyset$,
create a view $V_S(x_{j_1} \ldots x_{j_k})$
defined by
\begin{align*}
\bigvee_{i \in D_S} \phi_i(x_{j_1} \ldots x_{j_k}) \wedge
\neg (\bigvee_ {S' \subsetneq D_S, \dvars(\phi_j)=S'} \phi_j)
\end{align*}

\begin{example}
We explain the construction of relational algebra views by example.
Suppose we have a view $V$ given by a DCQ:
\[
R(x,y,z) \vee P(x,y,z) \vee W(x,y, w) \vee T(x,y)
\]

We  have three sets $S$ such that $D_S \neq \emptyset$:
$S_1=\{x,y,z\}, S_2=\{x,y,w\}$ and $S_3=\{x,y\}$.

Our construction will  create views for each of these.

$V_{S_1}(x,y,z)$ is defined by query:
\[
[R(x,y,z) \vee P(x,y,z)] \wedge \neg T(x,y)
\]
$V_{S_2}(x,y)$ is defined by query:
\[
W(x,y) \wedge \neg T(x,y)
\]
Finally, $V_{S_3}(x,y)$ is defined by the query $T(x,y)$.

It is not difficult to see that these views determine $V$ and vice versa.
\end{example}

Returning to the general case,
we claim that the set of views $V_S$ determines $V$ and vice versa.

In one direction suppose $\inst_1$ and $\inst_2$ agree on each $V_S$, and
$\inst_1 \models V(\vec t)$. Choose $i$  such that $\inst_1 \models \phi_i(\vec t)$ with
$S_i=\dvars_i$ minimal. Let $\vec t'$ be the subtuple
of $\vec t$ corresponding to the variables of $\phi_i$.
Then $\inst_1 \models V_{S_i}(\vec t')$ and hence $\inst_2 \models V_{S_i}(\vec t')$.
From this we see that $\inst_2 \models V_S(\vec t)$.

In the other direction, suppose $\inst_1$ and $\inst_2$ agree on $V$, and
$\inst_1 \models V_S(\vec t)$. Fix $\phi_i$ with variables from $S$
such that $\inst_1 \models \phi_i(\vec t)$.  We need to show $\inst_2 \models V_S(\vec t)$. We can
assume by induction that $V$ determines $V_{S'}$
for each $S'$ that is a proper subset
of $S$. 
First consider the case where
 $S$ consists of all variables.
Then  $\inst_1 \models V(\vec t)$ hence $\inst_2 \models V(\vec t)$, and
thus there is some $j$ such that $\inst_2 \models \phi_j(\vec t)$.
If $\phi_j$ contains all the variables of $\phi_i$. Using
the induction hypothesis and $\phi_j$
we can conclude that $\inst_2 \models V_S(\vec t)$ as required.
If $\phi_j$ contains  a proper subset $S'$ of the variables in $S$,
then we have $\inst_2 \models \phi_k(\vec t')$ for $\vec t'$
a proper subtuple of $\vec t$.  Choose $\phi_k$ and $\vec t'$
with this property such that the variables $S'$ involved are minimized.
The $\inst_2 \models V_{S'}(\vec t')$ so by the induction hypothesis
$\inst_1 \models V_{S'}(\vec t')$, which contradicts the facts that
$\inst_1 \models V_S(\vec t)$.

Next consider the case where $S$ is a proper subset of the variables.
We extend $\vec t$ to   $\vec t'$ choosing elements 
outside the
active domain of both $\inst_1$ and $\inst_2$. $\inst_1 \models V(\vec t')$,
and $\inst_2 \models V(\vec t')$. Thus we have a proper subtuple $\vec t''$ of
$\vec t'$ and a disjunct $\phi_k$ such $\inst_2 \models \phi_i(\vec t'')$.
As above, we can choose $\vec t''$ minimal.
By our choice of the elements in $\vec t' -\vec t$, we must have
$\vec t''$ a subtuple of $\vec t$. If $\vec t''=\vec t$, then we can
conclude that $\inst_2 \models V_S(\vec t)$ as required. If $\vec t''$
is a proper subtuple, we argue by contradiction of the induction hypothesis
as above.

\myeat{
}

\subsection*{Reducing the complexity of the CQ view design  problem}
\michael{Is this interesting enough to include?}
In the body of the paper we showed that to test for useful and UN non-disclosing
CQ views, when our utility query $\util$ is a minimal CQ, we need only
test that the canonical d-view is safe. This requires performing a specialized ``disjunctive chase
with constant-equality EGDs'' (described in \cite{lics16,ijcai19}), after which we check
that $\util$ holds on each instance produced by the chase process. This give
the bound in Corollary  \ref{cor:sigmatwop}.

We now mention some additional insights that can help optimize
this algorithm.
Throughout this subsection, we rely on the critical instance method, and in particular
on Theorem \ref{thm:critinst}. 

The canonical views are a way of decomposing a utility query.
But it turns our that we can also apply the canonical views as a way of decomposing a
secret query.
We start with the following observation, which states that in analyzing
secrecy of a set of CQ views, we can break up the secret query into its canonical views
and analyze them one at a time:


\begin{proposition} \label{prop:decomp} For any Boolean CQ $\secret$ and CQ views $\Varb$,  $\Varb$ is  UN non-disclosing for
$\secret$ if and only if $\Varb$ is UN non-disclosing for $\canonview^s(\secret)$ for
some $s$.
\end{proposition}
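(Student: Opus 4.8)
The plan is to prove the contrapositive form: $\Varb$ is UN disclosing for $\secret$ if and only if $\Varb$ is UN disclosing for $\canonview^s(\secret)$ for \emph{every} source $s$. The main tool is the critical instance. Write $\globalinst_0$ for the critical instance of the source schema and $\vec t_0$ for the tuple all of whose entries are $\critelement$. By Theorem \ref{thm:critinst}, $\Varb$ is UN disclosing for the Boolean query $\secret$ exactly when $\Varb$ determines $\secret$ at $\globalinst_0$ (the critical instance satisfies every BCQ, so it is a witness whenever one exists). Using the free-variable form of the same reduction from \cite{lics16,privacyjair}, $\Varb$ is UN disclosing for $\canonview^s(\secret)$ exactly when $\Varb$ determines $\canonview^s(\secret)(\vec t_0)$ at $\globalinst_0$; I abbreviate this Boolean query $\canonview^s(\secret)(\vec t_0)$ --- the source-$s$ atoms of $\secret$ with every source-join variable set to $\critelement$ and the remaining $s$-local variables existentially quantified --- by $\psi_s$. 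The algebraic fact underlying the whole argument is the decomposition $\secret \equiv \exists \vec x\, \bigwedge_s \canonview^s(\secret)(\vec x_s)$, where $\vec x$ ranges over the source-join variables of $\secret$; setting all of $\vec x$ to $\critelement$ turns the right-hand side into $\bigwedge_s \psi_s$. So the claim reduces to: $\Varb$ determines $\secret$ at $\globalinst_0$ iff $\Varb$ determines $\psi_s$ at $\globalinst_0$ for every $s$.

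For the direction ``$\secret$ disclosed $\Rightarrow$ each $\psi_s$ disclosed'', fix $s$ and any d-instance $\globalinst$ that is $\Varb$-indistinguishable from $\globalinst_0$; since $\psi_s$ depends only on source $s$, I must show $\globalinst_s \models \psi_s$. The key move is to build a second d-instance $\globalinst''$ that keeps source $s$ equal to $\globalinst_s$ but replaces every other source $r$ by its critical instance $\globalinst_{0,r}$. Because indistinguishability is checked source by source and $\globalinst_s$ is already $\Varb_s$-indistinguishable from $\globalinst_{0,s}$, the instance $\globalinst''$ is still $\Varb$-indistinguishable from $\globalinst_0$, so $\Varb$ determining $\secret$ at $\globalinst_0$ forces $\globalinst'' \models \secret$. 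In any match of $\secret$ into $\globalinst''$, each source-join variable of $s$ also occurs in an atom of some other source $r$, which is matched into the single-element instance $\globalinst_{0,r}$ and hence must be sent to $\critelement$. Restricting the match to the source-$s$ atoms then witnesses $\globalinst_s \models \psi_s$, as required.

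For the converse ``each $\psi_s$ disclosed $\Rightarrow$ $\secret$ disclosed'', take any d-instance $\globalinst$ that is $\Varb$-indistinguishable from $\globalinst_0$. By hypothesis $\globalinst \models \psi_s$ for every $s$, so for each source there is a match of the source-$s$ atoms of $\secret$ into $\globalinst_s$ sending every source-join variable to $\critelement$. These matches all agree on the shared (source-join) variables, since each uses the single value $\critelement$, so they glue into one homomorphism of $\secret$ into $\globalinst$, giving $\globalinst \models \secret$; thus $\Varb$ determines $\secret$ at $\globalinst_0$.

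I expect the forcing direction to be the crux: the nonobvious step is collapsing all sources other than $s$ to the critical instance so that the shared variables are pinned to $\critelement$, which is exactly what lets a match of the full query $\secret$ certify the single canonical piece $\canonview^s(\secret)$ at the canonical tuple $\vec t_0$. The one place needing care is the passage between UN disclosure of the (possibly non-Boolean) query $\canonview^s(\secret)$ and determinacy of $\canonview^s(\secret)(\vec t_0)$ at $\globalinst_0$, which relies on the free-variable version of Theorem \ref{thm:critinst}; the gluing step in the converse is then routine once one observes that the common element $\critelement$ makes the per-source witnesses compatible.
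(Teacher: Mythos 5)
Your proposal is correct. One of its two directions coincides with the paper's argument, while the other takes a genuinely different and arguably cleaner route. Your first direction (disclosure of $\secret$ forces disclosure of every $\canonview^s(\secret)$) is, up to contraposition, exactly the paper's first direction: both hinge on the hybrid d-instance that keeps source $s$ as given and replaces every other source by its critical instance, so that the source-join variables are pinned to $\critelement$ and a match of $\secret$ restricts to a match of $\canonview^s(\secret)(\critbinding)$. Your converse direction, however, differs from the paper's: where you glue the per-source matches of $\canonview^s(\secret)(\critbinding)$ (compatible because every shared variable is sent to $\critelement$) into a single homomorphism of $\secret$ into an arbitrary instance indistinguishable from $\globalinst_0$, the paper instead invokes the disjunctive chase procedure of \cite{lics16,ijcai19} and the observation that this chase of the critical d-instance factorizes source by source. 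Your argument is more elementary, working directly from the definition of determinacy; the cost, which you correctly flag, is the passage from disclosure of the non-Boolean query $\canonview^s(\secret)$ (witnessed at an arbitrary instance and tuple) to determinacy of $\canonview^s(\secret)(\critbinding)$ at $\globalinst_0$, which needs the free-variable form of Theorem \ref{thm:critinst}. The paper is in the same boat here: its Theorem \ref{thm:critinst} is stated only for BCQs, and its own second direction leans on the same prior-work machinery applied to the canonical views of the secret, so this reliance is parity, not a gap. Finally, your contrapositive is the right one: the paper's text for the second direction reads ``suppose $\Varb$ is UN disclosing for $\canonview^s(\secret)$ for \emph{some} $s$,'' which as literally stated is false (a d-view fully exposing one source while revealing nothing about the others discloses that source's canonical view of $\secret$ but, in general, not $\secret$ itself); the hypothesis must be ``for every $s$,'' which is what you use and what the body of the paper's own argument in fact assumes.
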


\begin{proof}
In one direction, suppose there is $s$
such that $\Varb$ is UN non-disclosing for $\canonview^s(\secret)$. 
We show
that $\Varb$ is UN non-disclosing for $\secret$ by showing
that the critical instance $\globalinst_0$ for the d-schema is
$\Varb$-indistinguishable from some other d-instance where
$\secret$ fails.
We know that  there is some $s$-instance $\inst_s$ 
that is $\Varb^s$ indistinguishable from $\globalinst_s$ but in which
$\canonview^s(\secret)(\critelement)$ does not hold.
Let $\globalinst'$  be formed from taking $\inst_s$ on  source $s$ and
taking the critical instance on the other sources.
choosing the other components arbitrarily. Clearly
$\globalinst'$ is $\Varb$-indistinguishable from $\globalinst$. If $\secret$
held on $\globalinst'$, the only possible witness would be the critical tuple,
since this is the only binding. But the critical tuple fails the conjuncts on source $s$.

In the other direction, suppose $\Varb$ is UN disclosing for $\canonview^s(\secret)$ for some $s$.
We will show that $\Varb$ is UN disclosing for $\secret$.
We know that for each $s$, letting $\inst_s$ be  the critical instance for source $s$, if we apply
the disjunctive chase procedure from \cite{ijcai19,lics16} to  $\inst_s$ then $\canonview^s(\critbinding)$ holds.
But when we apply the disjunctive chase procedure  to the critical
instance for the d-schema, this is the same as applying it
to each component. Thus $\secret$ holds with $\critbinding$
as a witness, so $\Varb$ is UN disclosing for
$\secret$.
\end{proof}

The result above is about a fixed set of CQ views.
But using Corollary \ref{cor:minimalcanonviews} we can  lift
it to an observation about decomposing the secret query
in searching for  the existence of useful and UN non-disclosing views:

\begin{proposition}  \label{prop:decompsecret}
For any Boolean CQs $\util$  and $\secret$, there
are CQ views that are useful for $\util$ and UN non-disclosing
for $\secret$ if and only if for some $s$, the canonical d-view of $\util$
are UN non-disclosing for $\canonview^s(\secret)$.
\end{proposition}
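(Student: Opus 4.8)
The plan is to combine two earlier results: Corollary~\ref{cor:minimalcanonviews}, which reduces the existence of useful, UN non-disclosing CQ views to a property of a single canonical d-view, and Proposition~\ref{prop:decomp}, which lets us analyze non-disclosure of a CQ one canonical view at a time. Neither result alone speaks about both quantifiers at once, so the work is entirely in sequencing them correctly.

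For the easy direction, suppose that for some source $s$ the canonical d-view of $\util$ is UN non-disclosing for $\canonview^s(\secret)$. First I would recall that the canonical d-view of any CQ $\util$ is itself useful for $\util$: joining the canonical views on their shared source-join variables and projecting onto the free variables recovers $\util$ exactly, so the canonical d-view determines $\util$ regardless of minimality. Then I would apply Proposition~\ref{prop:decomp} with $\Varb$ taken to be the canonical d-view of $\util$; since that d-view is UN non-disclosing for $\canonview^s(\secret)$ for at least one $s$, the right-to-left direction of the equivalence in Proposition~\ref{prop:decomp} yields that it is UN non-disclosing for $\secret$. This exhibits a single CQ d-view that is simultaneously useful for $\util$ and UN non-disclosing for $\secret$.

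For the forward direction, suppose some CQ-based d-view is useful for $\util$ and UN non-disclosing for $\secret$. Since UN non-disclosure is a determinacy-compatible non-disclosure function, Corollary~\ref{cor:minimalcanonviews} applies and tells us that the canonical d-view of $\util^\minq$ is itself useful for $\util$ and UN non-disclosing for $\secret$. Applying the left-to-right direction of Proposition~\ref{prop:decomp} to this d-view, a d-view that is UN non-disclosing for $\secret$ must be UN non-disclosing for $\canonview^s(\secret)$ for at least one source $s$, which is the desired conclusion.

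The main obstacle is the gap between ``$\util$'' and ``$\util^\minq$''. If $\util$ is not minimal, then its canonical d-view can reveal strictly more than the canonical d-view of $\util^\minq$ (cf.\ Example~\ref{ex:nonminimal}), so the canonical d-view of $\util$ may be UN disclosing for $\secret$ even when some CQ d-view is not; in that case the right-hand side of the equivalence fails while the left-hand side holds, breaking the iff. Thus the statement as written should be read with $\util$ assumed minimal (equivalently, with ``canonical d-view of $\util$'' replaced by ``canonical d-view of $\util^\minq$''), so that the d-view supplied by Corollary~\ref{cor:minimalcanonviews} is genuinely the canonical d-view named in the proposition. Pinning down this minimality hypothesis is the one point requiring care; once it is fixed, the result is a direct splice of the two cited facts.
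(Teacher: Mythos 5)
Your proof is correct and is essentially the paper's own argument: both directions are exactly the splice of Corollary~\ref{cor:minimalcanonviews} with Proposition~\ref{prop:decomp} that the paper performs, with your explicit remark that the canonical d-view of $\util$ is itself useful for $\util$ filling in a step the paper leaves implicit. Your minimality caveat is also well taken: the paper's proof invokes Corollary~\ref{cor:minimalcanonviews} as though it yielded the canonical d-view of $\util$ itself rather than of $\util^\minq$, and Example~\ref{ex:nonminimal} (where the canonical d-view of the non-minimal $\util$ is UN disclosing for $\canonview^s(\secret)$ at every source $s$, yet the canonical d-view of $\util^\minq$ is useful for $\util$ and UN non-disclosing for $\secret$) shows that the proposition as literally stated fails unless $\util$ is read as minimal, exactly as you say.
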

\begin{proof}
If there are CQ views that are useful for $\util$ and
UN non-disclosing for $\secret$, then the canonical d-view of $\util$ are such
a set of views, by Corollary \ref{cor:minimalcanonviews}. Thus by the previous proposition,
they are UN non-disclosing for some  $\canonview^s(\secret)$.

In the other direction, if the canonical d-view of $\util$
is  UN non-disclosing for $\canonview^s(\secret)$ for some $s$, then by the proposition
above they are UN non-disclosing for $\secret$, and thus these views serve as a witness.
\end{proof}

Proposition \ref{prop:decompsecret} implies that to test for useful and UN non-disclosing
views, we need only take each source $s$ and test whether
 $\canonview^s(\util)$ for source $s$ is UN non-disclosing for the  canonical view of $\canonview^s(\secret)$
for source $s$. A witness to failure of
such a test  requires first a deterministic computation that consists of chasing forward
and backward with the view definitions, then a series of guesses of homomorphism
of $\canonview^s(\util)$, followed by the guess of a homomorphism of  $\canonview^s(\secret)$,
thus $\conp$ in the maximum cardinality over all
$s$ of $\canonview^s(\secret)$ and $\canonview^s(\util)$.

\newpage
\section*{Proofs for Section \ref{sec:localconstraints}: extensions
in the presence of local constraints} 
\subsection*{Proof of Theorem \ref{thm:minimalqlocalconstrdeterminacy}}
Recall the statement:

\medskip

Let $\Sigma$ be any set of TGDs that are local.
Suppose that $\util$ is a minimal CQ with respect to $\Sigma$.
If CQ views $\Varb$ determine CQ $\util$ over all 
instances satisfying $\Sigma$, then $\Varb$ determines each canonical view 
$\canonview^s(Q)$ of $\util$ over all instances satisfying $\Sigma$.

\medskip

We proceed as in the case of no constraints. 
We modify  the determinacy algorithm by chasing with the local constraints in each round, 
giving us the algorithm in Figure \ref{alg:query-determinacy-constraints}.

\begin{figure}[h]
\caption{Algorithm $\determinacy(\util,\views, \Sigma)$ for checking determinacy with respect to existential rules}\label{alg:query-determinacy-constraints}
\begin{algorithmic}[1]
    \State $F_0 \defeq \chase_{\Sigma}(\canondb(\util))$                                            \label{alg:query-determinacy-constraints:init}
    \While{\text{\bf{true}}}                                         
        \State $F_1 \defeq \chase_{\forwview(\views)}(F_0)$                       \label{alg:query-determinacy-constraints:view-forward}
        \State $F_2 \defeq \chase_{\backview'(\views)}(F_1)$             \label{alg:query-determinacy-constraints:view-acc-inverse}
         \State $G_{2} \defeq \chase_{\Sigma'}(F_2)$
        \If{$\exists h: \util' \rightarrow G_2$ mapping each free variable $v$ of $\util'$ into 
$c_v \in \adom(\canondb(\util))$}
            \State \textbf{return} \textbf{true}                                     \label{alg:query-determinacy-constraints:return}
        \EndIf
        \State $F_3 \defeq \chase_{\forwview'(\views)}(G_2)$                   \label{alg:query-determinacy-constraints:primed-view-forward}
        \State $F_4 \defeq \chase_{\backview(\views)}(F_3)$                       \label{alg:query-determinacy-constraints:view-inverse}
        \State $G_{4} \defeq \chase_{\Sigma}(F_4)$                       \label{alg:query-determinacy-constraints:applysigma}
       \State $G_{5} \defeq$ \mbox{restrict } $G_4$ \mbox{ to the original signature}
        \If{$F_0 \neq G_5$}
            \State $F_0 \defeq F_0 \cup G_5$                            \label{alg:query-determinacy-constraints:reinit}
        \Else
            \State \textbf{return} \textbf{false}
        \EndIf
    \EndWhile     
\end{algorithmic}
\end{figure}

As with  the algorithm in Figure \ref{alg:query-determinacy},
it is a straightforward exercise to see that this algorithm correctly
checks determinacy:

\begin{theorem} \cite{ustods} \label{thm:localalgworks} The algorithm in Figure \ref{alg:query-determinacy-constraints} returns true if and only if 
$\views$ determines $Q$ relative to $\Sigma$.
\end{theorem}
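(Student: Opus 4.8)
The plan is to show that the algorithm of Figure~\ref{alg:query-determinacy-constraints} simply computes, round by round, a fair chase of $\canondb(\util)$ under the combined theory $\Theta \defeq \Sigma \cup \Sigma' \cup \Sigma_{\util,\views}$, where $\Sigma_{\util,\views}$ is the set of forward and backward view axioms ($\forwview(\views), \backview(\views), \forwview'(\views), \backview'(\views)$) already used in the unconstrained case, and then tests whether $\util'$ has a match into the result sending each free variable $v$ to its frozen constant $c_v$. The first step is to lift the characterization underlying Theorem~\ref{thm:algworks} to the constrained setting: determinacy of $\util$ by $\views$ relative to $\Sigma$ is equivalent to the entailment $\util \wedge \Theta \models \util'$. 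Here the two disjoint copies of the schema stand for the two indistinguishable instances $\globinst$ (original) and $\globinst'$ (primed); the forward and backward view rules encode $\views(\globinst) = \views(\globinst')$ exactly as before, and the only new ingredient is that $\Sigma$ and its primed copy $\Sigma'$ force both instances to satisfy the background theory. I would remark that locality of $\Sigma$ plays \emph{no} role in this correctness argument — it is needed only for the surrounding canonical-view decomposition of Theorem~\ref{thm:minimalqlocalconstrdeterminacy}.

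For soundness, suppose the algorithm returns true. The witnessing homomorphism $h\colon \util' \to G_2$ lands in a set obtained from $\canondb(\util)$ by a sequence of legal $\Theta$-chase steps, hence into a subinstance of $\chase_\Theta(\canondb(\util))$. By soundness of the chase for CQ entailment under existential rules, $\util \wedge \Theta \models \util'$, so determinacy relative to $\Sigma$ holds. For completeness, suppose determinacy holds, so $\util \wedge \Theta \models \util'$. Completeness of the chase yields a match of $\util'$ into $\chase_\Theta(\canondb(\util))$; since $\util'$ is finite, this match uses only finitely many facts, produced after finitely many chase steps.

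The crux — and the step I expect to be the main obstacle — is verifying that the fixed application order of Figure~\ref{alg:query-determinacy-constraints} is genuinely a \emph{fair} chase sequence for $\Theta$, so that the finitely many facts witnessing the match above do appear in some $G_2$ at a finite round. Here one checks that interleaving $\chase_{\Sigma'}$ (the step producing $G_2$) and $\chase_\Sigma$ (the step producing $G_4$) with the forward/backward view phases starves no rule: each round regenerates all view and primed facts from the accumulated original facts $F_0$, so no fact is permanently lost, and the fixpoint test $F_0 = G_5$ on the original signature detects precisely when no new original facts can be derived while the inner chases have already saturated the view and primed facts. Consequently the union of the instances produced across all rounds coincides, up to a homomorphism fixing $\canondb(\util)$, with $\chase_\Theta(\canondb(\util))$. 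One subtlety worth flagging is that for arbitrary (possibly non-terminating) $\Sigma$ the chase of $\Theta$ need not halt; the biconditional should therefore be read as a statement about the returned value — the algorithm returns true whenever determinacy holds (the match is found at a finite round) and returns false only at a genuine fixpoint witnessing non-determinacy, running forever exactly when the underlying chase diverges.

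Finally, I would record that the proof reuses the structure of the unconstrained case essentially verbatim, the only change being the inserted $\Sigma$- and $\Sigma'$-chase steps: the analogues of Lemmas~\ref{lem:backward_homomorphism} and~\ref{lem:homomorphism_F5_det} continue to hold, since a $\Sigma$-chase step only adds logical consequences of the theory, and the required backward homomorphisms into the $\Sigma$-closed $\canondb(\util)$ extend to the newly generated nulls in the usual way.
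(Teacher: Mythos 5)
Your proposal is correct and takes essentially the same route as the paper, which treats this theorem as the constrained analog of Theorem~\ref{thm:algworks} (citing prior work rather than proving it): determinacy relative to $\Sigma$ is recast as the entailment $\util \wedge \Sigma \wedge \Sigma' \wedge \Sigma_{\util,\views} \models \util'$, and the algorithm of Figure~\ref{alg:query-determinacy-constraints} is exactly a chase-based check of that entailment, read as a semi-decision procedure when the chase of $\Sigma$ need not terminate. Your additional points --- that locality of $\Sigma$ is irrelevant to this correctness claim, that the round structure is a fair organization of the $\Theta$-chase, and that a \textbf{false} return yields a genuine counter-model --- are precisely the details the paper waves off as a ``straightforward exercise.''
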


Note that the chase may not terminate, thus this is only a semi-decision procedure.

As before, we have a homomorphism from the output of the algorithm
to its input:

\begin{lemma}\label{lemma:localconstraints_backward_homomorphism}
For any CQ $Q$, views $\V$ and number
$l$,  there exists a homomorphism $\nu:\unprime(F^l_2(Q,\views)) \rightarrow \canondb(Q)$ that is the
identity
on elements $c_v$.
In particularly, if the algorithm in Figure \ref{alg:query-determinacy-constraints} returns $\true$,
 there is a homomorphism from $F^\infty_2(Q,\views)$ to $\canondb(Q)$ that is the
identity
on elements $c_v$.
\end{lemma}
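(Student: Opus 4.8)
The plan is to follow the three-part template of the unconstrained case (Lemmas~\ref{lem:backward_homomorphism}, \ref{lem:homomorphism_F5_det}, and \ref{lem:backward_homomorphism_determinacy}), adapting each ingredient to the algorithm of Figure~\ref{alg:query-determinacy-constraints}, and then to combine them by induction on $l$. First I would transfer the analog of Lemma~\ref{lem:backward_homomorphism}: for each $l$ there is a homomorphism $\mu_l:\unprime(F^l_2(Q,\views)) \to F^l_0(Q,\views)$ fixing every element occurring in $F^l_0$. The argument is unchanged, since every null in a primed fact of $F^l_2$ is produced by a single inverse-view chase step triggered by a view fact $\view_{i,j}(\sigma)$ which was itself derived, via the forward view definition, from a witnessing binding already present in $F^l_0$; mapping each such null to the corresponding witness value yields $\mu_l$. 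The local chase stages $\chase_{\Sigma'}$ and $\chase_{\Sigma}$ produce no primed facts that feed the reconstruction of $\unprime(F^l_2)$, so they do not interfere with this step.

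Second, I would prove the analog of Lemma~\ref{lem:homomorphism_F5_det}, now threading through the extra stages $G_2=\chase_{\Sigma'}(F_2)$ and $G_4=\chase_{\Sigma}(F_4)$ and the restriction $G_5$: a homomorphism from $G^l_5(Q,\views)$ into $\unprime(F^l_2(Q,\views))$ fixing the common elements. The new content here is that the facts created by the two chase stages must fold back. Because $\Sigma$ is a set of existential rules, each chase-generated fact has a homomorphic preimage by universality of the chase, and locality lets us carry out this folding one source at a time; composing these foldbacks with those of the forward-primed-view and inverse-view steps (lines~\ref{alg:query-determinacy-constraints:primed-view-forward}--\ref{alg:query-determinacy-constraints:view-inverse}) gives the required map.

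The induction on $l$ then proceeds as for Lemma~\ref{lem:backward_homomorphism_determinacy}. In the inductive step, $F^{l+1}_0=F^l_0\cup G^l_5$; each fact of $G^l_5$ maps into $\unprime(F^l_2)$ by the second lemma and thence into $\canondb(Q)$ by the induction hypothesis, while $F^l_0$ maps into $\canondb(Q)$ by the induction hypothesis directly. Since both maps fix the constants $c_v$ and agree on shared elements they combine into a single homomorphism on $F^{l+1}_0$; pre-composing with $\mu_{l+1}$ yields $\nu:\unprime(F^{l+1}_2)\to\canondb(Q)$ fixing the $c_v$. The ``In particular'' clause is then the instance in which the algorithm halts: taking $l$ to be the terminating round gives $F^l_2=F^\infty_2$ and the desired homomorphism into $\canondb(Q)$.

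I expect the main obstacle to be the base case $l=1$. In the unconstrained algorithm $F^1_0=\canondb(Q)$, so the base case is immediate; here $F^1_0=\chase_\Sigma(\canondb(Q))$, and the facts reconstructed into $\unprime(F^1_2)$ may reflect atoms introduced by this initial $\Sigma$-chase. The delicate point is therefore to reconcile these chase-generated atoms with the target $\canondb(Q)$: one must show that composing $\mu_1$ with a collapse of $\chase_\Sigma(\canondb(Q))$ back onto $\canondb(Q)$ still produces a homomorphism fixing the $c_v$, exploiting the locality of $\Sigma$ and the canonicality of the chase. This reconciliation between the $\Sigma$-chase and the canonical-database target is where essentially all of the additional difficulty relative to Lemma~\ref{lem:backward_homomorphism_determinacy} is concentrated.
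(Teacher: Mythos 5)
Your outline reproduces the unconstrained three-lemma template faithfully, and your first step (the fold-back $\mu_l:\unprime(F^l_2(Q,\views))\rightarrow F^l_0(Q,\views)$ fixing the elements of $F^l_0$) is correct. But the plan has a genuine gap at both places where the $\Sigma$-chase enters, and you have in fact located the fatal one yourself. First, the fold-back of the stages $G_2=\chase_{\Sigma'}(F_2)$ and $G_4=\chase_{\Sigma}(F_4)$ cannot be done ``by universality of the chase'' into $\unprime(F^l_2)$: universality extends a homomorphism $h:X\rightarrow B$ to $\chase_\Sigma(X)\rightarrow B$ only when the \emph{target} $B$ satisfies $\Sigma$, and $\unprime(F^l_2)$ -- just the unprimed inverse-view facts -- need not satisfy $\Sigma$; the chase-generated facts have no homomorphic preimage there. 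Second, and decisively, the ``collapse of $\chase_\Sigma(\canondb(Q))$ back onto $\canondb(Q)$'' that your base case requires does not exist. Take a single source with $\Sigma=\{R(x,y)\rightarrow\exists z\, R(y,z)\}$, $Q=\exists x,y\, R(x,y)$, and the view $V(x,y)$ defined by $R(x,y)$: then $\unprime(F^1_2)$ contains the infinite path $R(c_x,c_y), R(c_y,n_1), R(n_1,n_2),\dots$, and already a path of length two admits no homomorphism into the single edge $R(c_x,c_y)$, since $c_x\neq c_y$. So the statement with target $\canondb(Q)$ is false as literally printed, and no strategy for proving it literally can succeed; the delicacy you flagged is an impossibility, not a technicality.

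What the paper actually intends (its only stated justification is ``as before, we have a homomorphism from the output of the algorithm to its input,'' and the input of the constrained algorithm is $F_0=\chase_\Sigma(\canondb(Q))$, not $\canondb(Q)$) is the lemma with target $\chase_\Sigma(\canondb(Q))$, and it should moreover be stated for $G^l_2$ rather than $F^l_2$, since the match $h_1$ of $\util'$ supplied by the algorithm lands in $G^\infty_2=\chase_{\Sigma'}(F^\infty_2)$. With that target your induction goes through essentially verbatim: $\chase_\Sigma(\canondb(Q))$ is a model of $\Sigma$ (and its primed copy a model of $\Sigma'$), so every $\Sigma$- or $\Sigma'$-chase stage is absorbed by universality -- any homomorphism from a pre-chase instance into this target extends through the chase -- while the view stages fold back exactly as in Lemmas~\ref{lem:backward_homomorphism} and~\ref{lem:homomorphism_F5_det}, and the base case is the identity map. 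The downstream proof of Theorem~\ref{thm:minimalqlocalconstrdeterminacy} is unaffected by the corrected target: the composed map $h_\util$ becomes an endomorphism of $\util$ into $\chase_\Sigma(\canondb(\util))$ sending source-join variables to elements $c_v$ (by locality of $\Sigma$), and minimality of $\util$ \emph{with respect to $\Sigma$} is precisely what rules out the non-injective case there -- no retraction onto $\canondb(\util)$ itself is ever needed.
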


We have the same relation of the output of the algorithm
when run on the canonical view
to the output when run on $\util$:
\begin{lemma} \label{lem:qtocanonvconstraints} For any $l \geq 1$ 
the source $s$ atoms in $F^l_0(\util,\Varb)$
are the same as the atoms of $F^l_0(\canonview^s(Q),\Varb)$.
\end{lemma}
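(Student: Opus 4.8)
The plan is to prove the statement by induction on $l$, exactly paralleling the proof of the constraint-free analog Lemma~\ref{lem:qtocanonv}, with the single new ingredient being the chase steps against $\Sigma$ and $\Sigma'$ that Figure~\ref{alg:query-determinacy-constraints} interleaves into each round. Everything hinges on one \emph{factorization over the d-schema} observation: since the views in $\Varb$ obey the autonomy assumption (each view in $\Varb_{s'}$ is defined over relations of source $s'$ alone) and since every rule of $\Sigma$, and hence of $\Sigma'$, is local, every chase step in a round has its trigger confined to a single source $s'$ and produces facts only over the original relations of $s'$, their primed copies, or the view predicates of $\Varb_{s'}$. Consequently one round of the algorithm never transports information across sources, so the source-$s$ facts computed at any point of a round depend only on the source-$s$ facts present when the round began.

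First I would dispatch the base case $l=1$. Here $F^1_0(\util,\Varb)=\chase_{\Sigma}(\canondb(\util))$. Writing $\Sigma_s$ for the fragment of $\Sigma$ mentioning only source $s$, locality gives that a source-$s'$ rule with $s'\neq s$ fires only on source-$s'$ atoms and adds only source-$s'$ atoms, so it leaves the source-$s$ part untouched; hence the source-$s$ atoms of $\chase_{\Sigma}(\canondb(\util))$ are precisely $\chase_{\Sigma_s}$ applied to the source-$s$ atoms of $\canondb(\util)$. Those source-$s$ atoms are exactly $\canondb(\canonview^s(\util))$ under the identification of each source-$s$ variable $v$ with the element $c_v$, so this equals $F^1_0(\canonview^s(\util),\Varb)=\chase_{\Sigma}(\canondb(\canonview^s(\util)))$, the non-$s$ sources being empty in the latter run.

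For the inductive step, assume the source-$s$ atoms of $F^l_0(\util,\Varb)$ coincide with the (entirely source-$s$) atoms of $F^l_0(\canonview^s(\util),\Varb)$. Applying the factorization observation to one whole round -- forward views, inverse primed views, $\chase_{\Sigma'}$, forward primed views, inverse views, and $\chase_{\Sigma}$ -- the source-$s$ world (source-$s$ originals, their primed copies, and the $\Varb_s$ view predicates) evolves identically in the two runs, because the run on $\canonview^s(\util)$ is literally the source-$s$ restriction of the run on $\util$: all other sources start empty and stay within their own worlds. Hence the source-$s$ atoms of $G^l_5(\util,\Varb)$ equal the atoms of $G^l_5(\canonview^s(\util),\Varb)$, and since $F^{l+1}_0=F^l_0\cup G^l_5$ in each run, taking unions preserves the invariant at $l+1$. (The source-$s$ part also stabilizes in the two runs at the same round, so the statement is unaffected by the two runs terminating after different numbers of rounds.)

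The main obstacle I anticipate is not conceptual but is the bookkeeping needed to make ``the same'' literal in the presence of the fresh labeled nulls introduced both by the inverse view definitions and by the chase with $\Sigma$ and $\Sigma'$. To handle this I would adopt the standard convention in which the null created by a chase step is a function of its trigger (the rule together with its matched binding). Under this convention the identical triggers that fire in the source-$s$ parts of the two runs generate identical nulls, while locality ensures that triggers from other sources in the $\util$-run neither read nor feed the source-$s$ nulls; absent such a convention, the statement should simply be read up to a renaming of nulls, which is all that the proof of Theorem~\ref{thm:minimalqlocalconstrdeterminacy} requires.
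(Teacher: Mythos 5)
Your proof is correct, and it fills in exactly the argument the paper intends: the paper states Lemma~\ref{lem:qtocanonvconstraints} (like its constraint-free counterpart Lemma~\ref{lem:qtocanonv}) without proof, treating it as immediate from the fact that the views respect the autonomy assumption and the rules in $\Sigma$ are local, which is precisely the source-by-source factorization of each round of the algorithm that you verify by induction on $l$. Your additional care about the fresh nulls (fixing a trigger-determined naming convention, or reading ``the same'' up to renaming of nulls) and about the two runs possibly terminating at different rounds addresses bookkeeping the paper glosses over, and is consistent with how the lemma is used in the proof of Theorem~\ref{thm:minimalqlocalconstrdeterminacy}.
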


We can  now complete
 the proof of Theorem \ref{thm:minimalqlocalconstrdeterminacy}
as in the case without constraints. Locality of constraints
ensures that the homomorphism
$h_1$ maps each variable $x$ in $\sjvars(s,Q)$ to some element of the form
$c_v$.

\subsection*{Proof of Theorem  \ref{thm:fullshufflelocalsuffices}}
Recall the statement:

\medskip

For any set of local existential rules $\Sigma$, the $\Sigma$-invariant shuffle views of $Q$ 
provide
a minimally informative useful d-view
within the class of all views, relative to $\Sigma$.

\medskip

We can generalize the ECR global $Q$-equivalence
to global $Q$-$\Sigma$-equivalence, looking only at contexts that satisfy $\Sigma$.
Using the same argument we see that the views corresponding to this ECR
is  a minimally informative useful d-view for $Q$ within the class of all views,
relative to $\Sigma$.

Let $\sigma$ be a mapping of $\sjvars(s)$ into some instance
$\inst$.
A shuffle $\shuf$ of $\canonview^s(Q)$ is
\emph{$\Sigma$-invariant relative to $\<\sigma, \canoncont^s(Q)\>$}
if whenever $\inst',\sigma \models \canoncont^s(Q)$  and $\inst'$ satisfies $\Sigma$
then $\inst', \sigma \models \shuf(\canoncont^s(Q))$.
Note that since the rules are local, the notion of a context satisfying them is well-defined.
Invariance is decidable whenever query containment for CQs under $\Sigma$ is decidable;
for example, this is the case when $\Sigma$ is a set of dependencies
with  terminating chase.

Fixing $\Sigma$ and  two $s$-instances $\inst_1$ and $\inst_2$, we say that $\inst_1$ and $\inst_2$ are  \emph{$\Sigma$-invariant shuffle
equivalent} if
whenever  $\inst_1,  \sigma$ satisfies $\canonview^{s}(Q)(\vec x)$ 
then there is some  shuffle $\shuf$ which is $\Sigma$-invariant
relative to $\<\sigma, \canonview^{s}(Q)\>$, such that:
$\inst_2, \vec \sigma \models \shuf(\canonview^{s}(Q))(\vec x)$ \\
and vice versa.

We can now extend Proposition \ref{prop:shuffleequivqequiv}, following the same proof:

\begin{proposition} \label{prop:shuffleequivqequivlocal}
Suppose $\Sigma$ consists of local existential rules.
 Then for all instances satisfying the rules
$\Sigma$-invariant shuffle equivalence is identical to
global $Q$-$\Sigma$ equivalence.
\end{proposition}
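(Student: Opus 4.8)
The plan is to follow the proof of Proposition~\ref{prop:shuffleequivqequiv} essentially verbatim, isolating the two places where $\Sigma$ intervenes and using locality to handle them. Throughout, I would fix the convention that the $s$-instances $\inst_1,\inst_2$ satisfy the source-$s$ rules of $\Sigma$ and the $s$-contexts $C$ satisfy the remaining rules; since $\Sigma$ is local, a d-instance $(\inst,C)$ satisfies $\Sigma$ exactly when both conditions hold, so one local component may be swapped for another without disturbing satisfaction of $\Sigma$ on the untouched sources. This factorization is the structural fact that makes the whole argument go through.

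For the direction from $\Sigma$-invariant shuffle equivalence to global $Q$-$\Sigma$-equivalence, I would take a context $C$ satisfying $\Sigma$ together with a match $h^{1,C}$ of $Q$ in $(\inst_1,C)$, restrict to the source-join variables to obtain $h_0$ with $\inst_1,h_0 \models \canonview^s(Q)$, and invoke shuffle equivalence to produce a shuffle $\shuf$ that is $\Sigma$-invariant relative to $\langle h_0, \canoncont^s(Q)\rangle$ and with $\inst_2, h_0 \models \shuf(\canonview^s(Q))$. The only departure from the unconstrained proof is at the step transferring the context witness: there one uses plain invariance, whereas here one invokes $\Sigma$-invariance, which applies precisely because $C$ satisfies $\Sigma$. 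Recombining the source witness in $\inst_2$ with the context witness gives a match of $Q$ in $(\inst_2,C)$, and $(\inst_2,C) \models \Sigma$ by locality; the converse implication is symmetric.

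The reverse direction is where the real work lies. Given $\inst_1,\sigma \models \canonview^s(Q)$, I would form the canonical database $C_1$ of $\sigma(\canoncont^s(Q))$ as before, but then replace it by its chase $C_1^{\ast} \defeq \chase_\Sigma(C_1)$. Since $\Sigma$ is local and $C_1$ carries no source-$s$ facts, $C_1^{\ast}$ still has none and satisfies the non-$s$ rules, so $(\inst_1,C_1^{\ast}) \models \Sigma$ and $Q$ continues to hold there by monotonicity. Global $Q$-$\Sigma$-equivalence then yields a match $h$ of $Q$ in $(\inst_2,C_1^{\ast})$; taking the $c_x$ and the chase-generated nulls disjoint from $\inst_2$, the elements shared between $\inst_2$-facts and $C_1^{\ast}$-facts lie in the range of $\sigma$, so $h$ sends each source-join variable into $\mathrm{range}(\sigma)$ and I may set $\shuf(x) \defeq \sigma^{-1}(h(x))$ exactly as before, with the verification that $\inst_2,\sigma \models \shuf(\canonview^s(Q))$ going through unchanged because the chase never touches the source-$s$ relations.

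The crucial point, and the main obstacle, is checking that this $\shuf$ is $\Sigma$-\emph{invariant} rather than unconditionally invariant. Here the chase does exactly what is required: restricting $h$ to the non-$s$ atoms of $Q$ exhibits a homomorphism from $\sigma(\shuf(\canoncont^s(Q)))$ into $C_1^{\ast} = \chase_\Sigma(\sigma(\canoncont^s(Q)))$, and by completeness of the chase for CQ entailment under existential rules (recalled in Section~\ref{sec:prelims}) this witnesses $\sigma(\canoncont^s(Q)) \wedge \Sigma \models \sigma(\shuf(\canoncont^s(Q)))$, which is precisely $\Sigma$-invariance relative to $\langle \sigma, \canoncont^s(Q)\rangle$. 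In the unconstrained argument the homomorphism landed in $C_1$ itself, giving plain invariance; inserting the chase step is exactly what weakens the conclusion to the $\Sigma$-relative notion, and confirming that this weakening matches the definition of $\Sigma$-invariance is the one place the adaptation requires care beyond copying Proposition~\ref{prop:shuffleequivqequiv}.
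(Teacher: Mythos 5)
Your proof is correct and follows the paper's own argument essentially verbatim: the forward direction recombines compatible witnesses exactly as the paper does, invoking $\Sigma$-invariance of the shuffle against a context that satisfies $\Sigma$, and the reverse direction likewise builds the canonical context, chases it with $\Sigma$, reads off the shuffle $\shuf$ from the resulting match via $\sigma(\shuf(x))=h(x)$, and certifies $\Sigma$-invariance through the homomorphism into $\chase_\Sigma(\sigma(\canoncont^s(Q)))$. The only difference is presentational: you make explicit the appeal to chase completeness for CQ entailment that the paper leaves implicit.
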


\begin{proof}
First, suppose $\inst_1$, $\inst_2$ satisfy all local rules and
are $\Sigma$-invariant shuffle equivalent. Consider a context $C$ that satisfies local rules
$\Sigma$
and suppose we have a match
of $Q$ in $(\inst_1,C)$ via $h^{1,C}$. We want to show that there
is a match in $(\inst_2, C)$.  This will be exactly as in the case without constraints.

We know that the variables in $\sjvars(s,Q)$ are mapped by
$h^{1,C}$ into $\inst_1$. Let $h_0$ be the restriction of $h^{1,C}$ to the variables
of $\sjvars(s,Q)$. Then $\inst_1, h_0$  satisfies $\canonview^{s}(Q)$.
Thus by  $\Sigma$-invariant shuffle equivalence there is a shuffle $\shuf$ which is $\Sigma$-invariant
relative  to $\<h_0, \canonview^{s}(Q)\>$,
such that
$\inst_2, h_0 \models  \shuf(\canonview^{s}(Q))$, with witness $h_2$ extending $h_0$.
We also know that $C, h_0$ satisfies  $\canoncont^{s}(Q)$, since
$h^{1,C}$ witnesses this as well.
 Applying the definition of $\Sigma$-invariance, and noting that
$C$ satisfies $\Sigma$ by assumption, we infer that
 $C, h_0$ satisfies
 $\shuf(\canoncont^{s}(Q))$. Let $h^{\shuf,C}$ be a homomorphism witnessing this.
Note that since $h^{\shuf,C}$ extends $h_0$ and $h_0$ restricts
$h^{1,C}$, $h^{\shuf,C}$ and $h_0$ agree on their common variables.
Define $h^{2,C}$ by mapping the variables in $\svars(s,Q)$ as in $h_2$,
and those variables  outside of $\sjvars(s,Q)$ as in $h^{\shuf,C}$.
Since these are two compatible homomorphisms,
$h^{2,C}$ witnesses that $(\inst_2,C) \models Q$.
This completes the argument that $\Sigma$-invariant shuffle equivalence implies $Q$-$\Sigma$-equivalence.

We now show that global $Q$-$\Sigma$-equivalence implies  $\Sigma$-invariant shuffle equivalence.
Suppose $s$-instances $\inst_1$, $\inst_2$ are $Q$-$\Sigma$-equivalent, and
 $\inst_1, \sigma$ satisfies $\canonview^{s}(Q)(\vec x)$,
We will show that there is a
shuffle $\shuf$, $\Sigma$-invariant
relative  to $\<\sigma, \canoncont^{s}(Q)\>$
such that $\inst_2, \sigma \models \shuf(\canonview^{s}(Q))(\vec x)$.

Let $C_1$ be the context defined in two steps.
We first proceed as in the case
without  constraints:
for each source $s$ other than $s$, we have a fact for each
$s$ atom of $Q$, where each  variable $x$ of $\sjvars(s,Q)$ is replaced
by $\sigma(x)$ and each variable $x$ not in $\sjvars$ is replaced by a fresh element
$c_x$.  In the second step, we perform
the chase construction with $\Sigma$ to get an instance that satisfies the local constraints.

$Q$ clearly holds in $(\inst_1, C_1)$. So by $Q$-$\Sigma$-equivalence, $Q$ holds in $(\inst_2, C_1)$ via
some homomorphism $h$.
As before, the only elements  shared between $\inst_2$ and $C_1$ lie in the range of $\sigma$.
Thus $h$ must map the variables in $\sjvars(s,Q)$ to the image of $\sigma$.
For each $c$ in the image of $\sigma$, choose a variable $v_c$ such that
$\sigma$
maps $v_c$ to $c$. 
Let $\shuf$ map any variable $x \in \sjvars(s,Q)$  to $v_{h(x)}$.
Thus $\sigma(\shuf(x))=h(x)$.

We claim
that $\shuf$ is  $\Sigma$-invariant relative  to $\<\sigma, \canoncont^{s}(Q)\>$.
We show this by arguing that $h$ is a homomorphism
from $\shuf(\canoncont^{s}(Q))(\sigma)$ to
the chase under $\Sigma$ of $\canoncont^{s}(Q)(\sigma)$. By definition
of $\shuf$, we have for each atom $A(x_1 \ldots x_m, y_1 \ldots y_n)$
of $\canoncont^{s}(Q)(\vec x)$,
\begin{align*}
A(h(x_1) \ldots h(x_m), h(y_1) \ldots h(y_n)) ~ = ~
A(\sigma(\shuf(x_1)) \ldots \sigma(\shuf(x_m)),  h(y_1) \ldots h(y_n))
\end{align*}
$A(h(x_1) \ldots h(x_m), h(y_1) \ldots h(y_n))$
is in $\chase_\Sigma(\canoncont^{s}(Q)(\sigma))$ since $h$ is a homomorphism into
$(\inst_2, C_1)$ and thus for facts in $\shuf(\canoncont^{s}(Q))(\sigma)$, it must
map into $C_1$.

Thus we conclude
\begin{align*}
A(\sigma(\shuf(x_1)) \ldots \sigma(\shuf(x_m)),  h(y_1) \ldots h(y_n))
\in \chase_\Sigma(\canoncont^{s}(Q)(\sigma))
\end{align*}
This completes the proof that $h$ is a homomorphism into the chase, and
thus the proof that  $\shuf$ is  $\Sigma$-invariant relative  to $\<\sigma, \canoncont^{s}(Q)\>$.

We next claim that $\inst_2, \sigma \models \shuf(\canonview^{s}(Q))$.
The witness
will again be the extension of $\sigma$ that maps all variables
in $\svars(s,Q)-\sjvars(s,Q)$ via $h$.

Consider an atomic formula $A(x_1 \ldots x_m, y_1 \ldots y_n)$ of $\canonview^{s}(Q)$ where $\vec x$
are free variables of $\canonview^{s}(Q)$.
That is,
\[
A(\shuf(x_1) \ldots \shuf(x_m), \vec y)
\]
is a generic atom of $\shuf(\canonview^{s}(Q))$.
We know that $A(h(x_1) \ldots h(x_m), h(y_1) \ldots h(y_n)))$ holds in $\inst_2$, since $h$
is a homomorphism into $\inst_2$.
Thus
\[
A(v_{h(x_1)},  \ldots, v_{h(x_m)}, h(y_1) \ldots h(y_n)))
\]
 holds of
$\sigma$ in $\inst_2$,
by definition of $v_c$. From this we see that
\begin{align*}
A(\shuf(x_1) \ldots \shuf(x_m), h(y_1) \ldots h(y_n)))
\end{align*}
 holds of $\sigma$ in $\inst_2$
as required.
\end{proof}

Recall that the $\Sigma$-invariant shuffle views of $Q$ for $s$ and types $\tau$ are defined
analogously to the case without local rules, as
$\tau(\vec x) \wedge \bigvee_{\shuf} \shuf(\canonview^{s}(Q))$
where the disjunction is over  $\Sigma$-invariant shuffles of $\tau$.

The following result is proven exactly as in the case without background knowledge:

\begin{proposition} \label{prop:fullshufflelocal} For any Boolean CQ $Q$, and any source $s$
two $s$-instances
are $\Sigma$-invariant shuffle equivalent if and only if they agree on each
$\Sigma$-invariant shuffle view of $Q$ for $s$.
\end{proposition}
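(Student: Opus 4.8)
The plan is to reproduce the proof of Proposition~\ref{prop:intermediate} essentially verbatim, replacing ``invariant'' by ``$\Sigma$-invariant'' throughout and restricting every quantification over contexts to those satisfying $\Sigma$. The only formal difference between plain invariance and $\Sigma$-invariance is that the universal quantifier over instances in the definition ranges over instances satisfying $\Sigma$; since the rules are local, ``a context satisfies $\Sigma$'' is well defined, and each manipulation in the unconstrained proof continues to make sense once relativised in this way.

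First I would treat the direction that agreement on the $\Sigma$-invariant shuffle views implies $\Sigma$-invariant shuffle equivalence. Suppose $\inst_1, \sigma \models \canonview^s(Q)$ and let $\tau$ be the equality type of $\sigma$. The identity shuffle maps $\canoncont^s(Q)$ to itself and so is $\Sigma$-invariant relative to $\tau$; hence $\inst_1, \sigma \models V_\tau$, and by agreement $\inst_2, \sigma \models V_\tau$. Unpacking the definition of $V_\tau$ yields a $\Sigma$-invariant shuffle $\shuf$ (relative to $\tau$) with $\inst_2, \sigma \models \shuf(\canonview^s(Q))$, and since $\sigma$ has type $\tau$ this $\shuf$ is in fact $\Sigma$-invariant relative to $\langle\sigma, \canoncont^s(Q)\rangle$. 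Swapping the roles of $\inst_1$ and $\inst_2$ gives $\Sigma$-invariant shuffle equivalence.

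For the converse I would follow the composition argument. Given $\inst_1, \sigma \models V_\tau$ witnessed by a $\Sigma$-invariant $\shuf$, set $\sigma'$ to be the pre-image of $\sigma$ under $\shuf$, so $\inst_1, \sigma' \models \canonview^s(Q)$; $\Sigma$-invariant shuffle equivalence then supplies a $\Sigma$-invariant $\shuf'$ with $\inst_2, \sigma' \models \shuf'(\canonview^s(Q))$, and I would take $\shuf'' = \shuf'(\shuf)$. It remains to verify that $\shuf''$ is $\Sigma$-invariant relative to $\langle\tau, \canoncont^s(Q)\rangle$ and that $\inst_2, \sigma \models \shuf''(\canonview^s(Q))$. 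The latter is purely syntactic: it relies only on the observation that $\inst, \sigma_0 \models \shuf_0(R)$ iff $\inst, \sigma_1 \models R$ when $\sigma_1$ is the pre-image of $\sigma_0$ under $\shuf_0$, a fact entirely insensitive to $\Sigma$.

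The main obstacle, exactly as without constraints, is the claim that the composite $\shuf''$ is $\Sigma$-invariant. Here I would replay the chain of implications from the original proof while fixing an arbitrary context $\inst_0$ that \emph{satisfies $\Sigma$} together with a binding $\sigma_0$ of type $\tau$ with $\inst_0, \sigma_0 \models \canoncont^s(Q)$: $\Sigma$-invariance of $\shuf$ for $\tau$ gives $\inst_0, \sigma_0 \models \shuf(\canoncont^s(Q))$; passing to the pre-image $\sigma'_0$ of $\sigma_0$ under $\shuf$ and invoking $\Sigma$-invariance of $\shuf'$ for $\sigma'_0$ (valid since $\sigma'_0$ inherits all the equalities satisfied by $\sigma'$) gives $\inst_0, \sigma'_0 \models \shuf'(\canoncont^s(Q))$, and re-substituting yields $\inst_0, \sigma_0 \models \shuf''(\canoncont^s(Q))$. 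The sole change from the constraint-free argument is the standing hypothesis $\inst_0 \models \Sigma$; but $\Sigma$-invariance only ever asserts its conclusion for such contexts, so this hypothesis is exactly what we are licensed to assume, and the argument closes.
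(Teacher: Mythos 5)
Your proposal is correct and matches the paper's own treatment: the paper proves this proposition by stating that it ``is proven exactly as in the case without background knowledge,'' i.e., by relativising the proof of Proposition~\ref{prop:intermediate} to instances satisfying $\Sigma$, which is precisely what you carry out (including the key observations that the identity shuffle is trivially $\Sigma$-invariant, that the pre-image manipulations are insensitive to $\Sigma$, and that the composite shuffle's invariance only needs to be checked on $\Sigma$-satisfying contexts). Your write-up even spells out the chain of implications more carefully than the unconstrained proof it mirrors.
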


Putting the prior results together gives us the extension of
Theorem \ref{thm:fullshufflelocalsuffices}.

\newpage
\section*{Proofs for Section \ref{sec:replication}: balancing
expressiveness and inexpressiveness in the presence of replication
constraints between sources}
\subsection*{The power of replication: more detail on Example \ref{ex:nonlocalconstraints}}
In the body of the paper, we considered Example \ref{ex:nonlocalconstraints},
where there are two sources and
three binary relations $R, S, T$.
$R$ and $T$ are on different sources, while
the background theory $\Sigma$
asserts that  relation $S$ is
replicated. We considered CQs:
\begin{align*}
\util=\exists x ~ y ~ R(x,y) \wedge S(x,y) \wedge T(x,y) \\ 
\secret= \exists x ~ R(x,x) \\
\end{align*}

We mentioned in the body of the paper that by Proposition \ref{prop:fullrep}, 
there is a d-view that is useful for $\util$ and UN non-disclosing for $\secret$, by
making use of the replication constraint.
But the views in the d-view produced by the proposition  are not isomorphism-invariant.
We know from Proposition \ref{prop:norelalg} that it may be necessary
to use views that are not isomorphism-invariant. 

We show that for this particular example there do indeed
exist relational algebra  views  that were useful for $Q$
and UN non-disclosing
for $\secret$ in this example. Thus in exploiting replication we can
sometimes stay within a standard class of views.
We now explain how to achieve this.


Given an instance of the $R$ source $\inst_R$, we say an \emph{$R$-harmless pair} is any
pair of nodes $(x_1, x_2)$, where:
\begin{itemize}
\item $x_1$ and $x_2$ are connected by both $R$ and $S$ edges
\item there is no $S$ self-loop on $x_1$
\item $x_2$ has no outgoing $S$ edges, and $x_2$ is the unique element
that is a target of  an  $S$ edge from $x_1$ with no outgoing $S$ edges.
\end{itemize}

The \emph{modification} of such a pair $(x_1, x_2)$ is the pair $(x_1, x_1)$.

Our view on the $R$-source takes as input an instance $\inst$ of the $R$ source, and returns
all pairs that are modifications of $R$-harmless pairs, unioned with pairs that are in $S \cap R$ but are not $R$-harmless.

A $T$-harmless pair in the $T$-source is defined similarly but replacing $R$ with $T$.
A modification of such a pair is as above.
Analogously, our view on the $T$-source  returns
all pairs that are modifications of $T$-harmless pairs, unioned with pairs that are in $S \cap T$ but are not harmless.
It is clear that  these views can be expressed in relational algebra.

We first show that the views are UN non-disclosing for $\secret$. Consider an instance $\globalinst=(R, T, S)$ where $\secret$ holds, with $S$ the shared relation.
We will  construct an instance
$\globalinst'=(R', T', S')$ with the same view images, but where $\secret$ does not hold.
We let $V^R_\inst$ be the content for the view for the $R$-source on instance $\inst$, and similarly $V^S_\inst$.

We first describe the shared relation $S'$. 
For each element $v$ in either view image ($V_R$ or $V_S)$, we create an $S$ edge to a new element $n_v$
 We also include all pairs
in either view that are not self-loops.

We now describe $R'$. It includes all edges in the view $V_R$ that are not self-loops. It also
contains  an edge from $v$ to $n_v$ if $(v,v)$  is  in $V_R$.
$T'$ is defined analogously.

It is clear that the new instance does not satisfy $\secret$. We
need to show that it agrees with $\globalinst$ on each view.
Note that all of the pairs  $(v,n_v)$ such that $(v,v)$ is in $V_R$ are $R$-harmless. 
A pair of the form $(c, v)$ where $v$ is one of the original nodes of the instance,
cannot be $R$-harmless, since $v$ has an outgoing $S$ edge. Pairs of the form $(c, n_v)$ where $c \neq v$ are not $R$-harmless because there
is no $S$ or $R$ edge between them. Thus the $R$-harmless pairs are exactly those of the form  $(v,n_v)$ where $(v,v) \in V_R$. But the view
will produce all such pairs $(v,v)$. We can conclude
that the views $V_R$ agree on pairs of the form $(v,v)$.

We now consider pairs in the view $V^R_\globalinst$ of the form $(c,d)$ with $d \neq c$.  By definition, such pairs are included in both $S'$ and $R'$.
They are not $R$-harmless in $\globalinst'$, since $d$ has an outgoing edge to $n_d$. 
Thus they are included in $V^R_{\globalinst'}$. Conversely, if a pair $(c,d)$ with $c \neq d$ does
not  occur in $V^R_\globalinst$, we can argue that it is not in $V^R_{\globalinst'}$. This follows since we will not have $R'$ hold of $(c,d)$.

We next show that the views are useful for $Q$, by arguing that
$Q$ can be rewritten as the intersection of $V_R$ and $V_S$.

In one direction, we suppose $Q$ has a match $(c,d)$ in an instance $\globalinst$, and
we argue that $(c,d)$ is in the intersection of $V_R$ and $V_S$ 
evaluated on $\globalinst$.
Note that such a pair is $R$-harmless if and only if it is $T$-harmless because
$T \wedge R$ holds of it in $\globalinst$.
Thus we will distinguish pairs that are harmless (meaning $R$ or $T$-harmless) versus
pairs that are not harmless. If $(c,d)$ is a harmless pair, then $(c,c)$ will be in the intersection of $V_R$ and $V_S$ within $\globalinst$.
While if $(c,d)$ is not a harmless pair, then $(c,d)$ will be in the intersection of
$V_R$ and $V_S$.

We now suppose that $V_R$ and $V_S$ evaluated on $\globalinst$ intersect, and
show that $Q$ must hold on $\globalinst$.
First,  suppose that the intersection has a pair $(c,d)$ with $d \neq c$. Then it is clear that
$(c,d)$ must be a match of $Q$.
The more interesting case is when there is
 an element in the intersection of $V_R$ and $V_S$ of the form $(c,c)$.  

As a first subcase, suppose $(c,c)$ holds in  $S$ within
$\globalinst$  Then there are no $R$-harmless or $T$-harmless pairs of the form $(c,d)$.
Thus $(c,c)$ could not have been produced in either $V_R$ or $V_S$ as a modification, and hence
 $(c,c)$ must have gotten into view $V_R$ because $\globalinst \models S(c,c) \wedge R(c,c)$, while
$(c,c)$ was in  view $V_S$  because $\globalinst \models S(c,c) \wedge T(c,c)$. Thus we have a match
of $Q$.
The second subcase is where   $(c,c)$ does not hold in  $S$ within
$\globalinst$. Then $(c,c)$ must have been produced as a modification
of an $R$-harmless pair $(c,d)$ and as a modification of some
 $T$-harmless pair $(c,d')$. But  
from the uniqueness condition in $R$-harmlessness and $T$-harmlessness,
we conclude that $d=d'$.  Now $(c,d)$  is a match of $Q$.

\medskip

We now argue  that for this example
there are no DCQ views that are useful for $Q$ and UN non-disclosing for $\secret$.

We will prove something more general. 
Instance $\inst_1$ is a \emph{subinstance} of $\inst_2$ if $\inst_1$ is  a subset
of $\inst_2$ when they are seen as sets of facts. In other words, for every relation, its interpretation
 in $\inst_1$ is a subst of its interpretation in $\inst_2$.
A set of views $\views$ is \emph{monotone}
if whenever we have $\inst_1$ and $\inst_2$ with $\inst_1$ is a subinstance
of $\inst_2$, then for each view $\view \in \views$, the output of $\view$
on $\inst_1$ is a subset of the output on $\inst_2$.
Note that UCQs, DCQs, along with their extensions with inequalities,
are all monotone.  Datalog queries are  also monotone.

\begin{proposition} \label{prop:monotone}
 In the example, there are no monotone views that are useful for $Q$ and UN non-disclosing for $\secret$.
\end{proposition}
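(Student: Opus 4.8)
The plan is to prove the contrapositive: every monotone d-view $\views$ that is useful for $\util$ is already UN disclosing for $\secret$. Concretely, I would single out the self-loop instance $A = (\{(0,0)\},\{(0,0)\},\{(0,0)\})$ (with $R$, $S$, $T$ all equal to $\{(0,0)\}$ and $S$ the replicated relation), which satisfies $\secret$, and show that $\views$ determines $\secret$ at $A$; since $A \models \secret$, this contradicts UN non-disclosure. The engine is a determinacy lemma that uses \emph{usefulness alone}: for a fixed content $S_0$ of the replicated relation, the $R$-source view determines $R \cap S_0$ (and symmetrically the $T$-source view determines $T \cap S_0$).

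I would prove this fixed-$S_0$ lemma as follows. If two $R$-contents $R_1, R_2$ give equal $R$-source views relative to the same $S_0$ but $R_1 \cap S_0 \neq R_2 \cap S_0$, pick $(a,b)$ in the symmetric difference; then necessarily $(a,b)\in S_0$. Probing with the $T$-source set to $\{(a,b)\}$ produces two valid d-instances (respecting replication, since $S_0$ is shared) that are $\views$-indistinguishable — identical $T$-source, equal $R$-source views, common $S_0$ — yet one satisfies $\util$ via $(a,b)$ and the other does not, contradicting usefulness. If one could guarantee that the UN-disclosure partner $A'$ of $A$ uses the \emph{same} replicated content $\{(0,0)\}$, the lemma would immediately force $(0,0)\in R'\cap\{(0,0)\}$, i.e. an $R$-self-loop in $A'$, contradicting that $A'$ witnesses non-disclosure of $\secret$.

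The difficulty, and the place where monotonicity is essential, is exactly that UN non-disclosure only supplies a partner $A' = (R',S',T')$ whose replicated content $S'$ may differ from $\{(0,0)\}$, and the replication constraint couples the two sources through $S$, so $A$ and $A'$ cannot be mixed source-by-source and the fixed-$S$ lemma does not apply to them directly. Here I would use monotonicity to enlarge instances along the replicated relation — passing to a common content such as $S^{+} = S' \cup \{(0,0)\}$ and correspondingly enlarging the $R$- and $T$-sides — so as to bring the partner into a common-$S$ configuration on which the fixed-$S$ lemma can be invoked, while tracking how the view images grow. I expect the \emph{crux of the whole proof} to be upgrading the one-sided inclusions that monotonicity yields (adding facts only grows view images) into the \emph{exact} view-image agreement required to apply the fixed-$S$ determination lemma after the replicated relation has been changed; intuitively, a positive (monotone) view cannot ``cancel'' the self-loop that it is forced to expose under a fixed $S$, whereas a non-monotone view — such as the harmless-pair relational-algebra views constructed earlier for this example — can, and it is precisely this contrast that the proposition isolates.
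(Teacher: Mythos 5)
Your fixed-$S_0$ lemma is correct, and it is a genuinely nice observation: with the replicated relation pinned to a common $S_0$, usefulness alone (no monotonicity) forces the $R$-source views to determine $R \cap S_0$, by probing with $T = \{(a,b)\}$. (It is in fact a close cousin of Proposition~\ref{prop:agreelocal}, which the paper proves by the same probing argument -- but uses for Proposition~\ref{prop:norelalg}, not here.) The genuine gap is exactly the one you flag and then leave open: nothing in the proposal handles a non-disclosure partner $A'=(R',S',T')$ of the self-loop instance whose replicated content $S'$ differs from $\{(0,0)\}$. Your proposed repair -- enlarge both instances to a common content $S^{+}=S'\cup\{(0,0)\}$ and invoke the lemma there -- fails as described: monotonicity yields only the same-direction inclusions $V(A)\subseteq V(A^{+})$ and $V(A')\subseteq V(A'^{+})$, and gives no relationship whatsoever between $V(A^{+})$ and $V(A'^{+})$; a monotone view may grow differently and arbitrarily on the two (incomparable) enlargements. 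Since the fixed-$S_0$ lemma needs exact view-image equality to invoke usefulness -- usefulness says nothing about instances whose view images are merely included in one another -- there is no way to re-enter the lemma after the enlargement. In effect the proposal proves only the easy conditional statement (if the partner shares the replicated relation, $\secret$ is disclosed), and monotonicity, the actual hypothesis of the proposition, is never genuinely used.

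The paper's proof takes a route that never needs to equalize the replicated relation. It defines when an element \emph{appears non-trivially} in a view output (some output tuple mentions the element and leaves the output when that element is replaced by another), and shows, using monotonicity and usefulness on mixed instances such as $R=\{(c,d)\}$, $S=\{(c,d),(c,e)\}$, $T=\{(c,e)\}$, that on a single-pair instance $\inst_{c,d}$ both $c$ and $d$ must appear non-trivially in the output of some view. Then at the self-loop instance $\inst_{c,c}$, any view-indistinguishable partner must by usefulness contain some $\inst_{e,f}$ with $e\neq f$, so two distinct elements would have to appear non-trivially in the shared view image -- impossible, since on $\inst_{c,c}$ only $c$ can appear non-trivially. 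If you want to complete your argument, you need a device of this kind: information extracted from the common view image itself that constrains the partner's witness pair, rather than an attempt to transport the partner back to the replicated content of $A$.
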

\begin{proof}
Say that an element $d$  \emph{appears non-trivially} in  a $k$-ary relation $S$
if there is a tuple $\vec t$ in $S$ with  $t_i=d$ ,
 a  $\vec t'$ formed from $\vec t$ by setting $t'_i$ to $d' \neq d$ while $t_j=t_i$ for $j \neq i$,
such that $\vec t'$ is not in $S$.
 For the output of a CQ or UCQ view, there is no difference
between appearing non-trivially and being in the active domain of the view output.
But for unsafe views there is a difference, since it is possible that every
element appears in the output, but  if $d$ appears non-trivially in the output of a DCQ
then there must be some disjunct in the DCQ such that $d$ satisfies the disjunct.

For elements $c$ and $d$ let $\inst_{c,d}$  be the instance where each relation
consists of the single pair $(c,d)$.
We first claim that on this instance each of $c,d$ needs to be appear non-trivially
in the output of some view.
We show the claim for $d$, with the claim for $c$ being symmetric.

Suppose not, and fix $e$ distinct from both $c$ and $d$. Consider $\inst_0= \inst_{c,d} \cup \inst_{c,e}$
along with $\inst_1$ the instance with only $(c,d)$ in $R$, both $(c,d)$ and $(c,e)$ in $S$ and in
$S$ but
 only $(c,e) \in  T$.
If $d$ does not occur non-trivially in the view output in $\inst_{c,d}$, the views must return
the same result on $\inst_{c,e}$ and $\inst_{c,d}$.
Now  $\inst_1$  is a subinstance of $\inst_{c,d}$ on the $R$ source, and  a subinstance
of $\inst_{c,e}$ on the $S$ source. Thus by monotonicity
of the views, the output of each view on $\inst_1$ is contained
in the corresponding output on $\inst_0$. On the other hand, since $\inst_0$ is a subinstance of
$\inst_1$, the view outputs must be identical on $\inst_1$ and $\inst_0$.
But since $Q$ holds on $\inst_0$ and not on $\inst_1$, this contradicts usefulness of the views.

Now consider an instance of
the form $\inst_{c,c}$ for an element $c$. The secret query $\secret$   holds, and $Q$ holds. 
So by UN non-disclosure there must be an instance $\inst'$ with the same view image as $\inst_{c,c}$ where $\secret$ fails
and $Q$ holds.
Since $Q$ holds, $\inst'$ must contain $\inst_{e,f}$ for some $e \neq f$.  By the assertions
above, there must be a view where $e$ appears
non-trivially in its output and also a view where $f$ appears non-trivially.
Clearly,
for each view output on $\inst_{c,c}$, only $c$ can appear non-trivially.
Since one of $e,f$ must be distinct from $c$, this is a contradiction of the fact
that $\inst_{c,c}$ and $\inst'$ must agree on the views.
\end{proof}

\subsection*{The  power of replication: proof of Proposition \ref{prop:fullrep}}
Recall the statement:

\medskip

If BCQ $\util$ contains a relation of non-zero arity replicated across all sources
then there is a d-view that is useful for $\util$ and
UN non-disclosing for BCQ $\secret$ if and only if there is no homomorphism of $\secret$ to $\util$. Further the same d-view works for all such $\secret$ for a given $\util$.

\medskip

Note that the condition on $\secret$ and $\util$ can be restated as saying
that $\util$ does not logically entail $\secret$.

For notational simplicity, we keep the replication constraints
implicit, assuming that the replicated predicates are named $T$ in
each source and $\util$ refers to this ``global'' $T$.

One direction of the theorem is clear: if there is a homomorphism of $\secret$ to $\util$
and $\views$ are useful for $\util$, then $\views$ can not be UN non-disclosing for $\secret$,
since on any instance where $\util$ holds, the views will disclose $\secret$.

For the other direction, we show, as in the case without constraints,
that there is a single d-view
that  works for any $\secret$ such that there is no homomorphism
from $\secret$ to $\util$.  We
provide views that are not isomorphism-invariant, assuming that the
active domain of instances is $Pair(\N)$ defined below.

$Pair(\N)$ is the set that contains $\N$ and is closed under pairing:
when $x$ and $y$ are allowed, then so is
$(x,y)$, the ordered pair consisting of $x$ and $y$. Note that all elements in $Pair(\N)$
have a finite $\ph$, where $\ph$ is defined as follows:
$\ph(x)=0$ for $x$ an integer of $\N$, and
$\ph((x,y))=max(\ph(x),\ph(y))+1$.

Given two instances $\inst_1$ and $\inst_2$ for the same schema,
the synchronous product of $\inst_1$ and $\inst_2$ is the instance defined
as follows:
\begin{itemize}
\item elements of the instance are pairs $(x,y)$ with $x \in \inst_1, y \in \inst_2$.
\item  for each relation $R$, we have $R((x_1,y_1) \ldots (x_n,y_n) )$ holds exactly when
$R(x_1, \ldots x_n)$ holds in $\inst_1$ and $R(y_1, \ldots y_n)$ holds in $\inst_2$
\end{itemize}

Note that the projection on the first component is a  homomorphism of the product to 
instance $\inst_1$ and projection on the second component is a homomorphism to $\inst_2$.


By a \emph{position} we mean a relation $S$ and a number between $1$ and
the arity of $S$. That is, a position describes an argument of a relation.
We assume that each variable of $\util$ is associated with a unique integer index.

We consider the  transformation  $\stretchf$ on a d-instance that
maps each local instance to its product with $\canondb(\util)$.

Note that $\stretchf(\globalinst)$ is over the same schema as
$\globalinst$ and that $\stretchf(\globalinst)$ validates the
replication constraint. Furthermore we suppose that the domain of
$\canondb(\util)$ is included in $\N$ which ensures that the minimal
$\ph$ of elements in the relation $T$ of $\stretchf(\globalinst)$ will
be the minimal $\ph$ of elements in the relation $T$ of $\globalinst$
plus one.

\begin{definition}
  We define $\equiv_s$ on $s$-instances as the reflexive transitive
  closure of $\mathcal{R}$ where $\mathcal{R}$ is defined as $\inst
  ~\mathcal{R}~ \inst'$ when $\inst=\stretchf(\inst')$ for $\inst, \inst'$ two
  $s$-instances.
\end{definition}

\begin{definition}
  We define $\equiv_G$ on d-instances as the reflexive transitive
  closure of $\mathcal{R}$ where $\mathcal{R}$ is defined as 
$\globalinst ~\mathcal{R}~ \globalinst'$ when 
$\globalinst'=\stretchf(\globalinst)$ for 
$\globalinst, \globalinst'$ two
  d-instances.
\end{definition}

\begin{definition}
  The ECR $\equiv$ is defined as $\globalinst \equiv \globalinst'
  \Leftrightarrow \bigwedge_{s \in \sources} \globalinst_s \equiv_s
  \globalinst'_S$.
\end{definition}

\begin{proposition}
  \label{prop:qi}
  For two d-instances $\globalinst$ and $\globalinst'$, when $\globalinst\vDash \util$ then
  $\globalinst \equiv \globalinst'$ if and only if $\globalinst \equiv_G \globalinst'$.
\end{proposition}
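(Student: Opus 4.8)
The plan is to prove the two implications separately, observing that only the direction $\globalinst \equiv \globalinst' \Rightarrow \globalinst \equiv_G \globalinst'$ uses the hypothesis $\globalinst \models \util$. The inclusion $\equiv_G \subseteq \equiv$ is immediate and needs no assumption: a single global application of $\stretchf$ is, by definition, the simultaneous application of $\stretchf$ to every source, so any $\equiv_G$-chain projects source-by-source to an $\equiv_s$-chain, giving $\globalinst_s \equiv_s \globalinst'_s$ for all $s$, i.e. $\globalinst \equiv \globalinst'$.

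For the converse I would exploit the single numerical invariant that drives the construction: the minimal pair-height of the elements occurring in the replicated relation $T$. Since $\canondb(\util)$ is taken inside $\N$, all its elements have pair-height $0$, so each application of $\stretchf$ to an instance with nonempty $T$ raises this minimal pair-height by exactly $1$. Because $\globalinst \models \util$ and $\util$ contains a $T$-atom, $T$ is nonempty in $\globalinst$; and since both $\globalinst$ and $\globalinst'$ satisfy the (implicit) replication constraint, the content of $T$, hence its minimal pair-height, is the same in every source. Write $m$ and $m'$ for these common values in $\globalinst$ and $\globalinst'$.

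Next I would unwind the per-source equivalence. Fixing a source $s$, the relation $\globalinst_s \equiv_s \globalinst'_s$ is witnessed by a zig-zag of single $\stretchf$-steps, and this normalizes to the form $\stretchf^{a_s}(\globalinst_s) = \stretchf^{b_s}(\globalinst'_s)$ for some $a_s, b_s \geq 0$: any ``valley'' in the zig-zag collapses automatically, since its two neighbours are both the common $\stretchf$-image of the valley instance, leaving a single up-then-down profile (so no injectivity of $\stretchf$ is required). Nonemptiness of $T$ in $\globalinst_s$ propagates through this equality, as a product with $\canondb(\util)$ keeps $T$ nonempty while an empty $T$ stays empty, so $T$ is nonempty on both sides and the height count applies: comparing minimal $T$-heights yields $m + a_s = m' + b_s$, whence $a_s - b_s = m' - m =: d$, a value independent of $s$. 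This synchronization of the net number of stretches across all sources is the crux of the argument and is the exact analogue of the Claim behind Proposition \ref{prop:nominimalrep}; it is the step I expect to be the main obstacle, being where both the replication constraint and the nonemptiness of $T$ are indispensable.

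Finally I would pad the exponents to make them uniform. Assuming $d \geq 0$ (the case $d < 0$ is symmetric, swapping the roles of $\globalinst$ and $\globalinst'$), rewrite each identity as $\stretchf^{b_s + d}(\globalinst_s) = \stretchf^{b_s}(\globalinst'_s)$ and apply $\stretchf^{B - b_s}$ to both sides, where $B = \max_s b_s$; this preserves each equality and produces $\stretchf^{B + d}(\globalinst_s) = \stretchf^{B}(\globalinst'_s)$ with exponents $B + d$ and $B$ now independent of $s$. Reading these source-wise identities together gives $\stretchf^{B+d}(\globalinst) = \stretchf^{B}(\globalinst')$ as d-instances, and this common d-instance exhibits an $\equiv_G$-chain running up from $\globalinst$ and up from $\globalinst'$ to the same point. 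Hence $\globalinst \equiv_G \globalinst'$, completing the proof.
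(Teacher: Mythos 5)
Your proof is correct and follows essentially the same route as the paper's: the easy direction by source-wise projection, and the converse via the pair-height invariant of the replicated relation $T$, whose nonemptiness is forced by $\globalinst \models \util$ and whose replication across sources synchronizes the stretch counts. If anything, you are more careful than the paper, which asserts the one-sided form ``$\globalinst_s = \stretchf^{i_s}(\globalinst'_s)$ or $\globalinst'_s = \stretchf^{i_s}(\globalinst_s)$'' directly, whereas you correctly normalize the zig-zag to the general two-sided form $\stretchf^{a_s}(\globalinst_s) = \stretchf^{b_s}(\globalinst'_s)$ (noting that valleys collapse without any injectivity assumption) and then pad exponents to reach a common d-instance.
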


\begin{proof}
  Clearly $\globalinst \equiv_G \globalinst'$ implies $\globalinst \equiv \globalinst'$. Now let us
  suppose that $\globalinst \equiv \globalinst'$ with $\globalinst \vDash \util$ and let us
  show that $\globalinst \equiv_G \globalinst'$.

  Since $\globalinst \equiv \globalinst'$, we have, for each source $s$,  there
  exists $i_s$ such that $\globalinst_s = \stretchf^{i_s}(\globalinst'_s)$ or $\globalinst'_s =
  \stretchf^{i_s}(\globalinst_s)$, where the notation $\stretchf^i$ means iterating $\stretchf$
  $i$ times. On a d-instance $\globalinst$ where the replicated
  relation $T$ is not empty, the minimal $\ph$ of elements appearing in position
  $T[1]$ 
needs to be equal
  for all sources $s$. Therefore we cannot have distinct sources $s$
and $s'$ where $i_s \neq i_{s'}$.
And we cannot have $s$ and $s'$ such that
  $\globalinst_s = \stretchf^{i_s}(\globalinst'_s)$ and 
$\globalinst_{s'} =\stretchf^{i_{s'}} (\globalinst_{s'})$ with
  $i_s, i_{s'}>0$. Therefore, when the replicated predicate is not empty, $\globalinst
  \equiv \globalinst'$ implies $\globalinst \equiv_G\globalinst'$.
  
  When $\globalinst \vDash \util$ we have that the replicated predicate $T$ is not
  empty (since it appears in $\util$) which proves that $\globalinst \equiv_G\globalinst'$.
\end{proof}

We now show that $\equiv$ is the view that we want. 

\begin{proposition}
  The view corresponding to ECR $\equiv$ is useful for $\util$.
\end{proposition}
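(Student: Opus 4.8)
The plan is to unwind what usefulness means for the ECR $\equiv$ and reduce it to a single preservation property of the transformation $\stretchf$. Since the d-view in question is given by $\equiv$, two d-instances are indistinguishable exactly when they are $\equiv$-related, so usefulness of this view for the Boolean query $\util$ amounts to the statement that $\globalinst \equiv \globalinst'$ implies $\globalinst \models \util \Leftrightarrow \globalinst' \models \util$. By symmetry of $\equiv$ it suffices to show one implication: assuming $\globalinst \equiv \globalinst'$ and $\globalinst \models \util$, I would derive $\globalinst' \models \util$, and the reverse implication then follows by interchanging the roles of $\globalinst$ and $\globalinst'$.

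First I would establish the key lemma that $\stretchf$ preserves satisfaction of $\util$ in both directions, i.e. $\stretchf(\J) \models \util$ if and only if $\J \models \util$ for every d-instance $\J$. The forward implication is immediate: projection onto the first coordinate of the synchronous product is a homomorphism $\stretchf(\J) \to \J$, and conjunctive queries are preserved under homomorphisms, so a match of $\util$ in $\stretchf(\J)$ yields one in $\J$. For the converse I would use that the canonical database $\canondb(\util)$ satisfies $\util$ via the identity match $g$; given any match $h$ of $\util$ in $\J$, the ``diagonal'' map $v \mapsto (h(v), g(v))$ is a match of $\util$ in $\stretchf(\J)$, since each atom of $\util$ holds in both factors of the product. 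Here I would invoke the already-noted fact that $\stretchf(\J)$ lies over the same schema and validates the replication constraint, so this diagonal match is globally consistent across sources.

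Next I would combine this lemma with Proposition \ref{prop:qi}. Since $\globalinst \models \util$, that proposition lets me upgrade the componentwise relation $\globalinst \equiv \globalinst'$ to the global relation $\globalinst \equiv_G \globalinst'$. The point of this step is that $\equiv_G$ is the reflexive and transitive closure of the single-step relation $\J \mapsto \stretchf(\J)$, so there is a finite chain $\globalinst = \globalinst_0, \ldots, \globalinst_k = \globalinst'$ in which each consecutive pair is related by $\stretchf$ in one direction or the other. The preservation lemma says $\util$-satisfaction is invariant across each such step, so by induction along the chain every $\globalinst_i$ satisfies $\util$; in particular $\globalinst' = \globalinst_k \models \util$, as desired.

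I expect the main obstacle to be the converse direction of the preservation lemma, namely verifying that the diagonal match into the synchronous product respects the distributed structure. Because $\stretchf$ is applied source by source while $\util$ contains the relation replicated across all sources, I must check that the pair $(h(v), g(v))$ chosen for a variable $v$ in $\sjvars(s,\util)$ is the same element in each source's product and that the replicated atoms match up; this is exactly where the replication constraint is used, and it is the one place where the synchronization enforced by Proposition \ref{prop:qi} interacts with the product construction. The remaining steps, namely the homomorphism-preservation argument and the induction along the $\equiv_G$-chain, are routine.
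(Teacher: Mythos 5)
Your proof is correct and follows essentially the same route as the paper's: it invokes Proposition \ref{prop:qi} to upgrade $\equiv$ to $\equiv_G$ under the assumption $\globalinst \models \util$, uses the projection homomorphism $\stretchf(\J) \to \J$ for one direction of preservation and the diagonal match $v \mapsto (h(v), c_v)$ into the synchronous product for the other, and then transfers satisfaction along the $\equiv_G$-chain. The only difference is presentational: you make the induction along the chain and the replication-related consistency of the diagonal match explicit, where the paper states these steps more tersely.
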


\begin{proof}
Let $\globalinst$ and $\globalinst'$ be two d-instances and let us suppose
$\globalinst \equiv \globalinst'$ and $\globalinst \vDash \util$.

By Proposition~\ref{prop:qi}, we have that $\globalinst \equiv_G \globalinst'$. Recall
that there is a homomorphism from $\stretchf(\globalinst)$ to $\globalinst$.
Thus it is clear that if $\stretchf(\globalinst) \models \util$ it must be
that $\globalinst \models \util$. On the other hand if we have
any match $h$ of $\util$ in $\globalinst$, we can extend it to a match
of $\util$ in the product by taking any variable $x$ of $\util$ to $(h(x), c_x)$ where
$c_x$ is the constant corresponding to $x$ in $\canondb(\util)$.
From
$\globalinst  \equiv_G \globalinst'$ either $\globalinst$ or $\globalinst'$ can be obtained from the
other by applying $\stretchf$, and so $\globalinst  \models \util$ implies
$\globalinst' \models \util$.
\end{proof}

\begin{proposition}
  The  view corresponding to ECR $\equiv$ is UN non-disclosing for $\secret$.
\end{proposition}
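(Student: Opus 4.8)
The plan is to discharge the UN non-disclosure obligation in its one-sided form: for every d-instance $\globalinst$ satisfying the replication constraint with $\globalinst \models \secret$, I must produce an $\equiv$-indistinguishable d-instance $\globalinst'$ that also satisfies the constraint but with $\globalinst' \not\models \secret$. I would simply take $\globalinst' = \stretchf(\globalinst)$. Indistinguishability is immediate: by the definition of $\equiv$ as the equivalence closure of the graph of $\stretchf$, each source satisfies $\globalinst_s \equiv_s \stretchf(\globalinst_s)$, hence $\globalinst \equiv \stretchf(\globalinst)$; and we already observed in the setup that $\stretchf(\globalinst)$ is over the same schema and validates the replication constraint. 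So the only real content is to verify that $\secret$ fails on $\stretchf(\globalinst)$.

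The key step is the claim $\stretchf(\globalinst) \not\models \secret$. Here I would exploit that $\stretchf(\globalinst)$ is, globally, the synchronous product $\globalinst \times \canondb(\util)$: its elements are pairs and a fact holds exactly when the corresponding facts hold in both components, uniformly across all relations (including the replicated one). Consequently the projection onto the second coordinate is a homomorphism $\pi \colon \stretchf(\globalinst) \to \canondb(\util)$. Since BCQs are preserved under homomorphisms, a match of $\secret$ in $\stretchf(\globalinst)$ composed with $\pi$ would yield a match of $\secret$ in $\canondb(\util)$, i.e. a homomorphism $\secret \to \util$ — which is forbidden by the hypothesis of Proposition \ref{prop:fullrep}. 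Thus $\canondb(\util) \not\models \secret$ forces $\stretchf(\globalinst) \not\models \secret$, completing the argument.

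The main obstacle is not conceptual but bookkeeping: I must make the two identifications precise. First, that the source-wise products defining $\stretchf$ genuinely assemble into a single product d-instance $\globalinst \times \canondb(\util)$ — this is legitimate exactly because $\canondb(\util)$ interprets the replicated relation $T$ identically across sources, so the second projection restricts to a homomorphism on every source simultaneously. Second, that ``no homomorphism of $\secret$ into $\util$'' is literally the statement $\canondb(\util) \not\models \secret$, viewing a CQ as its canonical database. Both are routine, but together they are what lets the single transformation $\stretchf$ serve as a witness for \emph{every} secret $\secret$ with no homomorphism into $\util$, which is the uniformity claimed in the proposition.
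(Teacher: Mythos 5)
Your proposal is correct and follows essentially the same route as the paper: the paper's proof likewise observes that $\stretchf(\globalinst)$ admits a homomorphism into $\canondb(\util)$ (the second projection of the product) and that composing any match of $\secret$ in $\stretchf(\globalinst)$ with it would yield a forbidden homomorphism $\secret \to \util$. The only difference is presentational --- you spell out the indistinguishability obligation ($\globalinst \equiv \stretchf(\globalinst)$), the preservation of the replication constraint, and the source-wise-to-global product bookkeeping, all of which the paper establishes in the setup preceding its one-line proof.
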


\begin{proof}
  Given an instance $\globalinst$, we know  that $\stretchf(\globalinst)$ has a homomorphism
into $\canondb(\util)$. 
Therefore there is no homomorphism from $\secret$ into $\stretchf(\globalinst)$
because if there were, we would have a homomorphism from $\secret$ into $\canondb(\util)$, a contradiction
of the assumption that $\util$ did not entail $\secret$.
\end{proof}

Putting together the results above we complete the proof of Proposition \ref{prop:fullrep}.

\subsection*{Proof of Proposition \ref{prop:norelalg}}
Recall the statement:

\medskip

There is a d-schema with a replication constraint, along with
Boolean CQs $\util$ and $\secret$ such that there is a d-view which is useful
for $\util$ and UN non-disclosing for $\secret$, but there
is no d-view whose view definitions return only facts containing values in the active domain and which commute with isomorphisms.

\medskip

For  a query $Q_V$ that always returns tuples containing only values in  the active domain of the input,
we will use the following restricted version of the isomorphism-invariance property:

\medskip

If $\inst$ and $\inst'$ are source instances for $Q_V$
 with $\inst'$ formed from $\inst$ via an isomorphism that is the identity on $\adom(V(I))$, then
$\inst$ and $h(\inst)$ agree on $Q_V$. 

\medskip

For example, any query in relational algebra
has this property.

Consider the following Boolean CQs (existentially quantifiers omitted):
\[
\util= P(x) \wedge P(y) \wedge S(y) \wedge S(z) \wedge T(z) \wedge T(x) \wedge R(w)
\]

\[
\secret= S(x) \wedge T(x) \wedge P(x)
\]

It is easy to see  that $\util$ does not entail $\secret$.
Because there is a replicated relation mentioned in the query, Proposition
\ref{prop:fullrep}
implies that we can get a d-view that is useful for $\util$
and UN non-disclosing for $\secret$.

Now, by way of contradiction,  fix a d-view $\views$ that is useful for $\util$, returns
only facts whose values lie in the active domain, and satisfies the isomorphism-invariance property.

\begin{proposition} \label{prop:agreelocal} Suppose d-view $\views$
is useful for $\util$. Given two instances $\inst_1$ and $\inst_2$ for the source
whose unreplicated relation is $G$, if $\inst_1$ and $\inst_2$ agree on the replicated
relation and agree on each view in $\views$ for this source, then they must agree on $G$.
\end{proposition}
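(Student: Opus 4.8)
The plan is to derive the statement purely from usefulness of $\views$, converting any disagreement on $G$ into a context that distinguishes $\util$, contradicting determinacy.

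First I would record what the hypotheses buy us. Let $s$ be the source carrying $G$, and let $M$ denote the common content of the replicated relation in $\inst_1$ and $\inst_2$. For any context $C$ for the remaining source(s) whose own copy of the replicated relation is exactly $M$, both $(\inst_1,C)$ and $(\inst_2,C)$ are legal d-instances, since the replication constraint holds on both (both copies equal $M$). We are given that $\inst_1$ and $\inst_2$ agree on every view of source $s$, and the same $C$ supplies identical view images on the other sources; hence $(\inst_1,C)$ and $(\inst_2,C)$ are $\views$-indistinguishable, so by usefulness of $\views$ for $\util$ they must agree on $\util$. It therefore suffices to prove the contrapositive: if $\inst_1$ and $\inst_2$ disagree on $G$, then some such $C$ satisfies $(\inst_1,C)\models\util$ while $(\inst_2,C)\not\models\util$.

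So suppose $G(\vec a)\in\inst_1\setminus\inst_2$ (the reverse case is symmetric). I would build $C$ as a canonical realization of the portion of $\util$ carried by the other sources (in the spirit of $\canoncont^s(\util)$), with the source-join variables $\sjvars(s,\util)$ bound so that the $G$-atoms of $\util$ are forced onto $\vec a$, the replicated atoms are sent into $M$, and every remaining variable is instantiated by a \emph{fresh} element lying outside $\adom(\inst_1)\cup\adom(\inst_2)\cup M$. By construction this binding together with $\vec a$ witnesses $(\inst_1,C)\models\util$, so the whole burden is to establish $(\inst_2,C)\not\models\util$.

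The main obstacle is exactly this rigidity step: ruling out every match of $\util$ in $(\inst_2,C)$. Here I would use that the fresh context elements occur in no $s$-relation (they lie outside $\adom(\inst_2)$ and outside $M$), so in any hypothetical match the variables of the $G$-atoms must map into $\inst_2$ and the join variables must map into $M$; the cycle-like incidence structure of $\util$ then pins these images down to $\vec a$, forcing $G(\vec a)\in\inst_2$ and contradicting our choice. Making this pinning-down airtight is the delicate part, since one must check that no alternative assignment of the roles in $\util$ can avoid the fact $G(\vec a)$; blocking such alternatives is precisely why fresh distinct elements are used for the non-pinned variables and why the replicated relation is held fixed at $M$ (so that no spurious join element is available). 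Once $(\inst_2,C)\not\models\util$ is established, the first paragraph yields the contradiction with ``agreement on $\util$ for all compatible contexts'', completing the proof.
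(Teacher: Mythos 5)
Your proposal is correct and is essentially the paper's own argument: pair $\inst_1$ and $\inst_2$ with one and the same context $C$ whose copy of the replicated relation is the shared content $M$, observe that $(\inst_1,C)$ and $(\inst_2,C)$ are then $\views$-indistinguishable so that usefulness forces them to agree on $\util$, and choose $C$ so that any match of $\util$ on the $\inst_2$ side is pinned to the disputed $G$-fact. The only differences are cosmetic: the paper simply sets each other unreplicated relation to exactly $\{c\}$, which makes the pinning immediate and avoids your fresh-element bookkeeping, and both your argument and the paper's implicitly assume that $M$ is nonempty so that the atom $R(w)$ of $\util$ has a witness (harmless where the proposition is actually invoked, since there $\util$ is already known to hold on the instances in question).
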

\begin{proof} 
Let $c \in G(\inst_1)$ and form $\globalinst_1$ by choosing a  context for the other unshared relations
with each relation containing only $c$.  Let $\globalinst_2$ be formed similarly from $\inst_2$.
Then $\globinst_1$ and $\globalinst_2$ agree on all views in $\views$.
There is a match of $\utility$ on $\globalinst_1$ so the same must be true of $\globalinst_2$, since $\views$ is useful
for $\utility$. Clearly the witness
can only be $c$, thus $c \in G(\inst_2)$.
\end{proof}

By way of contradiction, 
we  fix a d-view $\views$ that is useful for $\util$
and UN non-disclosing for $\secret$, where each view in $\views$ returns facts containing only elements of the active
domain and satisfies the isomorphism-invariance property.

Let $\globalinst_0$ be the critical instance, recalling that this
has only a single element $\critelement$ with every relation holding of it.
We show that
$\globalinst_0$ contradicts that $\views$ is UN non-disclosing for $\secret$.
Note that $\adom(\views_0) \subseteq \{\critelement \}$ since we assume views return facts containing only elements in the active domain
of the input instance.

\begin{proposition}  \label{prop:containedshared} Let $\globalinst$  agree with $\globalinst_0$ on $\views$.
For any unreplicated relation $G$, 
we have $\globalinst \models \forall x. ( (G(x) \wedge x \neq \critelement) \rightarrow R(x) )$.
\end{proposition}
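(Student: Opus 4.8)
The plan is to argue by contradiction, exploiting the fact that on the critical instance the views can only ``see'' the single value $\critelement$, together with the restricted isomorphism-invariance property and Proposition~\ref{prop:agreelocal}.

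First I would record the consequence of agreement with $\globinst_0$. Since each view in $\views$ returns only facts whose values lie in the active domain of its input, and $\adom(\globinst_0)=\{\critelement\}$, we have $\adom(V(\globinst_0)) \subseteq \{\critelement\}$ for every $V \in \views$. Because $\globinst$ agrees with $\globinst_0$ on every view, $V(\globinst)=V(\globinst_0)$, so $\adom(V(\globinst)) \subseteq \{\critelement\}$ as well. Recalling that each view is defined over a single source, the output of a view $V$ on source $s$ depends only on $\globinst_s$, so $\adom(V(\globinst_s)) \subseteq \{\critelement\}$ for every view on $s$.

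Now suppose toward a contradiction that $x \in G(\globinst)$ with $x \neq \critelement$ but $x \notin R(\globinst)$, where $G$ is the unreplicated relation of source $s$ and $\inst$ denotes $\globinst_s$. I would pick a fresh element $x'$ not in $\adom(\inst)$ and let $h$ be the renaming that transposes $x$ and $x'$ and fixes everything else. I claim $\inst$ and $h(\inst)$ satisfy the hypotheses of Proposition~\ref{prop:agreelocal}. They agree on the replicated relation $R$: since $x \notin R(\inst)$ and $x'$ is fresh, $h$ fixes $R(\inst)$ setwise. They also agree on every view on source $s$: as $h$ is the identity on $\{\critelement\} \supseteq \adom(V(\inst))$, the restricted isomorphism-invariance property gives $V(\inst)=V(h(\inst))$. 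Proposition~\ref{prop:agreelocal} then forces $G(\inst)=G(h(\inst))$. But $h(G(\inst)) = (G(\inst) \setminus \{x\}) \cup \{x'\}$, which does not contain $x$, whereas $x \in G(\inst)$; this is the desired contradiction, so $x \in R(\globinst)$.

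The only delicate point is the use of the replicated relation: the whole argument works precisely because we assumed $x \notin R$, which is what lets the transposition $h$ leave $R$ unchanged and hence lets Proposition~\ref{prop:agreelocal} apply. I expect this to be the main obstacle to get right — one must be careful that $x'$ is chosen outside the active domain (so that $h$ genuinely preserves $R$ and is a legitimate isomorphism fixing $\critelement$), and that $G$ lives on a single source so that $\inst=\globinst_s$ determines both $G$ and the replicated $R$. Everything else is routine bookkeeping of how $h$ acts on the relevant relations.
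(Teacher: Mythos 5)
Your proof is correct and is essentially the paper's own argument: both use the swap-with-a-fresh-element isomorphism, the restricted isomorphism-invariance property (the map is the identity on $\adom(V(\inst)) \subseteq \{\critelement\}$), and Proposition~\ref{prop:agreelocal}. The only difference is presentational — you run the argument by contradiction (assuming $x \notin R$ so the swap fixes $R$, then deriving that $G$ cannot change), while the paper argues directly via the contrapositive of Proposition~\ref{prop:agreelocal} (views agree and $G$ changes, so $R$ must change, which forces $R(v)$) — and this is logically the same step.
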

\begin{proof}
Let $\inst$ be the restriction of $\globalinst$ to the
$G$ source, and suppose $G(v)$ holds in $\inst$ for $v \neq \critelement$. Let $c$ be a fresh value and $\inst'$ be 
the result of
applying an isomorphism  swapping $v$ and $c$. Since $v \neq \critelement$ the isomorphism-invariance property
implies that $\inst$ and $\inst'$ agree on the views. Since $\inst'$ and $\inst$  disagree on $G$,  Proposition \ref{prop:agreelocal}
implies they cannot agree on the replicated relation. Note that $c$ was fresh  (hence in particular
was not in the interpretation of $R$ within $\inst$),
and no other change occurred in $R$
outside of the swap of $v$ and $c$, the only way that $\inst$ and $\inst'$ can disagree
on  $R$ is because $R(v)$ held in $\inst$.
\end{proof}

By usefulness of $\views$ and the fact that $\globalinst_0$ has
a match of $\util$, we know that $\globalinst$ has a match of $\util$ with $x_0,y_0,z_0, w_0$.
We first consider the case where  two of $x_0, y_0, z_0$ are the same.
In this case $\secret$ has a match.

Now suppose the match has all of  $x_0, y_0, z_0$ distinct.
At most one of them can be $\critelement$, so assume $y_0$ and $z_0$ are not $\critelement$.
Proposition \ref{prop:containedshared} implies that they are both in $R$.
Since $y_0$ and $z_0$ are not in $\adom(\views_0)$, when we consider the result of swapping
$y_0$ and $z_0$, it does not impact the views, by the isomorphism-invariance property.
Since this swap also does not change $R$, we can apply Proposition \ref{prop:agreelocal}
to conclude that all local sources agree on $y_0$ and $z_0$.
In particular we have $P(y_0) \wedge S(z_0)$, thus we have $P(z_0) \wedge S(y_0)$.
But then either $y_0$ or $z_0$ gives a match of $\secret$.

subsection*{Proof of Theorem \ref{thm:nodisclosuremincq}}

Recall the statement:

\medskip

There is a schema with a replication constraint and a utility query $\util$
such that
 there is no CQ-based d-view $\views$ that  is minimal for UN non-disclosure
within the class of CQ views.
In particular, there is no CQ minimally informative useful d-view within the
class of CQ views.

\medskip

Recall also the example schema and query that we introduced in the body of the paper.
Our schema has two sources, $\Pcal$ and $\Scal$ all containing an eponymous
relation (respectively $P$ and $S$), both sources contain a
shared relation $T$. All relations are binary.

The utility query is $\util= \exists w,x,y,z . T(x,y) \land S(y,z) \land T(z,w) \land
P(w,x)$ or graphically:

\begin{figure}[h]
  \centering
  \begin{tikzpicture}[node distance=2cm, auto]
    \node (x) at (0,0) {$x$};
    \node (y) at (2,0) {$y$};
    \node (z) at (2,2) {$z$};
    \node (w) at (0,2) {$w$};

    \draw[->] (x) -- (y) node [midway, below] {$T$};;
    \draw[->] (z) -- (w) node [midway, above] {$T$};;
    \draw[->] (y) -- (z) node [midway, right] {$S$};;
    \draw[->] (w) -- (x) node [midway, left] {$P$};;
    
  \end{tikzpicture}
\end{figure}

Finally recall the three secret queries introduced in the body.
\begin{align*}
\secret_1= \exists x. S(x,x) \\
\secret_2= \exists x,y. T(x,y) \land S(y,x) \\
\secret_3= \exists x,y,z. T(x,y)\land S(y,z) \land T(z,x)
\end{align*}

We noted in the body of the paper that for each $i= 1 \ldots 3$ there
is a d-view that is useful for $\util$ and UN non-disclosing for
$\secret_i$:
\begin{itemize}
\item $Q_{\Scal}(x,w) = \exists y,z. T(x,y)\land S(y,z) \land T(z,w)$ and
  $Q_{\Pcal}(w,x) = P(w,x)$ for $\secret_1$,
\item $Q_{\Scal}(y,w) = \exists z.  S(y,z) \land T(z,w)$ and
$Q_{\Pcal}(x,z) = \exists y. P(x,y) \land T(y,z) $ for $\secret_2$, or
\item $Q_{\Scal}(y,w) = \exists z.  T(y,z) \land S(z,w)$ and
$Q_{\Pcal}(x,z) = \exists y. T(x,y) \land P(y,z) $
 for $\secret_2$,
\item $Q_{\Scal}(x,y) = S(x,y)$ and
$Q_{\Pcal}(x,w) = \exists y,z. T(x,y)\land P(y,z) \land T(z,w)$
 for $\secret_3$.
\end{itemize}

Now consider a d-view  $\Vcal$, useful  for $\util$, 
homomorphism-invariant, and $\adom$-based. We will show that $\Vcal$ is necessarily UN disclosing
for one of the secrets among $\secret_1 \dots \secret_3$.
We claimed in Lemma \ref{lem:oneofthese}:

\medskip

Any  d-view  that is  useful  for $\util$,
homomorphism-invariant, and $\adom$-based must  necessarily UN disclosing
for one of the secrets among $\secret_1 \dots \secret_3$.

\medskip

The remainder of this subsection is devoted to the proof of this lemma,
which completes the theorem.

Let us consider the following d-instance $\D=CanonDB(\util)$ shown below.

\begin{figure}[h]
  \centering
  \begin{tikzpicture}[node distance=2cm, auto]
    \node (a) at (0,0) {$a$};
    \node (b) at (2,0) {$b$};
    \node (c) at (2,2) {$c$};
    \node (d) at (0,2) {$d$};

    \draw[->] (a) -- (b) node [midway, below] {$S$};;
    \draw[->] (c) -- (d) node [midway, above] {$P$};;
    \draw[->] (b) -- (c) node [midway, right] {$T$};;
    \draw[->] (d) -- (a) node [midway, left] {$T$};;
    
  \end{tikzpicture}
  \caption{Instance $\D$}
\end{figure}

\begin{claim}
  \label{c1}
  $\adom(\Vcal[\Scal](\D))$ includes either $a$ or $d$. 
\end{claim}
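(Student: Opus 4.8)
The plan is to argue the contrapositive: assuming $a,d\notin\adom(\Vcal[\Scal](\D))$, I will exhibit two global instances $M_1,M_2$ that are $\Vcal$-indistinguishable but disagree on $\util$, contradicting usefulness. Recall $\D=\{S(a,b),T(b,c),P(c,d),T(d,a)\}$, so the $\Scal$-source is $\D_\Scal=\{S(a,b),T(d,a),T(b,c)\}$: here $a$ is the source of the $S$-edge and the target of the replicated edge $T(d,a)$, and $d$ is precisely the ``join value'' that the $\Pcal$-side uses (via $P(c,d)$ and $T(d,a)$) to close the $\util$-cycle. The intuition is that the only way the $\Pcal$-side can certify the left half $T(d,a)\wedge S(a,b)$ of the cycle is for the $\Scal$-view to expose $a$ or $d$; if it exposes neither, the $S$-edge can be silently relocated.

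The gadget is to add a single fresh replicated edge $T(u,v)$ (with $u,v$ new constants) to both sources, providing an \emph{alternative} $T$-predecessor onto which the $S$-source can slide. Concretely, take the common $\Pcal$-source $\{P(c,d),T(d,a),T(b,c),T(u,v)\}$, and set the $\Scal$-source of $M_1$ to $\{S(a,b),T(d,a),T(b,c),T(u,v)\}$ and that of $M_2$ to $\{S(v,b),T(d,a),T(b,c),T(u,v)\}$ (the only change being $S(a,b)\rightsquigarrow S(v,b)$). Both respect replication and share the same $\Pcal$-source, hence agree on every $\Pcal$-view. One checks directly that $M_1\models\util$ (via $x{=}d,y{=}a,z{=}b,w{=}c$), while in $M_2$ the unique $S$-fact $S(v,b)$ forces $y{=}v,z{=}b$, then $w{=}c$, then $x{=}d$, but $T(d,v)$ is absent, so $M_2\not\models\util$.

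It then remains to show $\Vcal[\Scal](M_1)=\Vcal[\Scal](\D)=\Vcal[\Scal](M_2)$ under the assumption, using only homomorphism-invariance and $\adom$-basedness. For $M_1$ this uses the inclusion $\D_\Scal\hookrightarrow M_1^\Scal$ together with the folding $M_1^\Scal\to\D_\Scal$ fixing $a,b,c,d$ and sending $u\mapsto d,v\mapsto a$; since any output tuple meeting $u,v,a,$ or $d$ would fold onto a tuple of $\Vcal[\Scal](\D)$ containing $a$ or $d$, the assumption forces every output tuple to lie over $\{b,c\}$, on which the folding is the identity, giving equality. For $M_2$ the key map is the homomorphism $\phi\colon\D_\Scal\to M_2^\Scal$ sending $a\mapsto v,\ d\mapsto u$ and \emph{fixing $b,c$} (so $S(a,b)\mapsto S(v,b)$, $T(d,a)\mapsto T(u,v)$, $T(b,c)\mapsto T(b,c)$); because the output avoids $a,d$ and $\phi$ fixes $b,c$, $\phi$ acts as the identity on $\Vcal[\Scal](\D)$, yielding $\Vcal[\Scal](\D)\subseteq\Vcal[\Scal](M_2)$, and the reverse folding $v\mapsto a,u\mapsto d$ gives the opposite inclusion. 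The main obstacle is exactly this construction step: because $a$ and $d$ are pinned by the \emph{replicated} relation $T$ (and $d$ also by $P$), they cannot be relocated by any isomorphism of a single source, so one must manufacture the auxiliary replicated edge $T(u,v)$ to obtain homomorphisms in both directions that fix $\{b,c\}$, thereby promoting the hypothesis ``$a,d$ absent from the output'' into genuine equality of the $\Scal$-view outputs. With that equality in hand, $M_1$ and $M_2$ are $\Vcal$-indistinguishable yet differ on $\util$, contradicting usefulness and establishing that $a$ or $d$ lies in $\adom(\Vcal[\Scal](\D))$.
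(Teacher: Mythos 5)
Your proof is correct and follows essentially the same route as the paper's: both augment $\D$ with fresh replicated $T$-edge(s), slide the $S$-edge onto the fresh edge, and use homomorphism-invariance plus $\adom$-basedness to argue that the $\Scal$-views cannot detect the relocation unless they expose $a$ or $d$, contradicting usefulness. The differences are only cosmetic: you use a single auxiliary edge $T(u,v)$ and folding homomorphisms in both directions between $\D_\Scal$ and the modified sources, whereas the paper adds two auxiliary edges (the second serving the symmetric Claim 2) and routes the argument through an intermediate instance $\I_1$ together with an isomorphism of its $\Scal$-source.
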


\begin{proof}

  Let us consider the instance $\I_1$ (depicted below) and let us show
  that 
\begin{enumerate}
\item  $\adom(\Vcal[\Scal](\I_1))$ includes $a$ or $d$ if
  and only if $\adom(\Vcal[\Scal](\D))$ also does and 
\item
  $\adom(\Vcal[\Scal](\I_1))$ includes either $a$ or $d$.
\end{enumerate}

\begin{figure}[h]
  \centering
  \begin{tikzpicture}[node distance=2cm, auto]
    \node (a) at (0,0) {$a$};
    \node (b) at (2,0) {$b$};
    \node (c) at (2,2) {$c$};
    \node (d) at (0,2) {$d$};
    \node (e) at (-2,0) {$e$};
    \node (f) at (-2,2) {$f$};
    \node (g) at (4,0) {$g$};
    \node (h) at (4,2) {$h$};

    \draw[->] (f) -- (e) node [midway, left] {$T$};;
    \draw[->] (g) -- (h) node [midway, right] {$T$};;
    \draw[->] (a) -- (b) node [midway, below] {$S$};;
    \draw[->] (c) -- (d) node [midway, above] {$P$};;
    \draw[->] (b) -- (c) node [midway, right] {$T$};;
    \draw[->] (d) -- (a) node [midway, left] {$T$};;
    
  \end{tikzpicture}
  \caption{Instance $\I_1$}
\end{figure}

For the first item since $\D\subseteq \I_1$ we know that
$\adom(\Vcal[\Scal](\D))\subseteq \adom(\Vcal[\Scal](\I_1))$. 
 Consider the homomorphism $\mu$ sending:
$f$ to $d$, $e$ to $a$, $h$ to $c$, and $g$ to $b$.
The elements $d$ and $a$ are mapped to themselves.
$g$), to $d$ (resp. $a$, $c$ and $b$). 
Since $\mu(\I_1)=\D$ we know from homomorphism-invariance that
 $\mu(\Vcal[\Scal](\I_1))\subseteq \Vcal[\Scal](\D)$. If
$\adom(\mu(\Vcal[\Scal](\I_1)))$ contains $a$ or $d$ 
we will have that $\adom(\mu(\Vcal[\Scal](\D)))$
contains $a$ or $d$. Note that this also proves that we cannot have
$e$ or $f$ in $\adom(\mu(\Vcal[\Scal](\I_1)))$ if we do not also have 
$a$ or $d$ in $\adom(\mu(\Vcal[\Scal](\I_1)))$.

To prove that $\adom(\Vcal[\Scal](\I_1))$ includes $a$ or $d$, let us
consider the instance $\I_2$ (depicted below) which is the
instance $\I_1$ where the $\Scal$ source has been replaced by its
image through the isomorphism $\nu$ that exchanges $e$ and $a$ and $f$
and $d$.

\begin{figure}[h]
  \centering
  \begin{tikzpicture}[node distance=2cm, auto]
    \node (a) at (0,0) {$a$};
    \node (b) at (2,0) {$b$};
    \node (c) at (2,2) {$c$};
    \node (d) at (0,2) {$d$};
    \node (e) at (-2,0) {$e$};
    \node (f) at (-2,2) {$f$};
    \node (g) at (4,0) {$g$};
    \node (h) at (4,2) {$h$};

    \draw[->] (f) -- (e) node [midway, left] {$T$};
    \draw[->] (g) -- (h) node [midway, right] {$T$};
    \draw[->] (e) to [bend right] node [midway, below] {$S$} (b);
    \draw[->] (c) -- (d)  node [midway, above] {$P$};
    \draw[->] (b) -- (c) node [midway, right] {$T$};
    \draw[->] (d) -- (a) node [midway, left] {$T$};
    
  \end{tikzpicture}
  \caption{Instance $\I_2$}
\end{figure}

If we suppose that $\{a,d\}\cap \adom(\Vcal[\Scal](\I_1))=\emptyset$
then we also have $\{f,e\}\cap \adom(\Vcal[\Scal](\I_1))=\emptyset$.
When $\{a,d,e,f\}\cap \adom(\Vcal[\Scal](\I_1))=\emptyset$ we have
that $\Vcal[\Scal](\I_1)=\nu(\Vcal[\Scal](\I_1))$. Now, since $\nu$ is
an isomorphism between $\I_1[\Scal]$ and $\I_2[\Scal]$ we have
$\Vcal[\Scal](\I_1)=\nu(\Vcal[\Scal](\I_1))=\Vcal[\Scal](\nu(\I_1))=\Vcal[\Scal](\I_2)$
and as the $\Pcal$ sources are identical we conclude that
$\Vcal(\I_1)=\Vcal(\I_2)$ which would mean that $\Vcal$ is not useful
($Q_U\vDash \I_1$ but $Q_U\not\vDash \I_2$).  
\end{proof}

\begin{claim}
  \label{c2}
  $\adom(\Vcal[\Scal](\D))$ contains either $b$ or $c$. 
\end{claim}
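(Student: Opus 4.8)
The plan is to mirror the proof of Claim \ref{c1}, reusing the \emph{same} auxiliary instance $\I_1$ but now targeting the vertices $b$ and $c$ instead of $a$ and $d$. Recall that $\I_1$ augments $\D$ with two fresh, pairwise disjoint $T$-edges $T(f,e)$ and $T(g,h)$ on both sources (legal since $T$ is replicated), and that the folding homomorphism $\mu$ from the proof of Claim \ref{c1} fixes $a,b,c,d$ while sending $e\mapsto a$, $f\mapsto d$, $g\mapsto b$, $h\mapsto c$, so that $\mu(\I_1)=\D$ on the $\Scal$-source.

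First I would reproduce item (1) of Claim \ref{c1} verbatim for $b,c$: from $\D\subseteq\I_1$ and homomorphism-invariance one gets $\Vcal[\Scal](\D)\subseteq\Vcal[\Scal](\I_1)$, and from $\mu(\I_1)=\D$ one gets $\mu(\Vcal[\Scal](\I_1))\subseteq\Vcal[\Scal](\D)$. Chaining these yields $\mu(\Vcal[\Scal](\I_1))\subseteq\Vcal[\Scal](\I_1)$. Since $\mu$ fixes $b$ and $c$, any tuple of $\Vcal[\Scal](\I_1)$ containing $b$ (resp. $c$) maps under $\mu$ to a tuple of $\Vcal[\Scal](\D)$ still containing $b$ (resp. $c$); hence $b$ or $c\in\adom(\Vcal[\Scal](\I_1))$ implies $b$ or $c\in\adom(\Vcal[\Scal](\D))$. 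The same chain, via $\mu(g)=b$ and $\mu(h)=c$, shows that $g\in\adom(\Vcal[\Scal](\I_1))$ forces $b\in\adom(\Vcal[\Scal](\I_1))$, and $h$ forces $c$. It therefore suffices to prove that $\adom(\Vcal[\Scal](\I_1))$ contains $b$ or $c$.

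The crux is the swap argument. Assume for contradiction that $\{b,c\}\cap\adom(\Vcal[\Scal](\I_1))=\emptyset$; by the implication just derived, also $\{g,h\}\cap\adom(\Vcal[\Scal](\I_1))=\emptyset$, so none of $b,c,g,h$ occurs in the view output. Let $\nu'$ be the transposition exchanging $b\leftrightarrow g$ and $c\leftrightarrow h$. The design point — and the reason $b,c$ must be approached through the copy $T(g,h)$ of $T(b,c)$ — is that $\nu'$ is an automorphism of the $T$-relation of $\I_1$: it merely swaps $T(b,c)$ with $T(g,h)$ and fixes $T(d,a),T(f,e)$. Thus replacing the $\Scal$-source of $\I_1$ by its $\nu'$-image gives an instance $\I_3$ that still respects the replication constraint and leaves the $\Pcal$-source unchanged, while on $\Scal$ the single $S$-fact moves from $S(a,b)$ to $S(a,g)$. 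Because $\nu'$ fixes every element of $\adom(\Vcal[\Scal](\I_1))$, isomorphism-invariance (which holds for homomorphism-invariant, $\adom$-based views) gives $\Vcal[\Scal](\I_3)=\nu'(\Vcal[\Scal](\I_1))=\Vcal[\Scal](\I_1)$, so $\Vcal(\I_1)=\Vcal(\I_3)$. But $\util$ has the match $x=d,y=a,z=b,w=c$ in $\I_1$ and no match in $\I_3$: the only $P$-fact forces $w=c,x=d$, the unique $T$-edge out of $d$ forces $y=a$, the unique $S$-edge out of $a$ forces $z=g$, and $T(g,c)$ is absent. This contradicts usefulness of $\Vcal$, which finishes Step 2 and the claim.

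The main obstacle is making the swap destroy the utility match while keeping both instances legal under replication: $\nu'$ must be an automorphism of $T$ (so $\I_3$ still satisfies the constraint and agrees with $\I_1$ on $\Pcal$) yet must relocate the lone $S$-edge so that the conjunct $T(z,w)$ can no longer be satisfied. Swapping the $S$-target $b$ with the head $g$ of its dedicated copy $T(g,h)$ achieves exactly this, and is the $(b,c)$-analogue of the $(e,f)$ swap used for $a,d$ in Claim \ref{c1}.
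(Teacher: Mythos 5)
Your proof is correct and is exactly what the paper intends: the paper's own proof of this claim is just ``same proof as above (by symmetry),'' and your argument is that symmetric instantiation of Claim~\ref{c1}'s proof, reusing $\I_1$ with the folding homomorphism $\mu$, but exchanging the edge $T(b,c)$ with its disjoint copy $T(g,h)$ instead of exchanging $T(d,a)$ with $T(f,e)$. Your checks of the two delicate points --- that the swap $\nu'$ is an automorphism of the replicated $T$-relation (so the modified instance is still legal and the $\Pcal$-source is untouched), and that relocating the $S$-edge to $S(a,g)$ kills the only potential match of $\util$ --- are both accurate.
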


\begin{proof}
  Same proof as above (by symmetry).
\end{proof}


\begin{claim}
  \label{c3}
  We have that $\Vcal$ is UN-disclosing:
  \begin{enumerate}
    \item For $\secret_1$ when $\{a,b\}\subseteq adom(\Vcal[\Scal](\D))$.
    \item For $\secret_2$ when $\{d,b\}\subseteq adom(\Vcal[\Scal](\D))$.
    \item For $\secret_2$ when $\{a,c\}\subseteq adom(\Vcal[\Scal](\D))$.
    \item For $\secret_3$ when $\{d,c\}\subseteq adom(\Vcal[\Scal](\D))$.
  \end{enumerate}
\end{claim}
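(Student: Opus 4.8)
The plan is to prove all four items uniformly, taking the critical instance $\globalinst_0$ of the d-schema as the witness to disclosure in every case. Since $\globalinst_0$ has the single active-domain element $\critelement$ with every relation holding of the all-$\critelement$ tuple, it satisfies $\util$, each $\secret_i$, and the replication constraint; so it will suffice to show, in each case, that $\Vcal$ determines the stated secret at $\globalinst_0$.

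The single tool I would use is a transfer observation that exploits the hypothesis $\{\cdot,\cdot\}\subseteq\adom(\Vcal[\Scal](\D))$. Because $\D=\canondb(\util)$, any instance $\I\models\util$ admits a match $h\colon\D\to\I$, whose restriction to the $\Scal$-source is a homomorphism $h\colon\D_{\Scal}\to\I_{\Scal}$; homomorphism-invariance of $\Vcal[\Scal]$ then gives $h(\Vcal[\Scal](\D))\subseteq\Vcal[\Scal](\I)$, so every $u\in\adom(\Vcal[\Scal](\D))$ satisfies $h(u)\in\adom(\Vcal[\Scal](\I))$. Now let $\J'$ be any instance $\Vcal$-indistinguishable from $\globalinst_0$. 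By usefulness $\J'\models\util$, so I fix a match $h\colon\D\to\J'$; combining the transfer observation with the $\adom$-based assumption yields, for each exposed $u$, $h(u)\in\adom(\Vcal[\Scal](\J'))=\adom(\Vcal[\Scal](\globalinst_0))\subseteq\adom((\globalinst_0)_{\Scal})=\{\critelement\}$. Thus every element that the $\Scal$-view exposes on $\D$ is sent by $h$ to $\critelement$.

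It then remains to read off the secret pattern in each case from $\D=\{S(a,b),T(b,c),P(c,d),T(d,a)\}$. In (1), $h(a)=h(b)=\critelement$, so the image of $S(a,b)$ is the self-loop $S(\critelement,\critelement)$ and $\J'\models\secret_1$. In (2), $h(d)=h(b)=\critelement$ produces $T(\critelement,h(a))$ and $S(h(a),\critelement)$ (images of $T(d,a)$ and $S(a,b)$), a match of $\secret_2$ with $x=\critelement$, $y=h(a)$. Case (3) is symmetric: $h(a)=h(c)=\critelement$ gives $T(h(b),\critelement)$ and $S(\critelement,h(b))$, again $\secret_2$. In (4), $h(d)=h(c)=\critelement$ gives $T(\critelement,h(a))$, $S(h(a),h(b))$, $T(h(b),\critelement)$, a match of $\secret_3$. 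In each case $\J'$ satisfies the secret, so $\Vcal$ determines it at $\globalinst_0$ and is UN disclosing for it.

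The essential content is the transfer observation together with the recognition that the critical instance is the correct witness: its singleton active domain is precisely what, through the $\adom$-based property, forces the two exposed match-images to collapse onto $\critelement$. Once this is in place the per-case checks are routine, so I expect the only real difficulty to be isolating this witness rather than, for instance, the merge $\D^{u=v}$, for which the analogous collapse does not follow.
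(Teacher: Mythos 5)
Your proposal is correct and follows essentially the same route as the paper's proof: both take the critical instance as the disclosure witness, use usefulness to obtain a match of $\util$ (i.e.\ a homomorphism from $\D=\canondb(\util)$) in any indistinguishable instance, transfer the exposed elements of $\adom(\Vcal[\Scal](\D))$ along this homomorphism via homomorphism-invariance, and collapse them onto the critical element via the $\adom$-based property before reading off the secret pattern. The only difference is presentational: the paper proves case (2) and invokes symmetry, whereas you spell out all four cases.
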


\begin{proof}
The four cases are symmetric so we will focus on the second one.

Let us consider the critical d-instance, $\I= \{T(u,u), S(u,u),P(u,u)\}$, and let
us take a d-instance $\E$ such that $\Vcal(\E)=\Vcal(\I)$. Because
$\Vcal$ is useful we know that there exists $x,y,z,w$ such that
$T(x,y) \land S(y,z) \land T(z,w) \land P(w,x)$ holds in $\E$ (the
$x$, $y$, $z$ and $w$ being not necessarily different).

Now we can consider the homomorphism $\xi$ that maps $d$ to $x$, $a$
to $y$, $b$ to $z$ and $c$ to $w$. $\xi$ maps $\D$ to a subset of $\E$
and this proves that $\xi(\Vcal[S](\D))\subseteq \Vcal[S](\E) =
\Vcal[S](\I)$. But $\adom(\Vcal[S](\I))=\{u\}$ and we know that
$\xi(d)=x$ and $\xi(b)=z$ belongs to $\adom(\Vcal[S](\I))$ which means
that $x=u=z$ and thus $\E \models \secret_2$. Since $\E$ was an arbitrary
instance agreeing with the views we have shown that $\Vcal$ is UN-disclosing.
\end{proof}

By combining claims~\ref{c1}, \ref{c2}, and \ref{c3}, we get
the proof of Lemma \ref{lem:oneofthese}.

\end{document}